\DontPrintSemicolon \SetNlSty{footnotesize}{}{}
\newtheorem{theorem}{Theorem}
\numberwithin{theorem}{section}
\newtheorem{lemma}[theorem]{Lemma}
\newtheorem{proposition}[theorem]{Proposition}
\newcommand{\convexpath}[2]{
[ 
create hullnodes/.code={
\global\edef\namelist{#1}
\foreach [count=\counter] \nodename in \namelist {
\global\edef\numberofnodes{\counter}
\node at (\nodename) [draw=none,name=hullnode\counter] {};
}
\node at (hullnode\numberofnodes) [name=hullnode0,draw=none] {};
\pgfmathtruncatemacro\lastnumber{\numberofnodes+1}
\node at (hullnode1) [name=hullnode\lastnumber,draw=none] {};
},
create hullnodes
]
($(hullnode1)!#2!-90:(hullnode0)$)
\foreach [
evaluate=\currentnode as \previousnode using \currentnode-1,
evaluate=\currentnode as \nextnode using \currentnode+1
] \currentnode in {1,...,\numberofnodes} {
-- ($(hullnode\currentnode)!#2!-90:(hullnode\previousnode)$)
let \p1 = ($(hullnode\currentnode)!#2!-90:(hullnode\previousnode) - (hullnode\currentnode)$),
\n1 = {\AtanTwo(\y1,\x1)},
\p2 = ($(hullnode\currentnode)!#2!90:(hullnode\nextnode) - (hullnode\currentnode)$),
\n2 = {\AtanTwo(\y2,\x2)},
\n{delta} = {-Mod(\n1-\n2,360)}
in 
{arc [start angle=\n1, delta angle=\n{delta}, radius=#2]}
}
-- cycle
}
\tikzstyle{wvertex}=[circle, line width=0.25mm, draw, fill=white, inner sep=0pt, minimum width=1.75mm]\tikzstyle{bvertex}=[circle, line width=0.25mm, draw, fill=black, inner sep=0pt, minimum width=1.75mm]\tikzstyle{gvertex}=[circle, line width=0.25mm, color=gray, draw, fill=gray, inner sep=0pt, minimum width=1.75mm]
\tikzstyle{edge}=[draw,line width=0.25mm,-]\tikzstyle{gedge}=[draw=gray,line width=0.25mm,-]\tikzstyle{bedge}=[draw,line width=0.65mm,-]\tikzstyle{cedge}=[bedge,draw=black!10!red]\tikzstyle{medge}=[bedge,draw=black!10!blue]\tikzstyle{bdedge}=[bedge,dash pattern=on 2mm off 1mm]\tikzstyle{dedge}=[draw,line width=0.25mm,dash pattern=on 2mm off 1mm]\tikzstyle{iedge}=[draw=black!40,line width=1mm,loosely dotted]
\tikzstyle{node}=[draw=black!10,fill=black!10]\tikzstyle{nodeborder}=[draw=black!50]\tikzstyle{nodebordered}=[draw=black!50,fill=black!10]
\tikzstyle{border}=[draw,line width=0.1mm,-]
\newlength{\hdist}
\newlength{\vdist}
\newlength{\antInner}
\newlength{\antinner}
\newlength{\nodeinner}
\newlength{\nodesecinner}
\newlength{\boxwidth}
\newlength{\boxshift}
\newlength{\boxheight}
\newlength{\innerboxheight}
\newlength{\innermargin}
\newlength{\roundedcorners}
\newlength{\treeshift}
\newlength{\vertexdist}
\tikzstyle{box}=[draw, thick, shape=rectangle, minimum width=\boxwidth, minimum height=\boxheight, anchor=center]
\colorlet{WhiteGray}{black!20}
\tikzstyle{active}=[text=black, rounded corners=0.1\boxwidth, draw=none, fill, color=WhiteGray, shape=rectangle, minimum width=\boxwidth-\innermargin, minimum height=\boxheight-\innermargin, anchor=center, path picture={
\tikzstyle{vital}=[text=black, rounded corners=0.1\boxwidth, fill=WhiteGray, shape=rectangle, minimum width=\boxwidth-\innermargin, minimum height=\boxheight-\innermargin, anchor=center]
\tikzstyle{nonvital}=[text=black, draw=black, thin, rounded corners=0.1\boxwidth, fill=white, shape=rectangle, minimum width=\boxwidth-\innermargin, minimum height=\boxheight-\innermargin, anchor=center, inner sep=0,outer sep=0]
\newcommand{\activenode}[4]{
\node [active, minimum width=#1-\innermargin, minimum height=#2-\innermargin, rounded corners=0.1#2+1mm] at (#3) {};
\node [fill=WhiteGray] at (#3) {#4};
}
\newcommand{\activenodee}[3]{
\node [active, minimum width=#1-\innermargin, minimum height=#2-\innermargin, rounded corners=0.1#2+1mm] at (#3) {};
}
\newcommand{\vitalnode}[4]{
\node [vital, minimum width=#1-\innermargin, minimum height=#2-\innermargin, rounded corners=0.1#2+1mm] at (#3) {};
\node at (#3) {#4};
}
\newcommand{\nonvitalnode}[4]{
\node [nonvital, minimum width=#1-\innermargin, minimum height=#2-\innermargin, rounded corners=0.1#2+1mm] at (#3) {#4};
}
\newcounter{Pi-equation}
\newenvironment{Pi-tags}{\setcounter{parentequation}{\value{equation}}\setcounter{equation}{\value{Pi-equation}}\ignorespaces}{\setcounter{Pi-equation}{\value{equation}}\setcounter{equation}{\value{parentequation}}\ignorespacesafterend}
\DeclareMathOperator{\val}{val}
\DeclareMathOperator{\regval}{rval}
\DeclareMathOperator{\width}{w}
\DeclareMathOperator{\Int}{Int}
\DeclareMathOperator{\surplus}{s}
\DeclareMathSymbol{\upset}{\mathclose}{symbols}{34}
\DeclareMathSymbol{\upseteq}{\mathclose}{symbols}{42}
\DeclareMathSymbol{\downset}{\mathclose}{symbols}{35}
\DeclareMathSymbol{\downseteq}{\mathclose}{symbols}{43}
\DeclareMathOperator{\subnode}{\vdash}
\let\leq\leqslant
\let\geq\geqslant
\begin{document}

\title{On-line partitioning of width $w$ posets into $w^{O(\log{\log{w}})}$ chains}

\author{Bart{\l}omiej Bosek$^*$ \and Tomasz Krawczyk$^\dagger$}
\email{bosek@tcs.uj.edu.pl}

\email{krawczyk@tcs.uj.edu.pl}

\thanks{$^*$Research of this author is partially supported by Polish National Science Center (NCN) grant
2013/11/D/ST6/03100.}

\thanks{$^\dagger$Research of this author is partially supported by Polish National Science Center (NCN) grant 2015/17/B/ST6/01873.}

\address{Theoretical Computer Science Department, Faculty of Mathematics and Computer Science, Jagiellonian University in Krak\'{o}w, ul. {\L}ajsiewicza 6, Krak\'{o}w 30-348, Poland.}

\date{\today}

\begin{abstract}
An on-line chain partitioning algorithm receives the elements of a poset one at a time, 
and when an element is received, irrevocably assigns it to one of the chains.
In this paper, we present an on-line algorithm that partitions posets of width $w$ into $w^{O(\log{\log{w}})}$ chains.
This improves over previously best known algorithms using $w^{O(\log{w})}$ chains by Bosek and Krawczyk and by 
Bosek, Kierstead, Krawczyk, Matecki, and Smith.
Our algorithm runs in $w^{O(\sqrt{w})}n$ time, where $w$ is the width and $n$ is the size of a presented poset.
\end{abstract}

\newcommand{\sep}{,}
\keywords{
partially ordered set\sep{} poset\sep{} on-line chain partition\sep{} regular poset\sep{} first-fit}

\maketitle

\section{Introduction}
An \emph{on-line chain partitioning algorithm} is a deterministic algorithm that receives
a poset $(P,{\leq})$ in the order of its elements $x_1,\ldots,x_n$ and constructs a chain partition 
of $(P,\leq)$ such that a chain to which $x_i$ is assigned depends solely on the poset $(P,\leq)$ restricted to $\{x_1,\ldots,x_i\}$
and on the chains to which $x_1,\ldots,x_{i-1}$ were assigned.
This formalizes the scenario in which the algorithm receives the vertices of $(P,{\leq})$ from an adversary one
at a round, and when a vertex is received, irrevocably assigns it to one of the chains.

The efficiency of an on-line chain partitioning algorithm is usually measured in terms of
the width of the presented poset.
By Dilworth's theorem, every poset of width $w$ can be partitioned into $w$~chains, and such a
partition can be computed off-line by a polynomial time algorithm~\cite{Ful56}.
In the on-line setting, the situation is more complex. 
In particular, it is not always possible
to partition width $w$ posets into $w$ chains -- see Figure \ref{fig_game_N} for an example.

\newlength{\vmargin}
\setlength{\vmargin}{1.2cm}
\newlength{\hmargin}
\setlength{\hmargin}{0.5cm}
\newlength{\borderheight}
\setlength{\borderheight}{2\hmargin+\vdist}

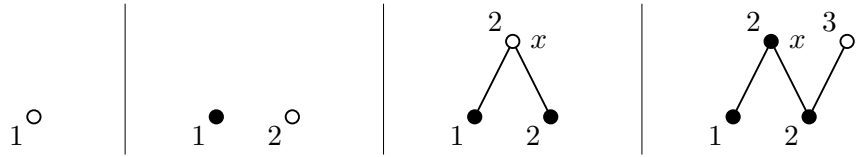
\begin{figure}[htbp!]
\begin{center}
\begin{tikzpicture}

\coordinate (a0) at (0,0);
\coordinate (a1) at ($(a0)+(\vmargin,\hmargin)$);

\coordinate (b0) at ($(a1 |- 0,0)+(\vmargin,0)$); 
\coordinate (b1) at ($(b0)+(a1)$);
\coordinate (b2) at ($(b1)+(\hdist,0)$);

\coordinate (c0) at ($(b2 |- 0,0)+(\vmargin,0)$); 
\coordinate (c1) at ($(c0)+(a1)$);
\coordinate (c2) at ($(c1)+(\hdist,0)$);
\coordinate (c3) at ($(c1)+(0.5\hdist,\vdist)$);

\coordinate (d0) at ($(c2 |- 0,0)+(\vmargin,0)$); 
\coordinate (d1) at ($(d0)+(a1)$);
\coordinate (d2) at ($(d1)+(\hdist,0)$);
\coordinate (d3) at ($(d1)+(0.5\hdist,\vdist)$);
\coordinate (d4) at ($(d3)+(\hdist,0)$);

\draw (a1) node[wvertex] (p1) {} node[below left] {$1$};

\path[border] (b0) -- ($(b0)+(0,\borderheight)$);

\draw (b1) node[bvertex] {} node[below left] {$1$};
\draw (b2) node[wvertex] {} node[below left] {$2$};

\path[border] (c0) -- ($(c0)+(0,\borderheight)$);

\path[edge] (c1) -- (c3);
\path[edge] (c2) -- (c3);
\draw (c1) node[bvertex] {} node[below left] {$1$};
\draw (c2) node[bvertex] {} node[below left] {$2$};
\draw (c3) node[wvertex] {} node[above left] {$2$} node[right=1mm] {$x$};

\path[border] (d0) -- ($(d0)+(0,\borderheight)$);

\path[edge] (d1) -- (d3);
\path[edge] (d2) -- (d3);
\path[edge] (d2) -- (d4);
\draw (d1) node[bvertex] {} node[below left] {$1$};
\draw (d2) node[bvertex] {} node[below left] {$2$};
\draw (d3) node[bvertex] {} node[above left] {$2$} node[right=1mm] {$x$};
\draw (d4) node[wvertex] {} node[above left] {$3$};

\end{tikzpicture}
\end{center}
\caption{Forcing $3$ chains on posets of width $2$.
If in the $3$-rd round an algorithm assigns $x$ to the $3$-rd chain, we are done; 
otherwise the algorithm is forced to use the $3$-rd chain in the next step.}
\label{fig_game_N}
\end{figure}

Let $\val(w)$ be the smallest $k$ for which there is an on-line algorithm that partitions posets of width $w$ 
into at most $k$ chains.
The first question one may ask is whether $\val(w)$ is bounded, i.e., whether there is an on-line algorithm that partitions posets of width $w$ into a bounded number of chains.
This question was posed by Schmerl in 1970's and at that time it was not clear whether it has a positive answer.

So far, the exact value of $\val(w)$ is known only for $w \leq 2$.
Obviously, $\val(1)=1$.
Kierstead \cite{Kie81} proved that $5 \leq \val(2) \leq 6$.
Felsner \cite{Fel97} devised an on-line algorithm that partitions posets of width~$2$ into at most $5$ chains,
showing that $\val(2) = 5$.
Kierstead \cite{Kie81} proved that $\val(3) \geq 9$ and Bosek \cite{Bos08} showed that $\val(3) \leq 16$.

Kierstead was the first to affirmatively answer Schmerl's question.
In 1981 he devised an on-line algorithm that uses exponentially many chains.
\begin{theorem}[\cite{Kie81}]
$\val(w) \leq (5^w-1)/4$.
\end{theorem}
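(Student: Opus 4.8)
The plan is to prove, by induction on $w$, the recurrence $\val(w)\le 5\val(w-1)+1$; combined with the base case $\val(1)=1$ this gives the theorem, since $5\cdot\frac{5^{w-1}-1}{4}+1=\frac{5^w-1}{4}$. Fix $w\ge 2$ and assume an on-line algorithm $\mathcal{A}_{w-1}$ is at hand that partitions every on-line presented poset of width at most $w-1$ into at most $m:=\val(w-1)$ chains. The task is to describe $\mathcal{A}_w$ using at most $5m+1$ chains. Until the presented part $P'$ of the input first attains width $w$, there is nothing to do: $\mathcal{A}_w$ simply runs $\mathcal{A}_{w-1}$, using $m\le 5m+1$ chains.

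Once width $w$ has appeared, $\mathcal{A}_w$ maintains a reference structure inside $P'$ -- a maximal chain, grown and occasionally reorganised on-line, kept on one dedicated output chain -- and routes every remaining element of $P'$ to one of a bounded number of \emph{tracks}, running an independent copy of $\mathcal{A}_{w-1}$ on each track. The design exploits the single elementary fact that, for any fixed element $r$ of a width-$w$ poset, the set of elements incomparable to $r$ has width at most $w-1$ (adjoining $r$ to an antichain among them produces an antichain of the whole poset). Since every element of $P'$ outside a maximal chain $C$ is incomparable to at least one element of $C$ -- otherwise it would be comparable to all of $C$ and could be added to $C$, contradicting maximality -- each track can be kept within width $w-1$ as long as its members share a common reference element of $C$ to which they are all incomparable. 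Designing the routing and the reorganisation rule so that only five tracks are ever alive then yields the bound $5m+1$ (for $w=2$, $5m+1=6=\frac{5^2-1}{4}$, consistent with the theorem), and the induction closes.

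The point at which the real difficulty lies is precisely keeping the number of live tracks bounded: a maximal chain can be arbitrarily long, so one cannot afford a separate copy of $\mathcal{A}_{w-1}$ for every chain element, and two tracks attached to different chain elements cannot in general be merged without their combined width approaching $w$. The construction must therefore keep only a constant-size ``active window'' of the reference chain in play at any time and, whenever an incoming element forces this window or the surrounding partition to be reorganised, re-absorb the elements hung off the retired part of the reference on-line, without ever exceeding a track's width budget. Showing that these reorganisations can be charged so that each level of the recursion leaks only a bounded multiplicative factor -- five -- rather than an unbounded number of chains, is the crux; given the reference-maintenance invariants, the per-track width estimates, the correctness of the routing, and the closing arithmetic are all routine.
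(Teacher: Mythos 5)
The paper does not prove this statement at all -- it is quoted from Kierstead's 1981 paper -- so your proposal can only be judged on its own merits, and as it stands it is a plan rather than a proof. The arithmetic frame is right: $\val(w)\le 5\val(w-1)+1$ together with $\val(1)=1$ does give $(5^w-1)/4$, and Kierstead's argument is indeed an induction of exactly this shape, with one fresh chain and five width-$(w-1)$ sub-algorithms per level. But everything that makes the theorem true is concentrated in the step you explicitly leave open: why only \emph{five} copies of $\mathcal{A}_{w-1}$ ever need to be alive. The two elementary facts you invoke (elements incomparable to a fixed $r$ form a suborder of width at most $w-1$; every element off a maximal chain is incomparable to some chain element) are correct but nowhere near sufficient -- the reference chain can be arbitrarily long, different incoming elements are incomparable to different chain elements, and nothing in the sketch supplies an invariant that caps the number of reference elements, i.e.\ of live tracks, by a constant. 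Naming this as ``the crux'' and asserting that a charging argument will handle it is precisely the content of Kierstead's theorem, not a reduction of it.

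Moreover, the repair mechanism you gesture at is in tension with the model. ``Re-absorbing the elements hung off the retired part of the reference'' is not available: chain assignments are irrevocable, so once an element has been fed to a copy of $\mathcal{A}_{w-1}$ it cannot be rerouted. All you can do is stop feeding a track, but its $m$ output chains are then permanently spent, so if tracks are retired an unbounded number of times you exceed $5m+1$ unless the retired chains can be safely reused by later tracks -- and that requires an argument that new track elements extend the old chains, which is exactly the kind of structural property your maximal chain does not have. Kierstead's construction avoids this by building the distinguished chain with far stronger on-line properties than maximality, chosen so that the complement decomposes into boundedly many parts of width at most $w-1$ as it is presented; a maximal chain with incomparability tracks hung off it does not, by itself, yield any bounded decomposition. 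So the proposal has the right recurrence and base case but a genuine gap where the theorem's actual mechanism should be.
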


Nearly 30 years later, the authors of this paper presented the first on-line algorithm that uses subexponentially many chains.
\begin{theorem}[\cite{BK10,BK15}]
\label{thm:alg_Bosek_Krawczyk}
$\val(w) \leq w^{13\log{w}}$.
\end{theorem}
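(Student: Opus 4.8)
The plan is to reduce the general on-line chain partitioning game to a structured subproblem on \emph{regular posets}, solve that subproblem by a recursion that roughly halves the width, and transfer the resulting bound back. I view $\val(w)$ as the value of a two-person game: Adversary presents a width-$w$ poset element by element, Algorithm irrevocably colours each element with a chain; let $\regval(w)$ be the analogous value when Adversary is restricted to regular presentations.

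\textbf{Step 1: regularization.} I would first separate, inside the current poset, the \emph{inert} part — elements that can no longer be forced to gain comparabilities with future elements — from the \emph{active} frontier. I would then call a presentation regular when the active frontier has a controlled shape, namely a bounded union of ``ladders'' $\lad{a}{b}{k}$, so that the set of comparabilities the Adversary can still impose is tightly limited. The lemma here is $\val(w)\le \mathrm{poly}(w)\cdot\regval(w)$: the inert part is chain-partitioned with its own palette by a first-fit argument — $\FF$ needs only polynomially many chains once the active part is peeled off — and the regular frontier is handed to the recursive algorithm below, which runs on its own palette.

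\textbf{Step 2: the recursion on regular posets.} On a regular width-$w$ presentation the algorithm maintains, on-line, a decomposition of the elements into \emph{blocks}, each of width at most $\lceil w/2\rceil$, together with the invariant that only $\mathrm{poly}(w)$ blocks are ever \emph{live}, the rest being \emph{frozen}. A live block is coloured recursively by the width-$\lceil w/2\rceil$ regular algorithm on a private palette; frozen blocks reuse palettes through first-fit; so at any time only $\mathrm{poly}(w)$ palettes are active. A surplus/counting argument — tracking $\surplus$ of the configuration as new elements arrive — bounds the number of live palettes and shows that whenever a block is about to exceed width $\lceil w/2\rceil$ it can be split so that its ``lower'' half freezes. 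This gives
\[
\regval(w)\ \le\ \mathrm{poly}(w)\cdot\regval(\lceil w/2\rceil),\qquad \regval(1)=1,
\]
which unrolls to $\regval(w)\le w^{O(\log w)}$; composing with Step 1 and keeping track of the constants in the two polynomial overheads yields $\val(w)\le w^{13\log w}$.

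\textbf{The main obstacle.} The hard part is maintaining the block decomposition on-line. A newly presented element can be incomparable to members of several live blocks at once; merging those blocks could create a block of width larger than $\lceil w/2\rceil$, destroying the invariant. One must show, using regularity together with a Dilworth-type argument, that such dangerous overlaps occur only boundedly often, that a threatened block can always be split ``downward'' so that only a genuinely fresh part stays live, and hence that the number of live blocks — and of active palettes — stays polynomial in $w$. The secondary difficulty is the accounting in Step 1: one must verify that the Adversary's ability to build structure below the active frontier truly costs only a polynomial factor, which is exactly where the ladder machinery and the first-fit bound on the inert elements are needed.
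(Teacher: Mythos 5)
Your overall architecture --- reduce the game to regular posets, run a recursion that halves the width, unroll to $w^{O(\log w)}$, and transfer back --- is indeed the skeleton of the Bosek--Krawczyk argument that this theorem cites (the present paper does not reprove it; its framework of regular posets, node trees and active nodes is the descendant of that proof, with Dilworth cliques of size $\lceil\sqrt{w}\rceil$ in place of size $2$). But as a proof your text has a genuine gap: everything that carries the combinatorial weight is deferred. You yourself flag as ``the main obstacle'' exactly the content of the theorem --- maintaining the width-$\lceil w/2\rceil$ decomposition on-line, bounding the number of live blocks, splitting a threatened block downward --- and offer only ``a surplus/counting argument'' and ``a Dilworth-type argument'' as placeholders. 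In the actual proof the recursion is not run on blocks of elements at all: it is run on an auxiliary poset of \emph{active nodes} (connected regular bipartite posets spanned between consecutive maximum antichains that contain a Dilworth clique of size $2$), whose width is shown to be at most $\lfloor w/2\rfloor$ by an ancestor-free/perfect-matching argument (the analogue of the lemma bounding the width of $(\mathcal{P}(u,s),{\leq_{(u,s)}})$ here), and the colours are transferred through edge posets using Felsner's up-growing algorithm (Theorem \ref{thm:Felsner}) and First-Fit on $(\underline{k}+\underline{k})$-free posets (Theorem \ref{thm:first_fit_two_long_incomparable_chains}). None of this machinery, nor any workable substitute for it, appears in your sketch, and the exponent $13$ is never derived --- ``keeping track of the constants'' presupposes the constants exist, which is what must be proved.

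Two of the concrete claims you do make are also unjustified as stated. First-Fit uses unboundedly many chains already on posets of width $2$ (Kierstead), so ``the inert part is chain-partitioned \dots by a first-fit argument --- $\FF$ needs only polynomially many chains once the active part is peeled off'' and ``frozen blocks reuse palettes through first-fit'' are false unless you prove that the relevant induced (edge) posets are interval-like or $(\underline{k}+\underline{k})$-free; establishing such structure is a nontrivial lemma in the real argument (compare Lemma \ref{lem:edge_order_for_a_chain} and Lemma \ref{lem:ancestor_free_edge_order_width} in this paper). Likewise the reduction $\val(w)\leq\mathrm{poly}(w)\cdot\regval(w)$ is itself a theorem with its own construction (Lemma \ref{lem:reg}, with factor $w$), not a consequence of ``peeling off'' an inert part. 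So what you have is an outline of the right shape, with the load-bearing lemmas named as open obstacles rather than proved; the statement $\val(w)\leq w^{13\log w}$ is not established by it.
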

The algorithm proving Theorem \ref{thm:alg_Bosek_Krawczyk} is quite involved.
In \cite{BKKMS18} Bosek, Kierstead, Krawczyk,
Matecki, and Smith presented another subexponential algorithm, easier to implement and analyze.
Moreover, it provides a slightly better upper bound.
\begin{theorem}[\cite{BKKMS18}]
\label{thm:alg_Bosek_Kierstead_Krawczyk_Matecki_Smith}
$\val(w) \leq w^{6.5\log{w}+7}$.
\end{theorem}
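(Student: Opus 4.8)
The plan is to prove a recursive bound on $\val$ and unroll it. Concretely, I would aim to show
\[
\val(w)\;\le\;w^{c}\cdot\val(\lceil w/2\rceil)
\]
for an absolute constant $c$ (our accounting will force $c$ to be a little below $6.5$), with the trivial base case $\val(1)=1$ and Kierstead's exponential bound cited above taking care of any constant‑size base cases. Since this recursion has depth $\lceil\log_2 w\rceil$ and $\prod_{i\ge 0}\bigl(w/2^{i}\bigr)^{c}\le w^{c\log_2 w}$, it immediately yields $\val(w)\le w^{O(\log w)}$; the extra bookkeeping of the rounding losses in $\lceil\cdot/2\rceil$ and of the multiplicative constant hidden in $w^{c}$ is exactly what turns a clean $c\log w$ into the stated exponent $6.5\log w+7$.

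The recursion would be realised by an on‑line algorithm that, while reading a width‑$w$ poset, splits the incoming elements into three streams. A \emph{middle} stream $M$ is coloured directly by a First‑Fit–type subroutine; the design goal is to keep $M$ inside a class on which First‑Fit (or an equally simple direct rule) uses only $w^{O(1)}$ chains — for instance interval orders, or more generally posets containing no two long incomparable chains, where First‑Fit is known to be linear in the width. An \emph{upper} stream $U$ and a \emph{lower} stream $L$ are each fed to a recursively defined subalgorithm for width $\lceil w/2\rceil$, and the construction works provided we can guarantee $\width(U)\le\lceil w/2\rceil$ and $\width(L)\le\lceil w/2\rceil$. The routing is governed by maintaining, among the elements already sent to $M$, a maximal antichain $A$ grown greedily: when a new element $x$ is incomparable to every element of $A$ it joins $A$ and $M$; when $x$ lies above some element of $A$ it is a candidate for $U$, and when it lies below some element of $A$ a candidate for $L$ (these two cases are mutually exclusive precisely because $A$ is an antichain); and $x$ is kept in $M$ rather than pushed to $U$ or $L$ exactly when pushing it out is unnecessary for the width bound and would risk drifting $M$ out of the First‑Fit–tractable class.

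Two things then have to be proved. The first is a purely combinatorial width lemma: the routing threshold is chosen so that a large antichain inside $U$ (resp.\ $L$) would combine with the antichain witnessed in $M$ at the relevant moment into an antichain of size $>w$ in the whole poset, a contradiction — and tuning that threshold is what pins the bound at $\lceil w/2\rceil$. The second, and in my view the main obstacle, is to show that the middle subroutine really uses only $w^{O(1)}$ chains: one must argue that the routing keeps $M$ permanently inside the required special class, invoke the (nontrivial) First‑Fit bound for that class, and — most delicately — verify that recombining the three partial partitions forces no extra chains, i.e.\ that the two recursive sub‑instances interact "orthogonally" with the middle layer. I expect that making this clean requires first isolating an auxiliary class of \emph{regular} posets on which the decomposition behaves uniformly, proving the recursion for the corresponding regular value $\regval$, and only then deducing the bound for $\val$; the constant $6.5$ and the additive $7$ would emerge from optimising the First‑Fit bound and the base cases along the way.
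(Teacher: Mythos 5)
First, a point of orientation: Theorem~\ref{thm:alg_Bosek_Kierstead_Krawczyk_Matecki_Smith} is quoted in this paper from \cite{BKKMS18} without proof, so there is no in-paper argument to compare against; judged on its own terms, your proposal is a research plan rather than a proof, and its central step fails as stated. The routing rule you actually specify --- maintain a greedily grown maximal antichain $A$ inside the middle stream $M$, send an element above (below) some element of $A$ to $U$ (to $L$) --- gives no control whatsoever on the widths of $U$ and $L$: if the first presented element $x_1$ enters $M$ (so $A=\{x_1\}$) and the adversary then presents $w$ pairwise incomparable elements all above $x_1$, every one of them is routed to $U$, which then has width $w$, not $\lceil w/2\rceil$. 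The escape clause ``keep $x$ in $M$ exactly when pushing it out is unnecessary for the width bound and would risk drifting $M$ out of the First-Fit-tractable class'' is not a rule; exhibiting such a rule and proving \emph{simultaneously} (i) that both recursive streams have width at most $\lceil w/2\rceil$ at every moment of the on-line presentation and (ii) that $M$ stays inside a class (such as $(\underline{k}+\underline{k})$-free posets) on which First-Fit uses $w^{O(1)}$ chains is exactly where the theorem lives, and no known argument achieves this by a direct three-stream split of an arbitrary on-line poset. Note also that ``recombining the three partial partitions'' is not an issue at all (chains from disjoint subalgorithms are simply kept disjoint); the difficulty you should have flagged instead is that the sub-instances handed to the recursive algorithms must themselves be legal on-line inputs of the smaller width \emph{at all times}, which is the same unproven width lemma.

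The actual route of \cite{BKKMS18} (and of \cite{BK15}) is the one this paper builds on: first reduce to regular posets via Lemma~\ref{lem:reg}, i.e.\ bound $\val(w)\le w\cdot\regval(w)$, and then run a structural analysis on the node tree of a regular poset, where the layered, perfect-matching structure is what makes a width-halving recursion provable; First-Fit enters there on edge posets of restricted type, not on a ``middle stream'' of the original poset. Your closing sentence, that one should isolate regular posets and prove the recursion for $\regval$, points in the right direction, but that is precisely where all the work lies and your sketch does not begin it. Finally, even granting a recursion of the form $\val(w)\le w^{c}\cdot\val(\lceil w/2\rceil)$, the specific bound $w^{6.5\log w+7}$ requires pinning down $c$, the rounding losses, and the base cases explicitly; ``our accounting will force $c$ to be a little below $6.5$'' is a promissory note, not an estimate. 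As it stands, the proposal has a genuine gap at its key lemma and cannot be credited as a proof of the stated bound.
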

The paper \cite{BKKMS18} also shows that the techniques used to prove Theorem \ref{thm:alg_Bosek_Kierstead_Krawczyk_Matecki_Smith}
can not give a bound better than $w^{O(\log w)}$.
However, the main result of this paper proves the following:
\begin{theorem}
\label{thm:main_result}
There is an on-line algorithm that partitions posets of width $w$ into at most $w^{O(\log{\log{w}})}$ chains.
That is, $\val(w) \leq w^{O(\log{\log{w}})}$. 
\end{theorem}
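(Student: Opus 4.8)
The plan is to keep the two‑phase architecture behind the $w^{O(\log w)}$ bounds of Theorems~\ref{thm:alg_Bosek_Krawczyk} and~\ref{thm:alg_Bosek_Kierstead_Krawczyk_Matecki_Smith} — first reduce to \emph{regular} posets, then recurse — but to shrink the recursion depth from $\Theta(\log w)$ to $O(\log\log w)$. Call an on‑line presentation \emph{regular} if every incoming element $x$ arrives with the sub‑poset strictly below $x$ and the sub‑poset strictly above $x$ each of width at most $w-1$, and let $\regval(w)$ denote the least number of chains an on‑line algorithm uses over regular width‑$w$ presentations. Phase~1 would be a reduction $\val(w)\le \val(w-1)+\regval(w)$ — whence $\val(w)\le\sum_{i\le w}\regval(i)\le w\cdot\regval(w)$ — proved by superposition: run the regular algorithm on the stream, and redirect each non‑regular arrival $x$ (which exhibits a full‑width antichain below or above it, hence comes with a canonical cut lowering its effective width) to an auxiliary family of chains running an algorithm for width $w-1$; the bookkeeping that keeps the number of such families under control is a potential argument based on the surplus. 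After Phase~1 it suffices to prove $\regval(w)\le w^{O(\log\log w)}$.

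The heart of the argument is an on‑line reduction that, working \emph{inside} a master instance of some fixed width $w_0$, converts a regular width‑$w$ sub‑instance into $O(\sqrt{w})$ regular sub‑instances of width at most $\lceil\sqrt{w}\,\rceil$ at the cost of a multiplicative factor $w_0^{O(1)}$ in the chain count. Writing $f(w_0,w)$ for the number of chains needed on such a width‑$w$ sub‑instance, this reads $f(w_0,w)\le w_0^{O(1)}\cdot f\!\left(w_0,\lceil\sqrt{w}\,\rceil\right)$ with $f(w_0,O(1))=O(1)$; since roughly $\log\log w_0$ iterations of $w\mapsto\sqrt{w}$ bring $w_0$ down to a constant, $\regval(w)=f(w,w)\le w^{O(\log\log w)}$, and with Phase~1 this gives $\val(w)\le w^{O(\log\log w)}$. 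To realise the reduction I would maintain on‑line a canonical cover of the presented elements by at most $w$ representative chains (updated greedily, in the spirit of Fulkerson's off‑line procedure), group these chains into $O(\sqrt{w})$ consecutive bands of $O(\sqrt{w})$ chains, and argue that the sub‑poset confined to one band — once the comparabilities crossing between bands are absorbed into $w_0^{O(1)}$ auxiliary labels — is again a regular poset of width $O(\sqrt{w})$, to be handed to a recursive copy of the algorithm; First‑Fit, which uses only polynomially many chains on sufficiently ladder‑restricted posets, enters here as the tool that lets the recursion bottom out, alongside the trivial base fact that a width‑$O(1)$ poset is split into $O(1)$ chains since $\val(O(1))=O(1)$.

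For the running time, at each of the $O(\log\log w)$ recursion levels the placement of an incoming element only has to inspect how it interacts with the size‑$O(\sqrt{w})$ subsets of the at most $w$ representative chains — exactly the configurations that can create the ladder‑ and width‑obstructions the decomposition is designed to avoid — and there are $w^{O(\sqrt{w})}$ of these; so one element costs $w^{O(\sqrt{w})}$ time and the whole stream $w^{O(\sqrt{w})}\,n$.

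The step I expect to be the main obstacle is the decomposition lemma of the second paragraph: arranging that a single band of the canonical chain cover carries a sub‑poset that is at once (i) of width $O(\sqrt{w})$, (ii) still regular, and (iii) detachable from the remaining bands for only $w_0^{O(1)}$ extra chains. None of these is automatic — an arbitrary grouping of chains into bands controls neither the width nor the regularity of the induced sub‑poset, and a naive encoding of the cross‑band comparabilities already costs super‑polynomially many chains. Making it work requires choosing the right structural invariant along which to split, and this is where ladders and the surplus do the real work: the bands must be chosen so that each induced sub‑poset is simultaneously low‑width and regular while the glue between consecutive bands stays polynomial in the ambient width $w_0$. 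That balance is exactly what the $w^{O(\log w)}$ algorithms lacked, and securing it is the crux.
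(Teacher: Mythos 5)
Your high-level shape is the right one -- a recursion of the form ``width $w$ costs a polynomial factor times the cost at width $\sqrt{w}$'', iterated $O(\log\log w)$ times, is exactly how the paper's bound arises (its color-count satisfies $\lambda(w)\le \mathrm{poly}(w)\cdot\lambda(\lfloor\sqrt{w}\rfloor)^2$) -- but the step you yourself flag as the crux is a genuine gap, and the mechanism you propose for it does not work as stated. Maintaining ``a canonical cover of the presented elements by at most $w$ representative chains, updated greedily'' and handing each element irrevocably to the recursive copy owning its band is essentially the original on-line chain partitioning problem in disguise: no on-line procedure can keep the presented elements covered by $w$ chains with irrevocable membership (that is precisely why $\val(w)>w$, cf.\ Figure \ref{fig_game_N}), and if the cover is allowed to be restructured internally, then an element's band may change after the band's recursive algorithm has already committed it to a chain, so the recursion collapses. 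Note also that if such a banding were available at only a $w^{O(1)}$ glue cost, your recursion $f(w_0,w)\le \sqrt{w}\cdot w_0^{O(1)} f(w_0,\sqrt{w})$ would not be needed at all in its iterated form -- the difficulty is exactly that neither (ii) regularity of a band nor (iii) polynomial-cost detachment is achievable by grouping chains, and you give no structural invariant that would make them so. In addition, your Phase 1 uses a different notion of ``regular'' (width of the strict up-/down-set of each arriving element) than the one for which the reduction $\val(w)\le w\cdot\regval(w)$ is known (Lemma \ref{lem:reg} concerns posets presented as maximum antichains with every bipartite layer's comparabilities extendable to perfect matchings), and your sketched ``superposition with a surplus-based potential'' is not a proof of the reduction for your class.

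For comparison, the paper's route to the width-$\sqrt{w}$ recursion is entirely different from banding a chain cover: it works on the node tree of a regular poset, calls a node \emph{active} when it contains a Dilworth clique of width $\lceil\sqrt{w}\rceil$ and has no ancestor with the same characteristics $(\width,\surplus)$, and proves that the active nodes of fixed characteristics form a poset of width at most $\lfloor\sqrt{w}\rfloor$ (one recursive call), that the edge poset along any chain of such nodes is $(\underline{2w-2\lceil\sqrt{w}\rceil+3}+\underline{2w-2\lceil\sqrt{w}\rceil+3})$-free and of width at most $w^3$ so that First-Fit uses only $O(w^4)$ chains on it (Lemma \ref{lem:edge_order_for_a_chain} with Theorem \ref{thm:first_fit_two_long_incomparable_chains}); the remaining nodes are handled via Felsner's up-growing algorithm (Theorem \ref{thm:Felsner}) on the $A(Q)/B(Q)$ edge posets, by a second recursive call on nodes of width at most $\lfloor\sqrt{w}\rfloor$, and by the bound of at most $w^2$ first problematic nodes per characteristic (Lemma \ref{lem:problematic_nodes}); chains are converted to colors on edges and pushed down the node tree by the replacing/projecting/shuffling techniques, with Proposition \ref{prop:matchings} supplying the ancestor-free matching structure. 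None of this machinery -- nodes, Dilworth cliques of size $\lceil\sqrt{w}\rceil$, edge posets, the $(\underline{k}+\underline{k})$-freeness argument, or the problematic-node count -- appears in your proposal, and without a proof of your decomposition lemma the claimed bound is not established.
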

The algorithm presented in this work extends the ideas from \cite{BK10,BK15}.
In Section \ref{sec:summary} we explain in details the differences between these two approaches
and we highlight the novelties of the current one.
Moreover, we discuss some issues related to the running time of our algorithm.

On the other hand, the following lower bounds on $\val(w)$ were given.
In the 1980's Kierstead \cite{Kie81} showed $\val(w) \geq 4w-3$ and
Szemer\'{e}di proved $\val(w) \geq \binom{w+1}{2}$ (see \cite{BFKKMM12, Kie86} for proofs).
Since then, the Szemer\'{e}di's lower bound was improved only by a multiplicative constant factor and the current record is $\val(w) \geq (2-o(1))\binom{w+1}{2}$~\cite{BFKKMM12}.

In 1981 Kierstead \cite{Kie81} asked whether $\val(w)$ is bounded by a polynomial in $w$.
This question still remains open and is considered to be central in this research area.

\subsection{Variants of the on-line chain partitioning problem}
Meanwhile, researchers have considered different variants of the on-line chain partitioning problem;
usually by narrowing the class of considered posets or restricting the way the posets are presented.
We refer the reader to the survey paper \cite{BFKKMM12} that provides an overview of the main research lines 
in this area.
Nevertheless, in this short subsection we provide two theorems that are used to prove the main result of the paper.

In 1995 Felsner introduced a variant of the on-line chain partitioning problem in which 
every element of a poset is maximal at the time it is presented.
The posets presented according to this limitation are called \emph{up-growing}.
This variant of the on-line chain partitioning problem was completely solved by Felsner.
\begin{theorem}[\cite{Fel97}]
\label{thm:Felsner}
There is an on-line algorithm that partitions up-growing posets of width $w$ into $\binom{w+1}{2}$ chains.
Moreover, this algorithm is best possible.
\end{theorem}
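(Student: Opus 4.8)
\emph{Setup.} The statement splits into an upper bound --- an online algorithm using at most $\binom{w+1}{2}$ chains on up-growing posets of width at most $w$ --- and a matching lower bound. I will use only one feature of up-growing presentations: when $x$ arrives it is maximal, so its strict down-set $D(x)=\{y:y<x\}$ is already entirely present, already partitioned by the algorithm, and frozen forever. Identifying each chain (color class) $C$ with its current top $\max C$, a color may legally take the new element $x$ precisely when $\max C<x$, and otherwise $x$ forces a new color. So the task is: process the elements online, paying for a new chain exactly when the incoming element dominates no current chain-top, under the promise that every antichain of the final poset has size at most $w$.

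\emph{Upper bound.} I would follow Felsner's algorithm \cite{Fel97}: keep the chains partitioned into $w$ \emph{ranks} $\mathcal R_1,\dots,\mathcal R_w$ with the cap $|\mathcal R_\ell|\le\ell$, which already bounds the number of chains by $1+2+\dots+w=\binom{w+1}{2}$. The natural parameter driving the ranking is $v(x):=\width\bigl(D(x)\cup\{x\}\bigr)$, the width of the closed principal ideal at $x$; it is known the instant $x$ is presented, it lies in $\{1,\dots,w\}$, and since the principal ideal of $a$ is contained in that of $b$ whenever $a<b$ it is nondecreasing along every chain of $P$, hence nondecreasing along each color class as it grows upward. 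The invariant I would maintain is: (i) $|\mathcal R_\ell|\le\ell$ for all $\ell$; (ii) a chain placed in $\mathcal R_\ell$ has top of $v$-value $\ell$; and (iii) a compatibility condition on the tops of the chains in $\mathcal R_\ell\cup\dots\cup\mathcal R_w$ (pairwise incomparability together with a ``staircase'' linking consecutive ranks) strong enough to guarantee a slot for a newcomer. When $x$ arrives with $v(x)=\ell$: if some chain $C$ with $\max C<x$ exists (its rank is then automatically at most $\ell$), extend $C$ by $x$, move $C$ up to $\mathcal R_\ell$ if it was lower, and, should this overfill $\mathcal R_\ell$, run a rebalancing pass shifting the offending chains one rank up; if no chain-top lies below $x$, open a fresh chain for $x$ and put it in $\mathcal R_\ell$, which requires arguing that at that instant $|\mathcal R_\ell|\le\ell-1$ --- the heart of the matter --- by a Dilworth-type antichain argument inside $D(x)\cup\{x\}$ using that this ideal has width exactly $\ell$.

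\emph{Lower bound.} The quantity $\binom{w+1}{2}$ is already forced by Szemer\'{e}di's construction (see \cite{BFKKMM12} for a proof), and that construction only ever presents elements that are maximal when presented, so it is a legitimate up-growing adversary. I would organize it by induction on $w$: writing $s_w=\binom{w+1}{2}$ we have $s_w=s_{w-1}+w$, with base cases $w\le 2$ given by Figure~\ref{fig_game_N} (which is up-growing). For the step, run sufficiently many vertically offset copies of the width-$(w-1)$ adversary, arranged so that the union still has width $w-1$ throughout, to reach a configuration where the algorithm holds $s_{w-1}$ chains whose tops expose a rich enough antichain structure; then present a short up-growing sequence of $w$ further elements, each sitting above the tops of the relevant previously forced chains but incomparable to enough others, so that the algorithm must open $w$ brand-new chains while the width never exceeds $w$. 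Concatenating the games for $w=1,2,\dots$ forces $s_w$ chains and matches the upper bound.

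\emph{The main obstacle.} The real content is on the upper-bound side: maintaining the rank invariant, that is, proving that ``open a new chain'' always fits under $|\mathcal R_\ell|\le\ell$ and that the rebalancing after an extension does not cascade past rank $w$. Both steps are exactly where up-growth is used --- only because $D(x)$ is frozen and fully visible can one certify $v(x)$ and run the antichain argument supplying the spare slot --- and the telescoping $1+2+\dots+w$ is precisely what the per-rank caps give. On the lower-bound side the only delicate point is combinatorial bookkeeping: choosing the number and vertical offsets of the copies of the width-$(w-1)$ adversary so that the width stays at $w-1$ until the final $w$ elements, each of which still costs a fresh chain.
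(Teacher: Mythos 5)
The paper does not prove this statement at all: Theorem~\ref{thm:Felsner} is imported from Felsner's paper \cite{Fel97} and used as a black box (to partition the edge posets $(A(Q)_E,{\leq_E})$, $(B(R)_E,{\leq_E})$ via Felsner's algorithm), so your argument has to stand on its own, and as written it is an outline whose decisive steps are missing --- and whose central invariant is in fact untenable. Concretely, let the adversary present a single antichain $x_1,\dots,x_w$ one element at a time; this is a legitimate up-growing presentation of width $w$, every $x_i$ is incomparable to all current chain tops, so \emph{every} on-line algorithm must open a new chain for each $x_i$, and each $x_i$ has $v(x_i)=1$. By your invariant (ii) all $w$ of these chains would have to sit in $\mathcal{R}_1$, contradicting the cap $|\mathcal{R}_1|\le 1$ of invariant (i) as soon as $w\ge 2$; indeed your own rule ``open a fresh chain and argue $|\mathcal{R}_\ell|\le\ell-1$'' demands $|\mathcal{R}_1|\le 0$ from the second element on. So the bound $\binom{w+1}{2}$ cannot be obtained by capping at $\ell$ the number of chains whose top has down-set width $\ell$ (if anything, the example shows value-$1$ tops can number $w$); Felsner's actual bookkeeping is different, and the ``compatibility condition (iii) strong enough to guarantee a slot'' together with the non-cascading rebalancing --- which you yourself flag as the heart of the matter --- is exactly what is not supplied and cannot be supplied for this invariant.

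The lower-bound half has a concrete obstruction as well. Suppose the width-$(w-1)$ adversary has already forced $s_{w-1}\ge 1$ chains and you then present $w$ further elements, each of which is supposed to open a brand-new chain. An element forces a brand-new chain only if no current chain top lies strictly below it; since elements are maximal when presented, such an element is then incomparable to every old chain top (it cannot be below one), and the $w$ forcing elements are pairwise incomparable (a later one lying above an earlier one could be appended to the chain just opened for that earlier element). In your scheme no old chain receives a new element during this final phase, so the old tops never change, and the $w$ new elements together with any single old chain top form an antichain of size $w+1$, violating the width-$w$ promise. The genuine Szemer\'edi/Felsner adversary avoids exactly this: the extra chains are not forced in one terminal batch of ``brand-new'' chains, but by an interleaved phase structure in which previously used chains are rendered unusable for later phases, and controlling the width during that interleaving is precisely the part you dismissed as bookkeeping. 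In short, both halves of the theorem are cited results in this paper, and the proposal neither reproduces their proofs nor gets past these two obstructions.
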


\emph{First-Fit} is an on-line chain partitioning algorithm that identifies chains with natural numbers and
assigns every new vertex of the presented poset into the lowest possible chain.
Kierstead \cite{Kie86} showed that First-Fit uses an unbounded number of chains when partitioning posets of width~$2$.
Nevertheless, there are classes of posets in which First-Fit proves to be efficient.
Probably the most prominent among them are \emph{interval posets}.
A poset $(P,{\leq})$ is \emph{interval} if every vertex $x \in P$ can be represented 
by an open interval $\mu(x)$ in $\mathbb{R}$ such that $x \leq y$ iff $\mu(x)$ is to the left of $\mu(y)$.
In a series of papers \cite{CS88,Kie88,KQ95,KST16,NB08,PR05} it is proved that First-Fit uses $O(w)$ chains in partitioning interval posets of width $w$ 
and that the constant hidden behind the big~$O$ notation is at most~$8$ and at least~$5$.
Given a poset $Q$, a poset $P$ is \emph{$Q$-free} if it does not contain an induced poset isomorphic to $Q$.
Fishburn \cite{Fis70} showed that the class of interval posets coincide with the class of $(\underline{2}+\underline{2})$-free posets,
where $(\underline{k}+\underline{k})$ is a poset consisting of two incomparable chains of height $k$.
Bosek, Krawczyk, and Szczypka~\cite{BKS10} proved that First-Fit is also efficient in the class of $(\underline{k}+\underline{k})$-free posets, where $k$ is any fixed natural number.
Precisely, they showed that First-Fit uses $O(kw^2)$ chains in partitioning $(\underline{k}+\underline{k})$-free posets of width~$w$.
Subsequently, Milans and Joret \cite{JM11} improved this bound to $O(k^2w)$, 
and Dujmovi\'{c}, Joret, and Wood proved that:
\begin{theorem}[\cite{DJW12}]
\label{thm:first_fit_two_long_incomparable_chains}
First-Fit partitions $(\underline{k}+\underline{k})$-free posets of width $w$ into at most $8kw$ chains.
Moreover, this is asymptotically best possible.
\end{theorem}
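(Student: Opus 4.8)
The plan is to prove the two assertions separately: the upper bound $\FF \le 8kw$ by a ``blocker'' induction, and the asymptotic optimality by an explicit adversary construction. The only property of First-Fit I would need is the following blocker principle: if a vertex $v$ is placed on chain number $c$, then for every $c' < c$ there is a vertex $u$ with $\FF(u) = c'$ that arrived strictly before $v$ and is incomparable to $v$, since otherwise First-Fit would have placed $v$ on chain $c'$. I would phrase the goal as producing an induced $(\underline{k}+\underline{k})$, that is, two chains $a_1 < \cdots < a_k$ and $b_1 < \cdots < b_k$ with $a_i \parallel b_j$ for all $i,j$, thereby contradicting $(\underline{k}+\underline{k})$-freeness. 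Starting from a vertex $v$ with $\FF(v) > 8kw$, the blocker principle yields $8kw$ blockers, one per lower color, all incomparable to $v$; since the width is $w$, Dilworth's theorem partitions them into at most $w$ chains, so one blocker chain $B$ has length at least $8k$, all of $B$ incomparable to $v$. This already produces a long chain incomparable to a single vertex; the remaining work is to upgrade that single vertex into a chain of length $k$ while preserving full mutual incomparability with a length-$k$ sub-chain of $B$.

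The main obstacle — and the entire point of an $O(kw)$ rather than an $O(k^2 w)$ bound — is to carry out the $k$-fold growth of the two chains while spending only $O(kw)$ colors in total, instead of losing a multiplicative factor of $w$ at each of the $k$ levels. I would set up an inductive invariant of the form ``$v$ is $(p,q)$-good'': there is an induced $(\underline{p}+\underline{q})$ whose $\underline{p}$-side has top element $v$. The induction lengthens the two sides alternately, at each step selecting inside the current blocker chain a vertex that still carries enough First-Fit color to continue the recursion, is comparable to the chain it extends, and is incomparable to the whole of the other chain. The amortized accounting — showing that each unit increase of $\min(p,q)$ consumes only a constant times $w$ of the color budget, so that reaching $(k,k)$ requires color at most $8kw$ — is the crux; this is where the constant $8$ is pinned down, and where the time-order constraint (each blocker must predate the vertex it blocks) has to be maintained consistently through the recursion so that the final configuration is a genuine induced subposet.

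For the optimality claim I would exhibit $(\underline{k}+\underline{k})$-free posets of width $w$ on which First-Fit is forced to open $\Omega(kw)$ chains, matching the upper bound up to the constant. The plan is to build on the known adversary forcing First-Fit to use $\Omega(w)$ chains on interval (that is, $(\underline{2}+\underline{2})$-free) posets, and to amplify the dependence on $k$ by concatenating $\Theta(k)$ such gadgets under a controlled comparability pattern: comparabilities between gadgets are arranged so that no two incomparable chains ever both exceed length $k$ (keeping the poset $(\underline{k}+\underline{k})$-free and the width equal to $w$), while the presentation order prevents First-Fit from consolidating the gadgets onto shared chains, so that the fresh colors opened in successive gadgets accumulate to $\Omega(kw)$. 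The delicate point on this side is the tension in the inter-gadget linkage: enough incomparability to defeat color reuse, yet not so much as to create a forbidden pair of long incomparable chains. Resolving this tension through the interleaved presentation order is the main obstacle for the lower bound.
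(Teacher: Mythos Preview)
The paper does not prove this theorem; it is quoted from \cite{DJW12} and used as a black box (in Subsection~\ref{subsec:active_nodes}, to bound the number of chains First-Fit uses on the edge poset $(\mathcal{L}_E,\leq_E)$). There is therefore no proof in the paper to compare your proposal against.

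For what it is worth, your plan is in the spirit of the argument in \cite{DJW12}. The blocker principle you state is exactly what drives that proof, and you have correctly identified the key difficulty: the na\"{\i}ve recursion loses a factor of $w$ at each of $k$ levels, which is why \cite{JM11} only obtained $O(k^2w)$. The improvement to $O(kw)$ in \cite{DJW12} does come from a more careful inductive invariant, but it is not quite the alternating $(p,q)$-good scheme you describe. Rather, one fixes a single chain of blockers of length $\Theta(kw)$ at the outset and then argues, level by level along that chain, that every $8w$ consecutive blockers contain one that can be appended to a growing incomparable chain; the accounting is per block of $8w$ colors along a \emph{single} long chain, not an alternating two-sided growth. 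Your sketch leaves this amortization vague, and as written it is not clear it avoids the $O(k^2w)$ trap. The lower-bound side is essentially as you say: stacking $\Theta(k)$ interval gadgets with controlled inter-gadget comparabilities.
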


\section{Notation}
Let $(P,{\leq})$ be a poset. 
The \emph{closed upset of $A \subseteq P$}, denoted by $A\upseteq$, is the set $\{ y : x \leq y \text{ for some $x \in A$}\}$, 
and the \emph{closed downset of $A \subseteq P$}, denoted by $A\downseteq$, is the set $\{ y : y \leq x \text{ for some $x \in A$}\}$.
If $A = \{a\}$, we write $a\upseteq$ and $a\downseteq$ instead of $\{a\}\upseteq$ and $\{a\}\downseteq$.
If it does not lead to ambiguity, the \emph{subposet of $(P,{\leq})$ induced by a set $U \subseteq P$} is denoted by $(U,{\leq})$.
Let $u \in P$.
If $u\downseteq = P$, then $u$ is \emph{maximum} in $(P,{\leq})$.
If $u\upseteq = P$, then $u$ is \textit{minimum} in $(P,{\leq})$.
If $u \leq x$ yields $x = u$ for $x \in P$, $u$ is \emph{maximal} in $(P,{\leq})$.
If $x \leq u$ yields $x = u$ for $x \in P$, $u$ is \emph{minimal} in $(P,{\leq})$.

Let $A,B$ be two maximal antichains in $(P,\leq)$. 
We write $A \sqsubseteq B$ if $A \subseteq B\downseteq$ or equivalently $B \subseteq A\upseteq$.
We write $A \sqsubset B$ if $A \sqsubseteq B$ and $A \neq B$.

A poset $(P,{\leq})$ is \emph{bipartite} if the set $P$ can be partitioned into two disjoint maximal antichains $A,B$ such that
$A \sqsubset B$. 
Such a poset is denoted by $(A,B,{\leq})$.

In what follows we use the following convention: if $A,B$ are two disjoint maximum (maximal with respect to the size) 
antichains in $(P,{\leq})$ such that $A \sqsubset B$,
by $(A,B,{\leq})$ we denote the bipartite poset induced by the set $A \cup B$ in $(P,{\leq})$.
Such a poset is often treated as a~bipartite graph $(A,B,{<})$ with 
the bipartition classes $A$ and $B$ and the edges $(a < b)$, where $a \in A$, $b \in B$ are such that $a < b$.

\section{Regular posets}
\label{sec:regular_posets}
In~\cite{BK15} the on-line chain partitioning problem has been reduced to the problem of on-line chain partitioning of regular posets.
A \emph{regular poset} is a triple $(P,\leq,A_1 \ldots A_k)$, where $A_1 \ldots A_k$ is a sequence of maximum antichains of $P$, called 
the \emph{presentation order} of $(P,{\leq})$, that satisfies the following conditions: 
\begin{enumerate}
 \item $A_1,\ldots,A_k$ forms a partition of $(P,\leq)$,
 \item $(\{A_1,\ldots,A_k\}, {\sqsubseteq})$ is a linear order with $A_1$ and $A_2$ as its minimum and maximum,
 \item The antichains $A_1$ and $A_2$ induce a complete bipartite poset in $(P,{\leq})$,
 \item For every $t \in [2,k]$ and every two consecutive antichains $A_p \sqsubset A_s$ in $(\{A_1,\ldots,A_t\}, {\sqsubseteq})$, 
 the bipartite poset $(A_p,A_s,{<})$ is \emph{regular}, which means that any comparability edge in $(A_p,A_s,{<})$ is
extendable to a perfect matching in $(A_p,A_s,{<})$.
\end{enumerate}
The task of an on-line chain partitioning algorithm of a regular poset $(P,{\leq},A_1 \ldots A_k)$ remains the same;
chains to which the elements of $A_t$ are assigned depend only on the poset $(\bigcup_{i=1}^t A_i, {\leq})$ and the chains to which the elements of $\bigcup_{i=1}^{t-1} A_i$ were assigned.

Loosely speaking, a regular poset $(P,{\leq},A_1 \ldots A_k)$ is presented to an on-line algorithm in rounds.
In the $t$-th round an antichain $A_t$ is introduced and the algorithm assigns irrevocably the elements of $A_t$ to chains.
The antichains $A_1,A_2$ presented in the first two rounds form a complete bipartite poset $(A_1,A_2,{<})$.
Just before introducing an antichain $A_t$ for $t \geq 3$, 
the antichains $A_1,\ldots,A_{t-1}$ presented earlier form a partition of the presented posets into $\sqsubset$-levels, 
and the poset spanned between every two neighboring levels is regular, which means that each edge of this poset is extendable to a perfect matching.
In the $t$-th round an antichain $A_t$ is introduced between some two neighboring levels $A_p$ and $A_s$
and regular posets $(A_p,A_t,{<})$ and $(A_t,A_s,{<})$ are introduced in place of the regular poset $(A_p,A_s,{<})$.
A regular poset of width $4$ is shown in Figure~\ref{fig:regular_poset}.

\begin{figure}[htbp!]
\begin{center}
\begin{tikzpicture}

\begin{scope}[shift={(0\hdist,0\vdist)}]

\def\m{4}
\foreach \j in {1,...,\m} {
\draw (\j\hdist,1\vdist) node[wvertex] (A2\j) {};
\draw (\j\hdist,0\vdist) node[bvertex] (A1\j) {};
}

\draw[draw,dashed] \convexpath{A21,A2\m}{\antinner};
\draw[draw,dashed] \convexpath{A11,A1\m}{\antinner};

\foreach \j in {1,...,\m} {
\foreach \k in {1,...,\m} {
\path[edge] (A1\j) -- (A2\k);
}
}

\node [right] at ($(A2\m)+(\antinner,0)$) {$A_{2}$};
\node [right] at ($(A1\m)+(\antinner,0)$) {$A_{1}$};

\end{scope}

\begin{scope}[shift={(5\hdist,-0.5\vdist)}]

\def\m{4}
\foreach \j in {1,...,\m} {
\draw (\j\hdist,2\vdist) node[bvertex] (A2\j) {};
\draw (\j\hdist,1\vdist) node[wvertex] (A3\j) {};
\draw (\j\hdist,0\vdist) node[bvertex] (A1\j) {};
}

\draw[draw,dashed] \convexpath{A21,A2\m}{\antinner};
\draw[draw,dashed] \convexpath{A31,A3\m}{\antinner};
\draw[draw,dashed] \convexpath{A11,A1\m}{\antinner};

\draw[edge] (A11)--(A31)-- (A21);
\draw[edge] (A12)--(A32)-- (A22);
\draw[edge] (A13)--(A33)-- (A23);
\draw[edge] (A14)--(A34)-- (A24);

\draw[edge] (A11)--(A32);
\draw[edge] (A12)--(A31);
\draw[edge] (A13)--(A34);
\draw[edge] (A14)--(A33);

\draw[edge] (A31)--(A22);
\draw[edge] (A31)--(A23);
\draw[edge] (A32)--(A21);
\draw[edge] (A32)--(A23);
\draw[edge] (A33)--(A22);

\draw[edge, dashed] (A11) -- (A24);
\draw[edge, dashed] (A12) to[out=60,in=-145] (A24);

\draw[edge, dashed] (A14) -- (A21);
\draw[edge, dashed] (A13) to[out=120,in=-35] (A21);

\node [right] at ($(A2\m)+(\antinner,0)$) {$A_{2}$};
\node [right] at ($(A3\m)+(\antinner,0)$) {$A_{3}$};
\node [right] at ($(A1\m)+(\antinner,0)$) {$A_{1}$};

\end{scope}

\begin{scope}[shift={(10\hdist,-1\vdist)}]

\def\m{4}
\foreach \j in {1,...,\m} {
\draw (\j\hdist,3\vdist) node[bvertex] (A2\j) {};
\draw (\j\hdist,2\vdist) node[wvertex] (A4\j) {};
\draw (\j\hdist,1\vdist) node[bvertex] (A3\j) {};
\draw (\j\hdist,0\vdist) node[bvertex] (A1\j) {};
}

\draw[draw,dashed] \convexpath{A21,A2\m}{\antinner};
\draw[draw,dashed] \convexpath{A41,A4\m}{\antinner};
\draw[draw,dashed] \convexpath{A31,A3\m}{\antinner};
\draw[draw,dashed] \convexpath{A11,A1\m}{\antinner};

\draw[edge] (A11)--(A31)--(A41) -- (A21);
\draw[edge] (A12)--(A32)--(A42) -- (A22);
\draw[edge] (A13)--(A33)--(A43) -- (A23);
\draw[edge] (A14)--(A34)--(A44) -- (A24);

\draw[edge] (A11)--(A32);
\draw[edge] (A12)--(A31);
\draw[edge] (A13)--(A34);
\draw[edge] (A14)--(A33);

\draw[edge] (A31)--(A42);
\draw[edge] (A32)--(A41);

\draw[edge] (A42)--(A23);
\draw[edge] (A43)--(A22);

\draw[edge, dashed] (A11) to[out=30,in=-120] (A24);
\draw[edge, dashed] (A12) to[out=50,in=-120] (A24);

\draw[edge, dashed] (A14) to[out=150,in=-60] (A21);
\draw[edge, dashed] (A13) to[out=130,in=-60] (A21);

\node [right] at ($(A2\m)+(\antinner,0)$) {$A_{2}$};
\node [right] at ($(A4\m)+(\antinner,0)$) {$A_{4}$};
\node [right] at ($(A3\m)+(\antinner,0)$) {$A_{3}$};
\node [right] at ($(A1\m)+(\antinner,0)$) {$A_{1}$};

\end{scope}

\end{tikzpicture}
\end{center}
\caption[]{The antichains $A_2,A_3,A_4$ introduced in the $2$-nd, $3$-rd, and $4$-th round in the presentation of a regular poset $(P,{\leq},A_1\ldots A_4)$ of width $4$.}
\label{fig:regular_poset}
\end{figure}
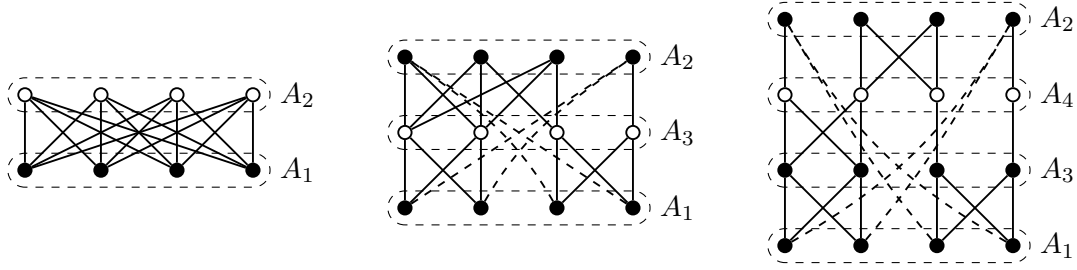

Let $\regval(w)$ be the minimum $k$ for which there is an on-line algorithm that partitions regular posets of width $w$ into at most $k$ chains.
Clearly, we have $\regval(w) \leq \val(w)$.
In \cite{BKKMS18, BK15} it is proved that $\val(w)$ can also be bounded from above in terms of~$\regval(w)$.
\begin{lemma}[\cite{BKKMS18, BK15}]
\label{lem:reg}
$\val(w) \leq w \cdot \regval(w)$.
\end{lemma}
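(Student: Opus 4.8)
\emph{The plan.} Fix an on-line algorithm $\mathcal{R}$ that partitions regular posets of width $w$ into $r=\regval(w)$ chains. Given an on-line presentation $x_1,\ldots,x_n$ of a general poset $(P,{\leq})$ of width $w$, the idea is to split the incoming elements on-line into at most $w$ \emph{parts}, to turn each part into an on-line presentation of a regular poset of width $w$ by padding with auxiliary ``dummy'' elements, to run a private copy of $\mathcal{R}$ on each part with its own palette of $r$ chains, and to transfer the chains back to $P$. If this can be done, the algorithm uses at most $w\cdot r=w\cdot\regval(w)$ chains, so the whole argument reduces to (a) making each part a legal regular presentation and (b) checking that the transfer yields chains of $P$.

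\emph{Turning a part into a regular poset.} For each part $j$ I would maintain a regular poset $R_j$ together with its presentation order. It starts with two antichains $A_1^j,A_2^j$, each consisting of $w$ dummies and joined by all comparabilities, which act as a universal bottom and a universal top of $R_j$. When $x_t$ arrives, I first assign it to a part $j=j(t)$; then inside $R_j$ I introduce a new maximum antichain $B=\{x_t\}\cup D$, where $D$ is a set of fresh dummies with $|B|=w$, and I insert $B$ strictly between the two currently consecutive antichains $A_p\sqsubset A_s$ of $R_j$ that bracket $x_t$ with respect to its comparabilities to the real elements already placed. The comparabilities of $B$ to $A_p$ and to $A_s$ are chosen so that: (i) both bipartite posets $(A_p,B,{<})$ and $(B,A_s,{<})$ are regular; (ii) their composition through $B$ reproduces exactly the old bipartite poset $(A_p,A_s,{<})$, so that no new comparabilities appear among previously placed elements; and (iii) the comparabilities induced among the real elements of $R_j$ agree with those of $P$. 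Finally I pass the round to $\mathcal{R}$ on $R_j$ (the antichain $B$ is its $t$-th input), obtain a chain $c$, and place $x_t$ into the output chain $(j,c)$.

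\emph{Why this works, granting the hard step.} Since every previously built layer of $R_j$ is left untouched and $B$ sits only between its two neighbours, after each round $R_j$ is still a regular poset of width $w$ (its width stays $w$ because $B$ is inserted with exactly $w$ elements between consecutive maximum antichains, and $A_1^j,A_2^j$ remain its minimum and maximum as everything is inserted strictly between them); thus every presentation fed to $\mathcal{R}$ is legal. If two real elements $x,x'$ receive the same chain $(j,c)$, they lie in the same part and in the same chain of $R_j$, hence are comparable in $R_j$, hence comparable in $P$ by (iii); so every colour class is a chain of $P$. With at most $w$ parts and $r$ chains of $\mathcal{R}$ on each $R_j$, the algorithm uses at most $w\cdot\regval(w)$ chains; padding each $R_j$ to width exactly $w$ with extra dummy columns lets the single value $r=\regval(w)$ serve every part.

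\emph{The main obstacle.} Everything hinges on the combinatorial claim that the insertion step can \emph{always} be carried out using at most $w$ parts: one must be able to pick the part $j$, the bracketing pair $A_p\sqsubset A_s$, and the edges of $B$ so that (i)--(iii) hold simultaneously. Conditions (i) and (ii) say that a regular bipartite poset has to be ``split through'' a new middle layer one of whose vertices is the forced real element $x_t$ carrying forced comparabilities to the real vertices of $A_p$ and $A_s$; this is where Hall-type / perfect-matching arguments and the structure of the lattice of maximum antichains of a poset enter. Equally one must show that a single part can absorb $x_t$ unless it has become ``full'' along $x_t$'s level, and that such fullness can occur in at most $w$ parts at once — morally the $w$ parts play the role of the $w$ chains of an off-line Dilworth partition, each ``fuzzed'' into a regular poset. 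I expect essentially all the difficulty of the proof to lie in establishing this claim; the surrounding bookkeeping is routine.
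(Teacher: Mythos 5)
Your proposal is a reduction \emph{plan}, not a proof: the step you yourself flag as ``the main obstacle'' is exactly the content of the lemma, and it is left unestablished. Two concrete gaps. First, you never specify the rule $t\mapsto j(t)$ that routes an incoming element to one of the $w$ parts, and the intuition you offer --- that the parts play the role of the $w$ chains of an off-line Dilworth partition, ``fuzzed'' into regular posets --- is circular: deciding on-line which chain-like part a new element belongs to is essentially the on-line chain partitioning problem itself (already for width $2$ no on-line rule can maintain such a Dilworth-like split, which is why $\val(w)>w$). The reduction cited in the paper works because the routing invariant is something trivially computable and monotone on-line (tied to the growth of the width of the presented poset, so that at most $w$ ``phases''/parts ever arise), not because each part tracks a chain; without naming such an invariant, the claim ``at most $w$ parts suffice'' has no content.

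Second, conditions (i)--(iii) of your insertion step are not shown to be simultaneously satisfiable, and (iii) is not a local condition: it requires that the comparabilities of $x_t$ to \emph{all} real elements already in $R_j$ --- including those lying several layers away --- are exactly reproduced by the transitive closure of the layer-by-layer edges. At insertion time you only get to choose the edges between the new antichain $B$ and its two neighbours $A_p,A_s$; the edges of all other consecutive layers were fixed in earlier rounds, so whether the required paths to distant real elements exist (and whether unwanted ones are avoided) is no longer under your control. Even the local part --- making both $(A_p,B,{<})$ and $(B,A_s,{<})$ regular, with composition equal to the old $(A_p,A_s,{<})$, while respecting the forced comparabilities of $x_t$ --- needs a Hall/perfect-matching argument that you do not give, and it interacts with the choice of the part $j$ and of the bracketing pair. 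Since you explicitly defer ``essentially all the difficulty'' to this claim, the argument as written does not prove $\val(w)\le w\cdot\regval(w)$; note the paper itself does not reprove this lemma but imports it from \cite{BKKMS18,BK15}, where precisely this deferred construction is carried out.
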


\section{Nodes of a regular poset}

Let $(P,{\leq},A_1 \ldots A_k)$ be a regular poset.
A bipartite poset $(X,Y,{<})$ is a \emph{node} in $(P,{\leq},A_1 \ldots A_k)$
if $X \cup Y$ is a connected component in the bipartite graph $(A_p,A_s,{<})$, where $X \subseteq A_p$, $Y \subseteq A_s$, and
$A_p,A_s$ are two antichains that are consecutive at some stage of the presentation of $(P,{\leq},A_1 \ldots A_k)$.
Since $(A_p,A_s,{<})$ is regular, $(X,Y,{<})$ is also regular and we have $|X|=|Y| = \text{the width of $(X,Y,{<})$}$.

The \emph{characteristics} of a node $N = (X, Y, {<})$ is a pair $(\width(N), \surplus(N))$, where: 
\begin{itemize}
 \item $\width(N)$ is the width of~$N$, which equals to $|X|=|Y|$ as $N$ is regular, 
\item $\surplus(N)$ is the \emph{surplus} of $N$ defined as the
largest $k$ such that for all non-empty $A \subseteq X$ we have 
$|A\upseteq{} \cap Y| \geq \min\{|A|+k,|Y|\}$.
\end{itemize}
For $N$ being a complete bipartite poset the condition is true for every $k$ and we put $\surplus(N) = \infty$.
Note that the surplus of every node $N$ is at least $1$ as $(X,Y,{<})$ is connected and regular.

The next proposition shows that the surplus of $N=(X,Y,{<})$ can be equivalently defined
with respect to the downsets of the subsets of $Y$.

\begin{proposition}
\label{prop:nodes}
Let $N=(X,Y,{<})$ be a node in $(P,{\leq},A_1 \ldots A_k)$.
For every $k \in \mathbb{N}$, the following two sentences are equivalent:
\begin{enumerate}
\item \label{prop:node_surplus_definition_a} For every non-empty set $A \subseteq X$ we have $|A\upseteq{} \cap Y| \geq \min\{|A|+k, |Y|\}$.
\item \label{prop:node_surplus_definition_b} For every non-empty set $B \subseteq Y$ we have $|B\downseteq{} \cap X| \geq \min\{|B|+k, |X|\}$.
\end{enumerate}
\end{proposition}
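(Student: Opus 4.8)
The plan is to translate the statement into a purely combinatorial claim about the bipartite graph underlying the node and then settle it by a short complementation argument. Since $X$ and $Y$ are antichains, for every non-empty $A \subseteq X$ the set $A\upseteq{} \cap Y$ is exactly the neighbourhood $N(A)$ of $A$ in the bipartite graph $(X,Y,{<})$, and symmetrically $B\downseteq{} \cap X = N(B)$ for every $B \subseteq Y$. Writing $w = \width(N)$ and using $|X| = |Y| = w$ (which holds because $N$ is regular), condition \eqref{prop:node_surplus_definition_a} reads $|N(A)| \ge \min\{|A|+k,\, w\}$ for all non-empty $A \subseteq X$, while \eqref{prop:node_surplus_definition_b} reads $|N(B)| \ge \min\{|B|+k,\, w\}$ for all non-empty $B \subseteq Y$; these two statements are mirror images of each other under exchanging the two bipartition classes, so it suffices to prove one implication.

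To prove \eqref{prop:node_surplus_definition_a} $\Rightarrow$ \eqref{prop:node_surplus_definition_b} I would argue by contraposition. Assume \eqref{prop:node_surplus_definition_b} fails, so there is a non-empty $B \subseteq Y$ with $|N(B)| \le |B| + k - 1$ and $|N(B)| \le w - 1$. Put $A = X \setminus N(B)$. The second inequality gives $A \neq \emptyset$, and by the choice of $A$ there is no edge between $A$ and $B$, hence $N(A) \subseteq Y \setminus B$ and $|N(A)| \le w - |B|$. Since $|A| = w - |N(B)| \ge w - |B| - k + 1$, we get $|A| + k \ge w - |B| + 1 > w - |B| \ge |N(A)|$ and also $|N(A)| \le w - |B| < w = |Y|$; together these give $|N(A)| < \min\{|A|+k,\, |Y|\}$, contradicting \eqref{prop:node_surplus_definition_a}. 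The reverse implication follows by running the same argument with the roles of $X$ and $Y$ interchanged, which is legitimate precisely because $|X| = |Y|$.

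I do not expect a genuine obstacle here: the argument invokes regularity of $N$ only through the equality $|X| = |Y|$, already recorded above, and it is just the standard complementation trick behind the symmetry of Hall-type deficiency conditions. The only point that needs a little care is the bookkeeping around the truncation by the minimum — one must check both that the complementary set $A = X \setminus N(B)$ is non-empty (so that \eqref{prop:node_surplus_definition_a} may be applied to it) and that the crude estimate $|N(A)| \le w - |B|$ is strictly smaller than each of the two quantities $|A| + k$ and $|Y|$ that appear inside the minimum.
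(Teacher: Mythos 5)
Your proof is correct, and it takes a genuinely different route from the paper. The paper first restates the implication as an un-truncated claim (if $|A\upseteq\cap Y|\ge|A|+t$ for all non-empty $A\subseteq X$ with $|A|\le|X|-t$, then the dual bound holds for subsets of $Y$), and then proves it by building an auxiliary bipartite poset: it removes a set $C\subseteq B\downseteq\cap X$ of size (at most) $t$ and a set $D\subseteq Y\setminus B$ of the same size, verifies Hall's condition in $(X\setminus C, Y\setminus D,{<})$ using hypothesis \eqref{prop:node_surplus_definition_a}, and extracts the desired bound on $|B\downseteq\cap X|$ from the resulting perfect matching. You instead argue by contraposition with the complementation trick behind defect versions of Hall's condition: assuming \eqref{prop:node_surplus_definition_b} fails for some $B$, you take $A=X\setminus(B\downseteq\cap X)$, observe there are no edges from $A$ to $B$, and do the elementary counting, checking separately that $|A\upseteq\cap Y|$ falls strictly below both $|A|+k$ and $|Y|$. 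Both arguments use only $|X|=|Y|$ (and the symmetry that reduces the equivalence to one implication, which you and the paper invoke identically), but yours avoids Hall's theorem and the intermediate reformulation altogether, giving a shorter and more self-contained proof; the paper's version stays closer to the matching-extendability viewpoint it uses throughout (e.g.\ in Proposition \ref{prop:matchings}), at the cost of some extra bookkeeping with the sets $C$ and $D$. Your bookkeeping around the truncation by the minimum is exactly right, so there is no gap.
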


\begin{proof}
We show that \eqref{prop:node_surplus_definition_a} implies \eqref{prop:node_surplus_definition_b}; the other direction is proved analogically.
Note that the proof will be finished if only we show the following claim:
if $t \in \mathbb{N}$ is such that $|A\upseteq{} \cap Y| \geq |A|+t$ for every non-empty subset $A \subseteq X$ of size at most $|X|-t$, 
then $|B\downseteq{} \cap X| \geq |B|+t$ for every non-empty set $B \subseteq Y$ of size at most $|Y|-t$.
Let $B \subset Y$ be a non-empty set of size at most $|Y| - t$.
Fix a subset $C$ of $B\downseteq \cap X$ as follows:
if $|B\downseteq{} \cap X| \geq t$, let $C$ be a subset of $B\downseteq{} \cap X$ of size $t$; otherwise let $C = B\downseteq{} \cap X$.
Let $D$ be any set of size $|C|$ in $Y \setminus B$.
We show that there is a perfect matching between $X \setminus C$ and $Y \setminus D$ in $(X \setminus C, Y \setminus D, {<})$.
If this holds, we deduce first that $C$ must have had $t$ elements and that the set $B\downseteq \cap X$ has size at least $|B|+t$.
This will complete the proof of our claim.
To show that $(X \setminus C, Y \setminus D, {<})$ has a perfect matching, choose a nonempty set $E$ in $X \setminus C$.
By \eqref{prop:node_surplus_definition_a}, the set $E\upseteq \cap Y$ has size at least $|E|+t$, so
$E\upseteq \cap (Y \setminus D)$ has size at least $|E|$.
So, Hall's condition holds in the bipartite poset $(X \setminus C, Y \setminus D, {<})$, which shows
that this poset contains a perfect matching.
\end{proof}

\subsection{Node tree}
Let $\Int(X,Y,{<})$ denote the set of all elements of $P$ that `lie inside' the node $(X,Y,{<})$, i.e., $\Int(X,Y,{<}) = \{x \in P: x \in X\upseteq{} \cap Y\downseteq{} \}.$
Since $(X,Y,{<})$ is regular, Dilworth Theorem asserts that the width of $(\Int(X,Y,{<}),{\leq})$ equals the width of the node $(X,Y,{<})$.

Suppose in the $t$-th round, $t \geq 3$, the antichain $A_t$ is inserted between two consecutive antichains $A_p$ and $A_s$ in $(\{A_1,\ldots,A_{t-1}\}, {\sqsubseteq})$.
The antichain $A_t$ gives rise to new nodes in bipartite posets $(A_p,A_t,{<})$ and $(A_t,A_s,{<})$.
The next proposition is a simple consequence of the fact that $(A_p,A_t,{<})$, $(A_t,A_s,{<})$, and $(A_p,A_s,{<})$ are regular.
\begin{proposition}
\label{prop:nodes_tree_structure}
For every node $M$ in $(A_p,A_t,{<})$ or in $(A_t,A_s,{<})$ there is a unique node $N$ in $(A_p,A_s,{<})$ such that 
$\Int(M) \subset \Int(N)$. 
\end{proposition}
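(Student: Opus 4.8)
The plan is to locate the node $N$ explicitly, then check the inclusion, its strictness, and uniqueness in turn. Write $M=(X,Y,{<})$, and treat first the case that $M$ is a node of $(A_p,A_t,{<})$, so $X\subseteq A_p$ and $Y\subseteq A_t$; the case $M$ a node of $(A_t,A_s,{<})$ is the mirror image, obtained by interchanging the roles of $A_p$ and $A_s$, of $\upseteq$ and $\downseteq$, and of the two bipartition classes of a node. I will use repeatedly two elementary facts about the $\sqsubseteq$-order $A_p\sqsubset A_t\sqsubset A_s$: no element of $A_t$ lies below an element of $A_p$, and no element of $A_s$ lies below an element of $A_t$ (if $b<a$ with $b\in A_t$, $a\in A_p$, pick $b'\in A_t$ with $a\leq b'$, which exists since $A_p\sqsubseteq A_t$; then $b<b'$ contradicts that $A_t$ is an antichain); dually, every $b\in A_t$ lies above some element of $A_p$ and below some element of $A_s$.

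\emph{Finding $N$ and the inclusion.} The $A_p$-parts of the nodes of $(A_p,A_s,{<})$ partition $A_p$, so it suffices to show that all elements of $X$ lie in a common node of $(A_p,A_s,{<})$; since $X\cup Y$ is connected, it is enough to see that two elements $a,a'\in X$ with a common neighbour $b\in Y\subseteq A_t$ in $(A_p,A_t,{<})$ lie in the same node of $(A_p,A_s,{<})$ — and indeed, choosing $c\in A_s$ with $b<c$ gives $a<c$ and $a'<c$, so $a,a',c$ lie in one component of $(A_p,A_s,{<})$. Let $N=(X',Y',{<})$ be the resulting node, so $X\subseteq X'$, and recall $|X'|=|Y'|\geq 1$ as $N$ is regular. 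For the inclusion, take $z\in\Int(M)=X\upseteq\cap Y\downseteq$: then $z\geq x$ for some $x\in X\subseteq X'$, so $z\in X'\upseteq$; also $z\leq y$ for some $y\in Y$, whence $x\leq z\leq y$ forces $x<y$, and picking $c\in A_s$ with $y<c$ gives $x<c$, so $c$ lies in the node of $(A_p,A_s,{<})$ containing $x$, i.e. $c\in Y'$; thus $z\leq y<c\in Y'$ and $z\in\Int(N)$. Hence $\Int(M)\subseteq\Int(N)$.

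\emph{Strictness and uniqueness.} Pick any $c\in Y'$ (nonempty by the above). Since $N$ is connected, $c>x'$ for some $x'\in X'$, so $c\in\Int(N)$; but $c\in A_s$ lies below no element of $A_t$, so $c\notin Y\downseteq\supseteq\Int(M)$, giving $c\in\Int(N)\setminus\Int(M)$ and $\Int(M)\subsetneq\Int(N)$. For uniqueness, suppose $\Int(M)\subseteq\Int(N_1)\cap\Int(N_2)$ with $N_i=(X_i',Y_i',{<})$ nodes of $(A_p,A_s,{<})$; every $x\in X$ satisfies $x\in X\subseteq\Int(M)\subseteq\Int(N_i)$, so $x\geq a_i$ for some $a_i\in X_i'\subseteq A_p$, and since $A_p$ is an antichain $a_i=x$, so $x\in X_i'$; thus $\emptyset\neq X\subseteq X_1'\cap X_2'$, and disjointness of the $A_p$-parts of distinct nodes forces $N_1=N_2$. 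The case $M\subseteq(A_t,A_s,{<})$ runs identically after the stated substitutions: a common neighbour $b\in A_t$ of $c,c'\in Y''$ is joined below to some $a\in A_p$, yielding $a<c$ and $a<c'$, and one argues with $A_s$-parts, downsets replaced by upsets, throughout.

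The only genuinely substantive point is the connectivity transfer in the first part — that a common neighbour in $A_t$ forces two elements of $A_p$ into one node of $(A_p,A_s,{<})$ — which rests on the two edge-direction observations; everything else is routine manipulation of upsets and downsets, and I anticipate no further obstacle beyond keeping the two symmetric cases straight.
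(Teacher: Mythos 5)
Your proof is correct, and it is worth noting that the paper offers no proof of this proposition at all: it is asserted as ``a simple consequence'' of the regularity of $(A_p,A_t,{<})$, $(A_t,A_s,{<})$, and $(A_p,A_s,{<})$, so your write-up supplies precisely the verification the authors suppress. Your decomposition is the natural one, and each step checks out: the connectivity-transfer step (from $a,a'<b$ with $b\in A_t$ and $b<c\in A_s$ one gets $a<c$ and $a'<c$, putting $a,a',c$ in one component of $(A_p,A_s,{<})$), the inclusion via the auxiliary element $c\in A_s$ above $y$, the strictness witness $c\in Y'$ with $c\notin Y\downseteq$, and uniqueness via $X\subseteq\Int(M)$ together with disjointness of the $A_p$-parts of distinct components. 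Two small remarks. First, the containment $X\subseteq\Int(M)$ used in the uniqueness step deserves its one-line justification (every vertex of a node has a neighbour on the other side, by connectedness and $|X|=|Y|\geq 1$, or directly from $A_p\subseteq A_t\downseteq$); you state it without comment, but it is routine. Second, your argument uses regularity only marginally (essentially just nonemptiness of the two sides of a node); what carries the proof is that $A_p\sqsubset A_t\sqsubset A_s$ are pairwise disjoint maximal antichains, so every element has neighbours in the adjacent levels and no ``backward'' comparabilities such as $A_s$ below $A_t$ can occur -- your two preliminary observations. This is a slightly more elementary route than the paper's phrasing suggests, and it is perfectly sound.
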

Let $\mathcal{N}$ be the set of all nodes in $(P,{\leq},A_1 \ldots A_k)$.
If $M$ and $N$ are nodes from $\mathcal{N}$ that are in the relation as described in Proposition \ref{prop:nodes_tree_structure},
then we say $M$ is a \emph{child} of $N$, denoted $M \subnode N$.
By Proposition \ref{prop:nodes_tree_structure}, $(\mathcal{N}, {\subnode})$ is a rooted tree with the root node $(A_1,A_2,{<})$.
We call the tree $(\mathcal{N},{\subnode})$ a \emph{node tree} of $(P,{\leq},A_1 \ldots A_k)$.
See Figure \ref{fig:regular_poset_tree} for an example.

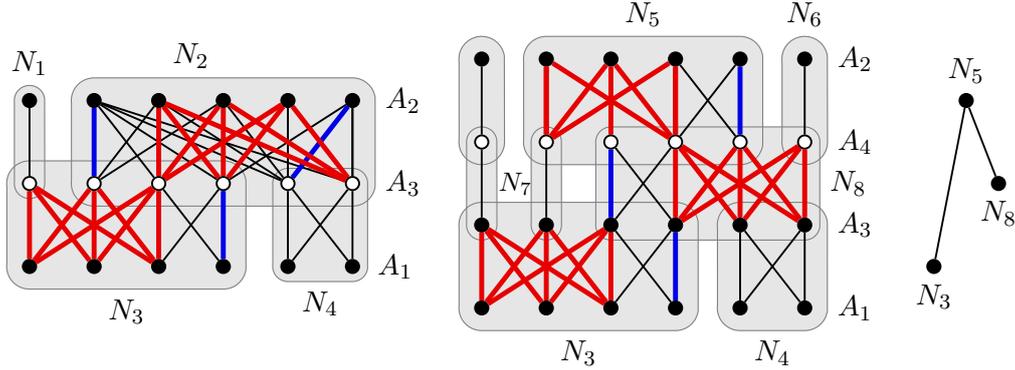
\begin{figure}[htbp!]
\begin{center}
\begin{tikzpicture}

\setlength{\hdist}{8.5mm}
\setlength{\vdist}{11mm}

\setlength{\nodeinner}{2mm}
\setlength{\nodesecinner}{3mm}

\newlength{\shiftPict}
\setlength{\shiftPict}{8\hdist}

\begin{scope}[shift={(0\hdist,0\vdist)}]

\def\n{3}
\def\m{6}
\foreach \i in {1,...,\n} {
\foreach \j in {1,...,\m} {
\coordinate (A\i\j) at (\j\hdist,\i\vdist);
}
}

\draw[node] \convexpath{A11,A21,A24,A14}{\nodesecinner};
\draw[node] \convexpath{A15,A25,A26,A16}{\nodeinner};
\draw[node] \convexpath{A21,A31}{\nodeinner};
\draw[node] \convexpath{A22,A32,A36,A26}{\nodesecinner};

\draw[nodeborder] \convexpath{A11,A21,A24,A14}{\nodesecinner};
\draw[nodeborder] \convexpath{A15,A25,A26,A16}{\nodeinner};
\draw[nodeborder] \convexpath{A21,A31}{\nodeinner};
\draw[nodeborder] \convexpath{A22,A32,A36,A26}{\nodesecinner};

\foreach \j in {1,...,\m} {\draw (A1\j) node[bvertex] (a1\j) {};}
\foreach \j in {1,...,\m} {\draw (A2\j) node[wvertex] (a2\j) {};}
\foreach \j in {1,...,\m} {\draw (A3\j) node[bvertex] (a3\j) {};}

\foreach \j in {1,...,\m} {
\path[edge] (a1\j) -- (a2\j) -- (a3\j);
}

\path[cedge] (a11) -- (a21);
\path[cedge] (a11) -- (a22);
\path[cedge] (a11) -- (a23);
\path[cedge] (a12) -- (a21);
\path[cedge] (a12) -- (a22);
\path[cedge] (a12) -- (a23);
\path[cedge] (a13) -- (a21);
\path[cedge] (a13) -- (a22);
\path[cedge] (a13) -- (a23);
\path[edge](a13) -- (a24);
\path[edge](a14) -- (a23);

\path[medge] (a14) -- (a24);

\path[edge](a15) -- (a26);
\path[edge](a16) -- (a25);

\path[edge](a22) -- (a33);
\path[edge](a22) -- (a34);
\path[edge](a23) -- (a32);
\path[edge](a23) -- (a33);
\path[edge](a23) -- (a34);
\path[edge](a24) -- (a32);
\path[edge](a24) -- (a33);
\path[edge](a24) -- (a34);
\path[edge](a24) -- (a35);
\path[edge](a24) -- (a36);
\path[edge](a25) -- (a32);
\path[edge](a25) -- (a33);
\path[edge](a25) -- (a34);
\path[edge](a25) -- (a35);
\path[edge](a25) -- (a36);
\path[edge](a26) -- (a32);
\path[edge](a26) -- (a33);
\path[edge](a26) -- (a34);
\path[edge](a26) -- (a35);
\path[edge](a26) -- (a36);

\path[medge](a22) -- (a32);
\path[medge](a25) -- (a36);

\path[cedge] (a23) -- (a33);
\path[cedge] (a23) -- (a34);
\path[cedge] (a23) -- (a35);
\path[cedge] (a24) -- (a33);
\path[cedge] (a24) -- (a34);
\path[cedge] (a24) -- (a35);
\path[cedge] (a26) -- (a33);
\path[cedge] (a26) -- (a34);
\path[cedge] (a26) -- (a35);

\node [right] at ($(a1\m)+(\nodeinner,0)$) {$A_{1}$};
\node [right] at ($(a2\m)+(\nodesecinner,0)$) {$A_{3}$};
\node [right] at ($(a3\m)+(\nodesecinner,0)$) {$A_{2}$};

\node [above] at ($(a31)+(0,\nodeinner)$) {$N_1$};
\node [above] at ($(barycentric cs:a33=0.5,a34=0.5)+(0,\nodesecinner)$) {$N_2$};
\node [below] at ($(barycentric cs:a12=0.5,a13=0.5)+(0,-\nodesecinner)$) {$N_3$};
\node [below] at ($(barycentric cs:a15=0.5,a16=0.5)+(0,-\nodeinner)$) {$N_4$};

\end{scope}

\begin{scope}[shift={(7\hdist,-0.5\vdist)}]

\def\n{4}
\def\m{6}
\foreach \i in {1,...,\n} {
\foreach \j in {1,...,\m} {
\coordinate (A\i\j) at (\j\hdist,\i\vdist);
}
}

\draw[node] \convexpath{A11,A21,A24,A14}{\nodesecinner};
\draw[node] \convexpath{A15,A25,A26,A16}{\nodesecinner};
\draw[node] \convexpath{A21,A31}{\nodeinner};
\draw[node] \convexpath{A22,A32}{\nodeinner};
\draw[node] \convexpath{A23,A33,A36,A26}{\nodeinner};
\draw[node] \convexpath{A31,A41}{\nodesecinner};
\draw[node] \convexpath{A32,A42,A45,A35}{\nodesecinner};
\draw[node] \convexpath{A36,A46}{\nodesecinner};

\draw[nodeborder] \convexpath{A11,A21,A24,A14}{\nodesecinner};
\draw[nodeborder] \convexpath{A15,A25,A26,A16}{\nodesecinner};
\draw[nodeborder] \convexpath{A21,A31}{\nodeinner};
\draw[nodeborder] \convexpath{A22,A32}{\nodeinner};
\draw[nodeborder] \convexpath{A23,A33,A36,A26}{\nodeinner};
\draw[nodeborder] \convexpath{A31,A41}{\nodesecinner};
\draw[nodeborder] \convexpath{A32,A42,A45,A35}{\nodesecinner};
\draw[nodeborder] \convexpath{A36,A46}{\nodesecinner};

\foreach \j in {1,...,\m} {\draw (A1\j) node[bvertex] (a1\j) {};}
\foreach \j in {1,...,\m} {\draw (A2\j) node[bvertex] (a2\j) {};}
\foreach \j in {1,...,\m} {\draw (A3\j) node[wvertex] (a3\j) {};}
\foreach \j in {1,...,\m} {\draw (A4\j) node[bvertex] (a4\j) {};}

\foreach \j in {1,...,\m} {
\path[edge] (a1\j) -- (a2\j) -- (a3\j) -- (a4\j);
}

\path[cedge] (a11) -- (a21);
\path[cedge] (a11) -- (a22);
\path[cedge] (a11) -- (a23);
\path[cedge] (a12) -- (a21);
\path[cedge] (a12) -- (a22);
\path[cedge] (a12) -- (a23);
\path[cedge] (a13) -- (a21);
\path[cedge] (a13) -- (a22);
\path[cedge] (a13) -- (a23);
\path[edge](a13) -- (a24);
\path[edge](a14) -- (a23);

\path[medge] (a14) -- (a24);

\path[edge](a15) -- (a26);
\path[edge](a16) -- (a25);

\path[edge](a23) -- (a34);
\path[edge](a24) -- (a33);
\path[edge](a24) -- (a36);
\path[edge](a25) -- (a34);
\path[edge](a25) -- (a35);
\path[edge](a25) -- (a36);

\path[medge] (a23) -- (a33);
\path[cedge] (a24) -- (a34);
\path[cedge] (a24) -- (a35);
\path[cedge] (a24) -- (a36);
\path[cedge] (a25) -- (a34);
\path[cedge] (a25) -- (a35);
\path[cedge] (a25) -- (a36);
\path[cedge] (a26) -- (a34);
\path[cedge] (a26) -- (a35);
\path[cedge] (a26) -- (a36);

\path[edge] (a34) -- (a45);
\path[edge] (a35) -- (a44);

\path[cedge] (a32) -- (a42);
\path[cedge] (a32) -- (a43);
\path[cedge] (a32) -- (a44);
\path[cedge] (a32) -- (a42);
\path[cedge] (a33) -- (a43);
\path[cedge] (a34) -- (a44);
\path[cedge] (a34) -- (a42);
\path[cedge] (a34) -- (a43);
\path[cedge] (a34) -- (a44);

\path[medge] (a35) -- (a45);

\node [right] at ($(a1\m)+(\nodesecinner,0)$) {$A_{1}$};
\node [right] at ($(a2\m)+(\nodesecinner,0)$) {$A_{3}$};
\node [right] at ($(a3\m)+(\nodesecinner,0)$) {$A_{4}$};
\node [right] at ($(a4\m)+(\nodesecinner,0)$) {$A_{2}$};

\node [below] at ($(barycentric cs:a12=0.5,a13=0.5)+(0,-\nodesecinner)$) {$N_3$};
\node [below] at ($(barycentric cs:a15=0.5,a16=0.5)+(0,-\nodesecinner)$) {$N_4$};
\node [above] at ($(barycentric cs:a43=0.5,a44=0.5)+(0,\nodesecinner)$) {$N_5$};
\node [above] at ($(a46)+(0,\nodesecinner)$) {$N_6$};
\node [left] at ($(barycentric cs:a22=0.5,a32=0.5)+(-0.5mm,0)$) {\begin{small}$N_7$\end{small}};
\node [right] at ($(barycentric cs:a26=0.5,a36=0.5)+(\nodeinner,0)$) {$N_8$};

\draw ($(\shiftPict,1.5\vdist)$) node[bvertex] (N3) {} node[below=1mm] {$N_3$};
\draw ($(\hdist+\shiftPict,2.5\vdist)$) node[bvertex] (N8) {} node[below=1mm] {$N_8$};
\draw ($(0.5\hdist+\shiftPict,3.5\vdist)$) node[bvertex] (N5) {} node[above=1mm] {$N_5$};
 
\path[edge] (N3) -- (N5);
\path[edge] (N8) -- (N5);

\end{scope}

\end{tikzpicture}
\end{center}
\caption[]{The presentation of the antichains $A_3$ and $A_4$ in a regular poset $(P,{\leq},A_1\ldots A_4)$ of width $6$. 
In the node tree: the nodes $N_1,N_2,N_3,N_4$ are the children of the node $(A_1,A_2,{<})$,
the nodes $N_5,N_6,N_7,N_8$ are the children of the node $N_2$.
The node $N_2$ has characteristics $(5,2)$ and is active: the edges of the Dilworth's clique of $N_2$ are colored red, 
a~perfect matching extending the Dilworth's clique is depicted with blue. 
The nodes $N_3,N_5,N_8$ have characteristics $(4,1)$ and all are active -- 
the partial order $(\mathcal{P}(4,1),{\leq_{(4,1)}})$ defined on the nodes $N_3,N_5,N_8$ is shown on the right.
The nodes $N_5$ and $N_8$ are added to $(\mathcal{P}(4,1),{\leq_{(4,1)}})$ at the time $A_4$ is presented, 
i.e., at the same time when $N_5$ and $N_8$ appear in the node tree.
}
\label{fig:regular_poset_tree}
\end{figure}
The next proposition shows that the pairs (width, surplus) of characteristics are weakly decreasing with respect to the
lexicographic order along root-to-leaf paths of the node tree $(\mathcal{N},{\subnode})$.
\begin{proposition}
\label{prop:characteristics_lexicographic_order}
Suppose $M$ is a descendant of $N$ in $(\mathcal{N},\subnode)$. Then:
\begin{enumerate}
 \item $\width(M) \leq \width(N)$.
 \item If $\width(M) = \width(N)$, then $\surplus(M) \leq \surplus(N)$.
\end{enumerate}
\end{proposition}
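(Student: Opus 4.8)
The plan is to reduce everything to the case of a direct child, and then to analyze the two ways a child can arise from the antichain that splits a node.

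\textbf{Reduction to a single child.} Suppose $A_t$ is inserted between the consecutive antichains $A_p \sqsubset A_s$, and let $N = (X,Y,{<})$ be a node of $(A_p,A_s,{<})$, with $Z = A_t \cap \Int(N)$. Every child $M \subnode N$ is a connected component of either $(X,Z,{<})$ or $(Z,Y,{<})$, so one side of $M$ is a subset of a side of $N$; this gives $\width(M) \leq \width(N)$, and transitivity along the tree path proves part (1) for arbitrary descendants. For part (2), part (1) forces the widths to be constant along the path from $N$ down to $M$ when $\width(M)=\width(N)$, so it suffices to prove $\surplus(M)\leq\surplus(N)$ when $M \subnode N$ and $\width(M)=\width(N)=n$, and then chain the inequality.

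\textbf{Structural facts about the split.} Here I would use a global perfect matching $\pi$ of the regular poset $(A_p,A_t,{<})$ to check that $\pi$ carries $X$ injectively into $Z$: for $x \in X$ we have $\pi(x) \geq x$, so $\pi(x) \in X\upseteq$; and since $\pi(x) \in A_t \sqsubset A_s$, there is $y' \in A_s$ with $\pi(x) \leq y'$, whence $x < y'$ makes $y'$ a neighbor of $x$ in $(A_p,A_s,{<})$, so $y' \in Y$ and $\pi(x) \in Y\downseteq$. Thus $\pi(x) \in \Int(N)\cap A_t = Z$; as $|X|=|Z|=n$, this restricts to a perfect matching of $(X,Z,{<})$, and symmetrically $(A_t,A_s,{<})$ yields a perfect matching of $(Z,Y,{<})$. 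In particular $|Z|=n$, so a child $M$ with $\width(M)=n$ has one full side of $(X,Z,{<})$ or of $(Z,Y,{<})$, which forces $M$ to be that entire bipartite poset (a full side is connected to everything, and the matching shows the opposite side is fully reached).

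\textbf{The surplus comparison.} This is the heart of the argument. I treat $M=(X,Z,{<})$; the case $M=(Z,Y,{<})$ is dual via the downset formulation of surplus from Proposition~\ref{prop:nodes}. The key monotonicity is that upsets only grow as one passes from the intermediate level $Z$ to the top level $Y$: for $A \subseteq X$, writing $Z_A = A\upseteq \cap Z$, transitivity gives $A\upseteq \cap Y \supseteq Z_A\upseteq \cap Y$, and the perfect matching of $(Z,Y,{<})$ gives $|Z_A\upseteq \cap Y| \geq |Z_A|$, so $|A\upseteq \cap Y| \geq |A\upseteq \cap Z|$. Now set $k=\surplus(M)$ and verify the surplus-$k$ condition for $N$. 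If $|A|+k \leq n$, then $|A\upseteq \cap Y| \geq |A\upseteq \cap Z| \geq |A|+k$. If $|A|+k > n$, then the definition of $\surplus(M)$ gives $|A\upseteq \cap Z| \geq \min\{|A|+k,|Z|\}=n$, so $A\upseteq \cap Z = Z$, and the matching of $(Z,Y,{<})$ forces $A\upseteq \cap Y = Y$. In both cases $|A\upseteq \cap Y| \geq \min\{|A|+k,|Y|\}$, so $\surplus(N) \geq k = \surplus(M)$, as required.

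I expect the main obstacle to be the structural bookkeeping of the middle step: confirming that the splitting antichain restricts to genuine perfect matchings of $(X,Z,{<})$ and $(Z,Y,{<})$ and that $|Z|=n$. Everything downstream — in particular the saturating case $|A|+k>n$, where I need $A\upseteq \cap Z = Z$ to propagate to $A\upseteq \cap Y = Y$ — relies on those matchings. Once they are established, the surplus inequality is a short monotonicity argument in the correct direction, namely $\surplus(M)\leq\surplus(N)$.
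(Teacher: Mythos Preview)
Your proof is correct. The paper's own proof is a one-line appeal to $\Int(M)\subset\Int(N)$ and the fact that $(\Int(N),{\leq})$ has width $\width(N)$, which directly settles part (1) but leaves part (2) entirely to the reader; your argument supplies exactly the missing content.

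Your route --- reduce to a direct child, identify that a same-width child must be the whole layer $(X,Z,{<})$ or $(Z,Y,{<})$, and then push upsets through a perfect matching --- is sound and self-contained. A slightly shorter variant, closer in spirit to the paper's hint, avoids the reduction to children: when $\width(M)=\width(N)=n$, the sets $X'$ and $Y'$ of $M=(X',Y',{<})$ are size-$n$ antichains inside $\Int(N)$, hence maximum antichains there, so there are perfect matchings $X\to X'$ and $Y'\to Y$ in $(\Int(N),{\leq})$; your monotonicity step ($|A\upseteq\cap Y|\geq|A\upseteq\cap X'|$ and then apply $\surplus(M)$, then push to $Y$) goes through verbatim with $X',Y'$ in place of $Z$. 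Both approaches use the same matching idea; yours just factors it through the node tree.
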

\begin{proof}
The lemma follows by the facts that the poset $(\Int(N),{\leq})$ has the width $\width(N)$ and that $\Int(M) \subset \Int(N)$.
\end{proof}

A set $\mathcal{F} \subseteq \mathcal{N}$ is \emph{ancestor-free} 
if neither node from $\mathcal{F}$ is a descendant of a node in $\mathcal{F}$.

\begin{proposition}
\label{prop:matchings}
Suppose $\mathcal{F}$ is an ancestor-free set in $(\mathcal{N}, {\subnode})$ 
and suppose a perfect matching $M(X,Y,{<})$ is fixed for every node $(X,Y,{<})$ in $\mathcal{F}$.
Let $F$ be the set of all vertices from $P$ incident to the nodes from $\mathcal{F}$,
i.e. $F = \bigcup_{(X,Y,{<}) \in \mathcal{F}} X \cup Y$.
Then, there is a chain partition of $(F,{\leq})$ into $w$ chains such that
for every node $(X,Y,{<})$ in $\mathcal{F}$ and for every edge $(a < b)$ in $M(X,Y,{<})$,
the elements $a,b$ are in the same chain of this chain partition.
\end{proposition}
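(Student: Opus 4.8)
\emph{Step 1 (the blocks).} I form the graph $H$ on the vertex set $F$ whose edges are all the matched pairs, $H=\bigcup_{(X,Y,{<})\in\mathcal F}M(X,Y,{<})$, regarded as comparability edges of $P$ pointing upward. I first observe that ancestor-freeness forces every $v\in F$ to lie in the $X$-part of \emph{at most one} node of $\mathcal F$ and in the $Y$-part of \emph{at most one}: if $v\in X_N\cap X_{N'}$ with $N\neq N'$ in $\mathcal F$, then $v$ belongs to one and the same antichain $A_p$ of the presentation at the stages of $N$ and of $N'$, and tracking which node $v$ sits in (as a bottom vertex) while the presentation refines below $A_p$ shows, via Proposition~\ref{prop:nodes_tree_structure}, that one of $N,N'$ is a descendant of the other, a contradiction; the $Y$-side is symmetric. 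Hence every $v\in F$ has $H$-degree at most $2$, with at most one incident edge going up (to its partner in a node where $v$ is on the $X$-side) and at most one going down, so the components of $H$ are paths, and since each edge is a strict comparability directed upward, each component is a chain of $P$. Call these chains the \emph{blocks}; they partition $F$, and every matched pair sits inside a single block. The goal becomes: \emph{partition $F$ into at most $w$ chains so that each block lies inside one chain.}

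\emph{Step 2 (a pitfall).} It is tempting to order the blocks by $\beta\prec\gamma$ iff $\max\beta<\min\gamma$ (a genuine strict partial order), apply Dilworth to $(\mathrm{blocks},\prec)$, and concatenate each $\prec$-chain of blocks into one chain of $P$. This does produce \emph{some} admissible partition, but the reduction is lossy: a chain of the desired partition may legitimately interleave several $\prec$-incomparable blocks, so the width of $(\mathrm{blocks},\prec)$ can exceed $w$, and one cannot insist that blocks occur contiguously inside their chains. Equivalently, in the standard split bipartite graph $B$ of $(F,{<})$ --- parts $\{x^-:x\in F\}$, $\{x^+:x\in F\}$, edges $x^-y^+$ for $x<y$, so a matching of size $m$ yields a chain partition into $|F|-m$ chains --- one is \emph{not} allowed to force the block-edges into the matching, only to ensure that all vertices of a block land on one directed path of successor-links. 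So a structural argument is needed.

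\emph{Step 3 (the structural input).} The key fact is that the interiors $\Int(N)$, $N\in\mathcal F$, are \emph{non-crossing}. A node sitting between consecutive antichains $A_p\sqsubset A_s$ can exist only at stages at which every antichain strictly between $A_p$ and $A_s$ is still absent; hence two nodes whose antichain-spans properly cross would force some antichain of the presentation to be simultaneously present (at one node's stage) and absent (at the other's), which is impossible, so spans are laminar. Combining span-laminarity with Proposition~\ref{prop:nodes_tree_structure} and ancestor-freeness, one gets that for distinct $N,N'\in\mathcal F$ the sets $\Int(N)$ and $\Int(N')$ are disjoint, except possibly for a single shared vertex that is simultaneously on the $Y$-side of one and the $X$-side of the other (a ``stacking'' vertex). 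This pins down the global shape of $F$ and of the blocks.

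\emph{Step 4 (the construction --- the main obstacle).} Start from any chain partition of $(F,{\leq})$ into $\width(F)\le w$ chains (Dilworth), and repair it at the nodes of $\mathcal F$ one region at a time, using that each node is a \emph{regular} bipartite poset: any comparability edge in a node extends to a perfect matching, and --- as in the proof of Proposition~\ref{prop:nodes} --- deleting a vertex from $X$ together with a vertex from $Y$ still leaves a perfect matching by Hall's condition, since the surplus of a node is at least $1$. This robustness lets one re-route the current chains inside $\Int(N)$ so that the prescribed matching $M(N)$ becomes realized (each matched pair ending up on a common chain), while the non-crossing structure of Step~3 guarantees that the re-routings performed at different nodes of $\mathcal F$ do not interfere with one another and that the number of chains never rises above $w=\width(P)$. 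I expect this coordination of the local re-routings across all of $\mathcal F$ --- showing that a repair at one node can always be carried out without undoing earlier repairs and without creating a $(w+1)$-st chain --- to be the technical heart of the argument; Steps~1--3 are routine preparation.
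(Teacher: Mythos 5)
There is a genuine gap: your Step 4 is not an argument but a declaration that an argument should exist. The whole content of the proposition is precisely the claim you defer --- that local re-routings at the nodes of $\mathcal{F}$ can be made simultaneously, without interfering with one another and without ever needing a $(w+1)$-st chain --- and you explicitly leave it open (``I expect this coordination \dots to be the technical heart''). Nothing in Steps 1--3 supplies it: the block structure and the failure of naive Dilworth (your Step 2 observation is correct, and indeed the statement is false for arbitrary disjoint $2$-chains in a width-$w$ poset, so the node structure must enter) only tell you what has to be proved, not how. Moreover the structural claim in Step 3 is wrong as stated: interiors of two tree-incomparable nodes of $\mathcal{F}$ need not meet in at most one vertex --- two siblings created when an antichain $A_t$ is inserted inside a parent node can share a large subset of $A_t$ (up to $w$ vertices), one having it as its $Y$-side and the other as its $X$-side. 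This does not hurt your block decomposition, but it undercuts the ``non-crossing'' picture on which the unproven Step 4 is supposed to rest.

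The missing idea, and the route the paper takes, is to avoid any repair scheme altogether: extend $\mathcal{F}$ to the family $\mathcal{A}$ consisting of $\mathcal{F}$ together with \emph{all leaves of $(\mathcal{N},{\subnode})$ that are not descendants of nodes of $\mathcal{F}$}, and fix a perfect matching $M(N)$ for each added node as well. The set $\mathcal{A}$ is still ancestor-free, and now it covers a full cross-section of the poset: between any two antichains that are consecutive in the final presentation, every one of the $w$ positions is met either by a matching edge of a leaf node or is passed over by a single ``jump'' edge of an $\mathcal{F}$-node whose span contains that gap. Consequently the transitive closure of the union $\bigcup_{N\in\mathcal{A}} M(N)$ is already a partial order consisting of exactly $w$ chains on the set of all vertices incident to $\mathcal{A}$ (each vertex has at most one matched edge up and one down, and the chains thread through every level), and restricting these $w$ chains to $F$ gives the desired partition in which every matched pair of every node of $\mathcal{F}$ lies on a common chain. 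In short: instead of starting from an arbitrary Dilworth partition and re-routing it, one completes the ancestor-free family so that the prescribed matchings themselves generate the $w$ chains; this is the step your proposal lacks.
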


\begin{proof}
Let $\mathcal{A}$ denote the set $\mathcal{F}$ extended by all nodes $N$ such that $N$ is a leaf of $(\mathcal{N},{\subnode})$
that is not a descendant of a node from $\mathcal{F}$.
Fix a perfect matching $M(N)$ for each node $N \in \mathcal{A} \setminus \mathcal{F}$.
Now, $M(N)$ is fixed for every $N$ in $\mathcal{A}$.
Let $A$ be the set of all vertices from $P$ incident to the nodes from~$\mathcal{A}$.
Clearly, $F \subseteq A$.
Let ${<_M}$ be the transitive closure of the set $\{(a, b): (a < b) \text{ is an edge from } M(N) \text{ for } N \in \mathcal{A} \}$.
Note that $(A,{\leq_M})$ is a partial order that consists of $w$ chains $C_1,\ldots,C_w$ -- see Figure \ref{fig_chains}.
Clearly, the restriction of $C_1,\ldots,C_w$ to the set $F$ yields the desired chain partition.
\end{proof}
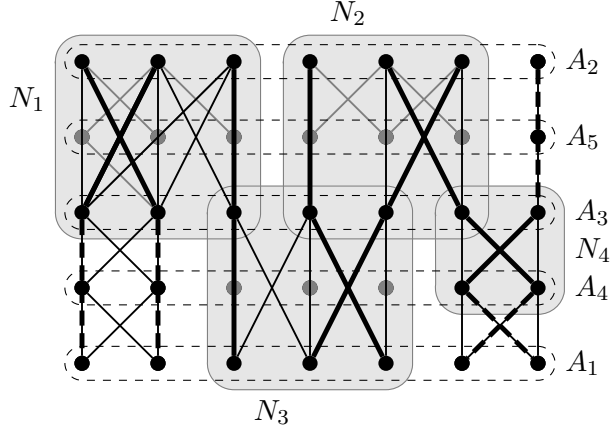
\begin{figure}[htbp!]
\begin{center}
\begin{tikzpicture}

\def\n{7}
\def\m{5}
\foreach \i in {1,...,\n} {
\foreach \j in {1,...,\m} {
\coordinate (a\j\i) at ($(\i\hdist,\j\vdist)$);
\draw ($(\i\hdist,\j\vdist)$) node[bvertex] (a\j\i) {};
}
}

\draw[node] \convexpath{a51,a53,a33,a31}{\nodeinner};
\draw[node] \convexpath{a33,a35,a15,a13}{\nodeinner};
\draw[node] \convexpath{a54,a56,a36,a34}{\nodeinner};
\draw[node] \convexpath{a36,a37,a27,a26}{\nodeinner};

\draw[nodeborder] \convexpath{a51,a53,a33,a31}{\nodeinner};
\draw[nodeborder] \convexpath{a33,a35,a15,a13}{\nodeinner};
\draw[nodeborder] \convexpath{a54,a56,a36,a34}{\nodeinner};
\draw[nodeborder] \convexpath{a36,a37,a27,a26}{\nodeinner};

\foreach \i in {1,...,\n} {
\foreach \j in {1,...,\m} {
\draw (a\j\i) node[bvertex] {};
}
}

\draw (a41) node[gvertex] {};
\draw (a42) node[gvertex] {};
\draw (a43) node[gvertex] {};
\draw (a44) node[gvertex] {};
\draw (a45) node[gvertex] {};
\draw (a46) node[gvertex] {};

\draw (a23) node[gvertex] {};
\draw (a24) node[gvertex] {};
\draw (a25) node[gvertex] {};

\node [right] at ($(a57)+(\antinner,0)$) {$A_2$};
\node [right] at ($(a47)+(\antinner,0)$) {$A_5$};
\node [right] at ($(a37)+(\nodeinner,0)$) {$A_3$};
\node [right] at ($(a27)+(\nodeinner,0)$) {$A_4$};
\node [right] at ($(a17)+(\antinner,0)$) {$A_1$};

\path[gedge] (a51) -- (a42);
\path[gedge] (a41) -- (a52);
\path[gedge] (a52) -- (a43);
\path[gedge] (a42) -- (a53);

\path[gedge] (a54) -- (a45);
\path[gedge] (a44) -- (a55);
\path[gedge] (a55) -- (a46);
\path[gedge] (a45) -- (a56);

\path[gedge] (a41) -- (a32);
\path[gedge] (a31) -- (a42);

\foreach \i in {1,...,\n} {
\path[edge] (a5\i) -- (a1\i);
}

\path[bedge] (a52) -- (a31);
\path[bedge] (a51) -- (a32);
\path[edge](a53) -- (a31);
\path[edge](a53) -- (a32);
\path[edge](a52) -- (a33);
\path[bedge] (a52) -- (a31);

\path[bedge] (a53) -- (a33);
\path[bedge] (a54) -- (a34);
\path[bedge] (a56) -- (a35);
\path[bedge] (a55) -- (a36);
\path[bdedge] (a57) -- (a37);

\path[bdedge] (a31) -- (a11);
\path[edge](a31) -- (a22);
\path[edge](a32) -- (a21);
\path[edge](a21) -- (a12);
\path[edge](a22) -- (a11);
\path[bdedge] (a32) -- (a12);

\path[bedge] (a33) -- (a13);
\path[edge]  (a33) -- (a14);
\path[edge]  (a34) -- (a13);
\path[bedge] (a34) -- (a15);
\path[bedge] (a35) -- (a14);

\path[bedge]  (a36) -- (a27);
\path[bedge]  (a37) -- (a26);
\path[edge]   (a26) -- (a17);
\path[edge]   (a27) -- (a16);
\path[bdedge] (a26) -- (a17);
\path[bdedge] (a27) -- (a16);

\foreach \j in {1,...,\m} {
\draw[draw=black,dashed] \convexpath{a\j7,a\j1}{\antinner};
}

\node [left] at ($(barycentric cs:a41=0.5,a51=0.5)+(-\nodeinner,0)$) {$N_1$};
\node [above] at ($(barycentric cs:a54=0.5,a55=0.5)+(0,\nodeinner)$) {$N_2$};
\node [below] at ($(barycentric cs:a13=0.5,a14=0.5)+(0,-\nodeinner)$) {$N_3$};
\node [right] at ($(barycentric cs:a27=0.5,a37=0.5)+(\nodeinner,0)$) {$N_4$};

\end{tikzpicture}
\end{center}
\caption{The nodes from $\mathcal{F}$ are shaded.
The matchings $M(N)$ are marked in bold lines for $N \in \mathcal{F}$ and with bold dotted lines for $N \in 
\mathcal{A} \setminus \mathcal{F}$.}
\label{fig_chains}
\end{figure}

\subsection{Edge orders}
For a set of nodes $\mathcal{F}$ from $\mathcal{N}$, we denote by $\mathcal{F}_E$ the set of all edges in the nodes from
$\mathcal{F}$.
Furthermore, we equip the set $\mathcal{F}_E$ with a partial order relation $\leq_{E}$ defined:
$$(a < b) \leq_{E} (c < d) \text { iff } b \leq c, \text{ for every } (a < b), (c < d) \in \mathcal{F}_E.$$
The poset $(\mathcal{F}_E, {\leq_E})$ is called an \emph{edge poset} on $\mathcal{F}$.
The next lemma proves a polynomial upper bound on the width of an edge poset provided it is grounded on an ancestor-free set of nodes.

\begin{lemma}
\label{lem:ancestor_free_edge_order_width}
Suppose $\mathcal{F}$ in an ancestor-free set of nodes in $(\mathcal{N}, \subnode)$.
Then, the edge poset $(\mathcal{F}_E, {\leq_E})$ has width at most $w^3$.
\end{lemma}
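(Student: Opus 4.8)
The goal is to bound the width of $(\mathcal{F}_E,{\leq_E})$ by $w^3$, so by Dilworth it suffices to exhibit an antichain $\mathcal{S}$ in $(\mathcal{F}_E,{\leq_E})$ of size more than $w^3$ and derive a contradiction; equivalently, I want to show every antichain has size at most $w^3$. An antichain in the edge poset is a set of edges $(a_i < b_i)$, pairwise incomparable under $\leq_E$, meaning for no two of them do we have $b_i \leq a_j$. The plan is to chop such an antichain up according to which node each edge belongs to, and then, within a single node, according to the structure of $P$. Since $\mathcal{F}$ is ancestor-free, the interiors $\Int(N)$ for $N \in \mathcal{F}$ are pairwise disjoint (two nodes with intersecting interiors would have one contained in the other by Proposition~\ref{prop:nodes_tree_structure}), and each edge $(a<b)$ of a node $N$ has both endpoints inside $\Int(N)$; so an edge determines its node uniquely. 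Thus it suffices to show: (i) the number of nodes $N \in \mathcal{F}$ that can contribute an edge to a fixed antichain is at most $w$; and (ii) inside a single node $N$, any antichain of $(\mathcal{F}_E,{\leq_E})$ restricted to $N$'s edges has at most $w^2$ elements. Multiplying gives $w^3$.

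For (i): if $N_1, \ldots, N_m$ are distinct nodes each contributing an antichain edge, pick one such edge $e_i = (a_i < b_i)$ in $N_i$. I claim the $a_i$'s (say) can be chosen to form an antichain in $(P,\leq)$, or more carefully: since the node interiors are disjoint and each node sits between two consecutive antichains of some presentation stage, the nodes of $\mathcal{F}$ are ``horizontally'' arranged, and an edge of $N_i$ comparable in $P$ to a vertex of $N_j$ would force $e_i \leq_E e_j$ or $e_j \leq_E e_i$ unless $N_i, N_j$ are ``side by side''. Concretely I will argue that if $e_i$ and $e_j$ are $\leq_E$-incomparable and lie in different nodes, then $a_i$ and $a_j$ are incomparable in $(P,\leq)$: if say $a_i \leq a_j$ then, using that $N_i$ is regular so $a_i \leq b_i'$ for a matching partner and chasing comparabilities through the consecutive-antichain structure, one gets $b_i \leq a_j$, contradicting incomparability of the edges. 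Hence $\{a_i\}$ is an antichain in $(P,\leq)$, forcing $m \leq w$.

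For (ii): fix one node $N = (X,Y,{<})$ and consider an antichain $\mathcal{S}$ among its edges. Write $\mathcal{S} = \{(a_i < b_i)\}$. I want $|\mathcal{S}| \leq w^2 = \width(N)^2 \leq w^2$ (since $\width(N) \leq w$). The edges of $\mathcal{S}$ live on the bipartite poset $N$; pairwise $\leq_E$-incomparability here just says no $b_i \leq a_j$, but since all $a$'s are in $X$ and all $b$'s in $Y$ with $X \sqsubset Y$ inside $\Int(N)$, comparability $b_i \leq a_j$ never happens anyway unless the node is further subdivided inside $\Int(N)$ — ah, but this is exactly the point: $\Int(N)$ may contain many antichains of the presentation, and $b_i$ might lie below $a_j$ via those intermediate levels. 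So within $N$ the count is genuinely the width of the edge poset of a single regular bipartite poset; I will bound this by $|X|\cdot|Y| = \width(N)^2$ trivially (each edge is determined by its pair of endpoints, and an antichain certainly has at most as many elements as there are edges, but that's not $\leq w^2$ in general) — so instead I bound it by observing that an antichain of edges, projected to $X$, uses each element of $X$ at most... no. The correct route: map each edge $(a<b) \in \mathcal{S}$ to the pair $(a,b) \in X \times Y$; distinct edges give distinct pairs, and there are at most $|X||Y| \leq \width(N)^2 \leq w^2$ pairs — but we need the antichain restricted to a node to be small, and there could be up to $\width(N)^2$ edges total in a regular node only if it is complete, which is fine since $\width(N)^2 \leq w^2$.

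\textbf{The main obstacle} will be step (i): carefully showing that the edges contributed by distinct nodes of an ancestor-free family, when pairwise $\leq_E$-incomparable, project to a $P$-antichain. This requires exploiting both the disjointness of interiors and the regularity (matching-extendability) of the bipartite posets between consecutive presentation antichains to ``lift'' a comparability of two matched endpoints into a comparability $b_i \leq a_j$ of the edges themselves. I expect to prove a helper claim of the form: if $N_i \neq N_j$ are nodes in an ancestor-free set and $x \in \Int(N_i)$, $y \in \Int(N_j)$ satisfy $x \leq y$ in $P$, then every edge of $N_i$ is $\leq_E$-below every edge of $N_j$; granting this, incomparable edges from distinct nodes force their whole interiors to be $P$-incomparable, so in particular the chosen $a_i$ form an antichain. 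Steps (i) and (ii) together, via $|\mathcal{S}| \leq \sum_{N} |\mathcal{S} \cap N_E| \leq w \cdot w^2 = w^3$, complete the argument.
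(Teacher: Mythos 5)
Your reduction to the two counts (at most $w$ contributing nodes, at most $w^2$ edges per node) is exactly the paper's skeleton, and step (ii) is fine for the trivial reason you end up with (a node of width $u\leq w$ has at most $u^2\leq w^2$ edges). The gap is in step (i): both the helper claim and the weaker claim you actually rely on are false. Counterexample with $w=2$: present $A_1=\{u_1,u_2\}$ and $A_2=\{t_1,t_2\}$ (complete bipartite), then insert $A_3=\{v_1,v_2\}$ with $(A_1,A_3,{<})$ complete bipartite and $(A_3,A_2,{<})$ the perfect matching $v_1<t_1$, $v_2<t_2$; all regularity conditions hold. The node $N=(A_1,A_3,{<})$ and the node $K=(\{v_1\},\{t_1\},{<})$ are siblings in the node tree, hence ancestor-free. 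The edges $e=(u_1<v_2)$ of $N$ and $f=(v_1<t_1)$ of $K$ are $\leq_E$-incomparable (since $v_2\not\leq v_1$, both lying in the antichain $A_3$, and $t_1\not\leq u_1$), yet their lower endpoints are comparable: $u_1\leq v_1$. So incomparability of edges from distinct ancestor-free nodes does not force incomparability of their lower endpoints, and the chain ``$a_i\leq a_j$ implies $b_i\leq a_j$'' breaks precisely because a comparability can enter node $K$ through a vertex ($v_1$) other than the matched partner of $a_i$. (The same example also refutes your parenthetical claim that interiors of ancestor-free nodes are pairwise disjoint: $v_1\in\Int(N)\cap\Int(K)$; and note that comparabilities in a regular poset need not be mediated by intermediate antichains, so ``chasing through the consecutive-antichain structure'' has no foothold.) Hence your projection of the antichain to a $P$-antichain of lower endpoints does not exist in general, and the bound of $w$ contributing nodes is not established.

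What is missing is a global, not local, argument, and this is where the paper uses Proposition \ref{prop:matchings}: for each node of $\mathcal{F}$ containing an edge of the given $\leq_E$-antichain, pick one such edge and extend it to a perfect matching of that node (possible by regularity); Proposition \ref{prop:matchings} then yields a partition of all vertices of these nodes into $w$ chains in which every matched pair, in particular every chosen edge, lies inside one chain. If $w+1$ nodes contributed, two chosen edges would share a chain, and for edges of two distinct ancestor-free nodes whose four endpoints lie on a common chain one gets $\leq_E$-comparability, contradicting that they came from an antichain. So the node count $\leq w$ is obtained by pigeonholing against a matching-compatible Dilworth partition, not by showing the chosen endpoints form an antichain (which, as above, they need not). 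If you want to salvage your outline, you should replace step (i) by this argument, i.e., prove or invoke the analogue of Proposition \ref{prop:matchings}; without it the proof does not go through.
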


\begin{proof}
Let $A$ be an antichain in $(\mathcal{F}_E, {\leq_E})$ and let $\mathcal{A}$ consist of nodes that contain at least one edge from $A$.
We claim that $\mathcal{A}$ contains at most $w$ nodes. 
If this holds, $A$ contains at most $w^3$ edges as every node has at most $w^2$ edges, and the lemma follows.
To prove the claim, suppose $\mathcal{A}$ contains at least $w+1$ nodes.
For every node $N$ from $\mathcal{A}$, first we choose an edge $(a < b)$ in $N$ such that $(a < b) \in A$ and then we choose a perfect matching $M(N)$ containing the edge $(a < b)$. 
Note that such a matching exists as $N$ is regular.
Since $\mathcal{A}$ is ancestor-free, by Proposition \ref{prop:matchings}, we can partition the poset $(A,{\leq})$ into $w$ chains 
such that for every node $N \in \mathcal{A}$ and every edge $(a < b)$ in $M(N)$, the vertices $a$ and $b$ are contained in the same chain. 
Now, since the matchings $M(N)$ for $N \in \mathcal{A}$ contain at least $w+1$ edges from $A$, 
some two edges from this set share the same chain of this chain partition.
Since $\mathcal{A}$ is ancestor-free, these two edges are $\leq_{E}$-comparable.
This contradicts the fact that $A$ is an antichain in $(\mathcal{F}_E, {\leq_E})$.
\end{proof}

\section{The algorithm partitioning regular posets -- general idea}
In this section we present the main ideas used in the construction of the on-line algorithm partitioning 
regular posets of width $w$ into $w^{O(\log{\log{w}})}$ chains.
Let $(P,{\leq},A_1 \ldots A_k)$ be a regular poset of width $w$ and let $(\mathcal{N},{\subnode})$ be its node tree.
Instead of partitioning the elements of $(P,{\leq})$ into chains,
the algorithm colors on-line the edges of the nodes from $(\mathcal{N},{\subnode})$.
That is, each edge of a node $N \in \mathcal{N}$ is assigned a non-empty set (a bundle) of colors immediately after $N$ appears in the node tree $(\mathcal{N}, {\subnode})$ 
such that the following property is kept:
\begin{equation}
\label{eq:prop_edge_coloring}
\begin{array}{c}
\text{For every color $\gamma$, the endpoints of the edges that have $\gamma$ in its bundle of colors} \\
\text{form a chain in $(P,{\leq})$.}
\tag{*}
\end{array}
\end{equation}
The next step is easy.
In every round $t$, to every vertex $x \in A_t$ the algorithm assigns a color from some edge incident to $x$.
Condition \eqref{eq:prop_edge_coloring} guarantees that all points with the same color lie in one chain of $(P,{\leq})$.

Let $\lambda:\mathbb{N} \to \mathbb{N}$ be a function defined recursively such that $\lambda(1) = 1$ and
$$\lambda(w) = \underbrace{ \Bigg( 16\lfloor w^{\frac{15}{2}} \rfloor \lambda(\lfloor \sqrt{w} \rfloor) \Bigg)}_{\lambda_1(w)} \cdot \underbrace{\Bigg(2\binom{w^3+1}{2}+2 \Bigg)}_{\lambda_2(w)} \cdot \underbrace{\Bigg(w^4\Bigg(2\binom{w^3+1}{2}+2\Bigg)\lambda(\lfloor \sqrt{w} \rfloor)\Bigg)}_{\lambda_3(w)}$$
for $w \geq 2$, and let $\lambda_1, \lambda_2, \lambda_3: \mathbb{N} \to \mathbb{N}$ be as defined above.
So, $\lambda(w) \leq poly(w)\cdot \lambda^2(\sqrt{w})$, where $poly(w)$ is a polynomial of degree at most $24$.
Performing simple calculations one can check that:
$$\lambda(w) = w^{O(\log{\log{w}})}.$$
The algorithm we are going to present works such that the following invariant is kept:
\begin{equation}
\label{eq:P-property}
\begin{array}{c}
\text{If every edge of $(A_1,A_2,{<})$ is covered by a bundle of $\lambda(w)$ colors, then all remaining} \\
\text{nodes of $\mathcal{N}$ are colored with these colors such that condition \eqref{eq:prop_edge_coloring} is satisfied.}
\end{array}
\tag{P}
\end{equation}
If \eqref{eq:P-property} holds, the algorithm partitions regular posets of width $w$ into at most $w^2 \lambda(w)$ chains as $(A_1,A_2,{<})$ has $w^2$ edges.
Since the algorithm uses the reduction from Lemma \ref{lem:reg}, it partitions on-line posets of width $w$ into $w^3 \lambda(w)$ chains.
Hiding the factor $w^3$ under big $O$ notation, we conclude that the algorithm uses $w^{O(\log{\log{w})}}$ chains in partitioning width $w$ posets.
This proves the main result of the paper.

\subsection{An overview of the techniques used}
In this subsection we describe some techniques used frequently by the algorithm.
\subsubsection{Replacing chains by colors}
Suppose $N$ is a node and $\mathcal{F}$ is a set of some descendants of $N=(X,Y,<)$.
Suppose that an external procedure $\mathbb{P}$ partitions on-line 
the edge poset $(\mathcal{F}_E, {\leq}_E)$ into at most $k$ chains.
The use of the \emph{replacing-chains-by-colors} method
allows us to achieve the following goal:
if each edge of $N$ is colored by a bundle of $k$ colors, 
then, with the use of these colors, we can color the edges from $\mathcal{F}_E$ such that \eqref{eq:prop_edge_coloring}-property is kept.
To attain our goal, we match $k$ colors assigned to every edge of $N$ with $k$ chains produced by~$\mathbb{P}$.
Then, if $\mathbb{P}$ assigns an edge $(z < t)$ from $\mathcal{F}_E$ to a chain $c$,
we take an edge $(x < y)$ from~$N$ such that $x \leq z < t \leq y$ and color $(z < t)$ with a color from a bundle of $(x < y)$ matched to the chain $c$.
That such an edge $(x < y)$ exists follows from the fact that the width of $(\Int(N),{\leq})$ equals to $\width(N)$.
See Figure~\ref{fig_three_techniques} on the left for an example.

\begin{figure}[htbp!]
\begin{center}
\begin{tikzpicture}

\begin{scope}[shift={(0,0)}]

\setlength{\hdist}{10mm}
\setlength{\vdist}{6.2mm}

\def\n{7}
\foreach \i in {1,...,\n} {
\draw (\i\hdist,0) node[bvertex] (x\i) {};
\draw (\i\hdist,7\vdist) node[bvertex] (y\i) {};
\path[dedge] (x\i) -- (y\i);
}

\draw (2\hdist,1\vdist) node[bvertex] (z2) {};
\draw (3\hdist,2\vdist) node[bvertex] (z3) {};
\draw (5\hdist,3\vdist) node[bvertex] (z5) {} node[above left=0.5mm] (Z) {$z$};
\draw (6\hdist,4\vdist) node[bvertex] (z6) {} node[right=1mm] (T) {$t$};

\draw (3\hdist,3\vdist) node[bvertex] (t3) {};
\draw (4\hdist,4\vdist) node[bvertex] (t4) {};
\draw (5\hdist,5\vdist) node[bvertex] (t5) {};
\draw (6\hdist,6\vdist) node[bvertex] (t6) {};

\draw (x5) node[bvertex] {} node[left=1mm] (X) {$x$};
\draw (y6) node[bvertex] {} node[right=1mm] (Y) {$y$};

\node [left,fill=white] at ($(barycentric cs:x2=0.3,y3=0.7)+(0mm,0)$) {$1'\!,2'$};
\node [left,fill=white] at ($(barycentric cs:x3=0.2,y4=0.8)+(2mm,0)$) {$1''\!,2''$};
\node [left,fill=white] at ($(barycentric cs:x5=0.8,y6=0.2)+(2mm,0)$) {$1'''\!,2'''$};

\node [right,fill=white] at ($(barycentric cs:z2=0.5,z3=0.5)+(-0.6mm,-1.1mm)$) {$2'$};
\node [left,fill=white] at ($(barycentric cs:z5=0.5,z6=0.5)+(2mm,1mm)$) {$2'''$};
\node [left,fill=white] at ($(barycentric cs:t3=0.55,t4=0.45)+(2mm,1mm)$) {$1''$};
\node [left,fill=white] at ($(barycentric cs:t5=0.5,t6=0.5)+(2mm,1mm)$) {$1'''$};

\path[edge] (z2) -- (z3);
\path[dedge] (z3) -- (z5);
\path[edge] (z5) -- (z6);

\path[edge] (t3) -- (t4);
\path[dedge] (t4) -- (t5);
\path[edge] (t5) -- (t6);

\draw[edge] (x2) to [bend left=5] (y3);
\draw[edge] (x3) to [bend left=-7] (y4);
\draw[edge] (x5) to [bend left=-5] (y6);

\draw[draw,dashed] \convexpath{x1,x\n}{\antinner};
\draw[draw,dashed] \convexpath{y1,y\n}{\antinner};
\draw[draw] \convexpath{x1,y1,y\n,x\n}{\nodeinner};

\node [right] at ($(x\n)+(\nodeinner,0)$) {$X$};
\node [right] at ($(y\n)+(\nodeinner,0)$) {$Y$};
\draw ($(barycentric cs:x6=0.7,x7=0.3)+(0,1\vdist)$) node[fill=white] {$\Int(N)$};
\end{scope}

\begin{scope}[shift={(8.5\vdist,0\hdist)}]

\setlength{\hdist}{10mm}
\setlength{\vdist}{11mm}

\def\n{5}
\foreach \i in {1,...,\n} {
\draw (\i\hdist,0) node[bvertex] (x\i) {};
\draw (\i\hdist,1.5\vdist) node[bvertex] (z\i) {};
\draw (\i\hdist,2.5\vdist) node[bvertex] (t\i) {};
\draw (\i\hdist,4\vdist) node[bvertex] (y\i) {};
\path[dedge] (x\i) -- (z\i);
\path[gedge](z\i) -- (t\i);
\path[dedge] (t\i) -- (y\i);
}

\draw (y4) {} node[right=1mm] {$y$};
\draw (t4) {} node[right=1mm] {$t$};
\draw (z3) {} node[left=1mm] {$z$};
\draw (x3) {} node[left=1mm] {$x$};

\path[gedge](z1) -- (t2);
\path[gedge](z2) -- (t3);
\path[gedge](z4) -- (t5);
\path[gedge](z3) -- (t1);
\draw[gedge](z4) to [bend right=20] (t3);
\path[gedge](z5) -- (t4);
\draw[bedge] (y4) to [bend left=9] (x3);
\path[bedge] (z3) -- node [midway,left=0mm] {$c$} (t4);

\node [right] at ($(barycentric cs:y4=0.5,t4=0.5)+(-5mm,0)$) {$c$};

\draw[draw,dashed] \convexpath{x1,x\n}{\antinner};
\draw[draw,dashed] \convexpath{z1,z\n}{\antinner};
\draw[draw,dashed] \convexpath{t1,t\n}{\antinner};
\draw[draw,dashed] \convexpath{y1,y\n}{\antinner};
\draw[draw] \convexpath{x1,y1,y\n,x\n}{\nodeinner};

\node [right] at ($(x\n)+(\nodeinner,0)$) {$X$};
\node [right] at ($(barycentric cs:x\n=0.5,z\n=0.5)+(\nodeinner,0)$) {$M_1$};
\node [right] at ($(z\n)+(\nodeinner,0)$) {$A_p'$};
\node [right] at ($(t\n)+(\nodeinner,0)$) {$A_s'$};
\node [right] at ($(barycentric cs:y\n=0.5,t\n=0.5)+(\nodeinner,0)$) {$M_2$};
\node [right] at ($(y\n)+(\nodeinner,0)$) {$Y$};
\end{scope}

\end{tikzpicture}
\end{center}
\caption{}
\label{fig_three_techniques}
\end{figure}

\subsubsection{Projecting colors}
Suppose $N=(X,Y,{<})$ is a node and suppose $A_p \sqsubset A_s$ are two antichains
consecutive at some stage of the presentation of $(P,{\leq},A_1 \ldots A_k)$ such that the sets
$A_p' = \Int(N) \cap A_p$ and $A_s'=\Int(N) \cap A_s$ are non-empty.
Suppose $\mathcal{F}$ is a set of some nodes spanned between $A_p'$ and $A_s'$.
The use of the \emph{projecting-colors} method allows us to achieve the following goal:
if each edge is colored with one color, each with a different one, 
then, with the use of these colors, we can color
the edges from $\mathcal{F}_E$ such that \eqref{eq:prop_edge_coloring}-property is kept.
To attain our goal, we pick a perfect matching $M_1$ from $X$ to $A'_p$ in $(X, A'_p, {<})$
and a perfect matching $M_2$ from $A'_s$ to $Y$ in $(A'_s,Y, {<})$;
such matchings exist as $(\Int(N),{\leq})$ has width $\width(N)$.
Eventually, each edge $(z < t) \in \mathcal{F}_E$ receives the color
from an edge $(x < y)$ of $N$, where $x$ is matched with $z$ in $M_1$ and $t$ is matched with $y$ in $M_2$.
See Figure \ref{fig_three_techniques} on the right for an example.

\subsubsection{Shuffling colors}
Suppose $N=(X,Y,{<})$ is a complete bipartite poset and suppose
$N$ has two children $M' = (X,Z,{<})$ and $M'' = (Z,Y,{<})$ being also complete bipartite posets.
Suppose $X=\{x_0,\ldots,x_{u-1}\}$, $Y=\{y_0,\ldots,y_{u-1}\}$, and $Z=\{z_0,\ldots,z_{u-1}\}$.
The use of the \emph{shuffling-colors} method allows us to achieve the following goal:
if each edge of $N$ is colored with one color, each with a different one, then,
with the use of these colors, we can color the edges of $M'$ and $M''$ such that
\eqref{eq:prop_edge_coloring}-property is kept.
To perform this task, for every edge $(x_i < y_j)$ in $N$ we compute $k = ((i+j) \mod u)$ and we color
$(x_i < z_k)$ and $(z_k < y_j)$ with the color of the edge $(x_i < y_j)$.

\section{The details of the algorithm}
The algorithm consists of two main procedures. 
The first colors so-called active nodes in $\mathcal{N}$.
The second, which is called for every active node $N$, 
colors nodes \emph{dependent} on $N$, that is, all non-active nodes $M$
for which $N$ is the first active node on a path from $M$ to the root of $(\mathcal{N}, {\subnode})$.
We denote this set by $D(N)$.
Since the root of $(\mathcal{N},{\subnode})$ is active,
the sets $D(N)$ for active nodes $N$ form a partition of the set $\mathcal{N}$.
In the next two subsections we describe how the algorithm colors active nodes and how the algorithm colors 
nodes dependent on an active node~$N$.

\subsection{Coloring active nodes}
\label{subsec:active_nodes}
A \emph{Dilworth clique of width $k$} in a node $N=(X,Y,{<})$ is a set $\{x_1,\ldots,x_k,y_1,\ldots,y_k\}$
with $x_1,\ldots,x_k \in X$, $y_1,\ldots,y_k \in Y$, with the relation $x_i < y_j$ for $i,j \in [k]$,
such that there is an extension of edges $x_1 < y_1, \ldots, x_k < y_k$ to a perfect matching in $N$.
One can easily check that any ancestor of a node with a Dilworth's clique of width $k$ contains a Dilworth's clique of the same width.

A node $N$ with characteristics $(u,s)$ is called \emph{active} if
it contains a Dilworth clique of width $\lceil \sqrt{w} \rceil$
and has no ancestor in the node tree with the same characteristics.
For each active node $N$ a Dilworth clique is fixed once $N$ appears in the node tree.
We denote this Dilworth clique by $R(N)$.
On the set of all active nodes $\mathcal{P}(u,s)$ with characteristics $(u,s)$ 
we define a poset $(\mathcal{P}(u,s), {\leq_{(u,s)}})$
by the rule that $N <_{(u,s)} K$ iff there is a maximal $x$ in $(R(N),{\leq})$ and a minimal $y$ in $(R(K),{\leq})$ with $x \leq y$.
It is clear that $\leq_{(u,s)}$ is a partial order and that $\leq_{(u,s)}$
can be extended on every active node from $\mathcal{P}(u,s)$ immediately after it appears in the node tree -- see Figure \ref{fig:regular_poset_tree} for an example.
The next lemma proves another crucial property of $(\mathcal{P}(u,s), {\leq_{(u,s)}})$.

\begin{lemma}
The width of $(\mathcal{P}(u,s), {\leq_{(u,s)}})$ is at most $\lfloor \sqrt{w} \rfloor$.
\end{lemma}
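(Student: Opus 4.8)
The plan is to bound the width of $(\mathcal{P}(u,s), {\leq_{(u,s)}})$ by exhibiting, from any large antichain, a configuration that violates the definition of surplus or the width constraint. Suppose toward a contradiction that $N_1, \ldots, N_m$ is an antichain in $(\mathcal{P}(u,s), {\leq_{(u,s)}})$ with $m \geq \lfloor \sqrt{w} \rfloor + 1$. Each $N_i$ carries a fixed Dilworth clique $R(N_i)$ of width $\lceil \sqrt{w} \rceil$, whose lower vertices I will call $X_i \subseteq P$ and whose upper vertices $Y_i \subseteq P$, with $x < y$ for all $x \in X_i$, $y \in Y_i$, and with the $X_i$-to-$Y_i$ matching extendable to a perfect matching of $N_i$. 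The key observation is that the antichain condition forces a strong incomparability: $N_i \not<_{(u,s)} N_j$ means no maximal element of $R(N_i)$ sits below a minimal element of $R(N_j)$.

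First I would analyze what incomparability of $N_i$ and $N_j$ gives us at the level of vertices. I would argue that the Dilworth cliques of two $\leq_{(u,s)}$-incomparable active nodes must themselves be "crossing" or "nested" in a way controlled by the regular-poset structure: if $N_i$ and $N_j$ are incomparable, then either their interiors are disjoint along the $\sqsubseteq$-order, or one sits strictly inside the other. Since each $N_i$ has the same characteristics $(u,s)$ and neither is an ancestor of another (by the definition of active node — an active node has no ancestor with the same characteristics, so distinct active nodes with the same characteristics are not in ancestor relation), Proposition~\ref{prop:characteristics_lexicographic_order} rules out the nesting case along a root-to-leaf path. So the $N_i$ live in "incomparable" parts of the node tree, meaning their interiors $\Int(N_i)$ are pairwise disjoint, or at least their Dilworth cliques are.

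The main step is then to collapse everything into a single node and apply the surplus bound. I would find a common ancestor node $N$ (for instance the root, or the least common ancestor of the relevant nodes), which has characteristics $(u', s')$ lexicographically at least $(u, s)$, and in particular width $u' \geq u$. Inside $N$, the $m$ Dilworth cliques project (via perfect matchings guaranteed by regularity, as in the projecting-colors technique) to $m$ sets $A_1, \ldots, A_m \subseteq X$, each of size $\lceil \sqrt{w} \rceil$, that are pairwise incomparable-across in the sense that the upper ends stay separated. Taking $A = A_1 \cup \cdots \cup A_m$, the pairwise disjointness (from the antichain condition, translated through the matchings) gives $|A| = m \lceil \sqrt{w} \rceil \geq (\lfloor \sqrt{w}\rfloor + 1)\lceil \sqrt{w}\rceil > w \geq u \geq \width(N)$ — but $A \subseteq X$ and $|X| = \width(N)$, a contradiction. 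Alternatively, if disjointness fails and instead the cliques overlap, I would use the incomparability to show $|A\upseteq \cap Y|$ is too small relative to $|A|$, contradicting $\surplus(N) \geq s \geq 1$; more precisely, the upper ends of the cliques being $\leq_{(u,s)}$-separated bounds $|A\upseteq \cap Y|$ by roughly $|A|$, killing the surplus.

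The hard part will be making precise the translation between the combinatorial statement "$N_i \not<_{(u,s)} N_j$ for all $i \neq j$" and a clean statement about subsets of a single node $N$ — in particular, choosing the right common ancestor $N$ and the right perfect matchings so that the images of the $R(N_i)$ are genuinely disjoint (or genuinely violate Hall's condition). One must be careful that the Dilworth cliques, which are defined up to the perfect matching extending them, can be transported consistently; this is exactly where regularity (every comparability edge extends to a perfect matching) and Proposition~\ref{prop:nodes} are used. Once the configuration is inside one node, the counting is immediate: $m$ disjoint sets of size $\lceil\sqrt w\rceil$ cannot fit in an antichain side of size at most $w$ unless $m \leq \lfloor\sqrt w\rfloor$.
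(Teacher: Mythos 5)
There is a genuine gap, and it sits exactly at the step you flag as ``the hard part.'' Your plan hinges on translating pairwise $\leq_{(u,s)}$-incomparability into disjointness: either pairwise disjoint interiors $\Int(N_i)$, or disjoint images of the cliques $R(N_i)$ after projecting into a common ancestor $N$, so that $m\lceil\sqrt w\rceil>w$ beats $|X|=\width(N)$. But incomparability in $\leq_{(u,s)}$ does not give this, and the intermediate claim is false: two active nodes with the same characteristics can be arranged \emph{vertically} (one entirely above the other along the $\sqsubseteq$-order of antichains) and still be $\leq_{(u,s)}$-incomparable, because $N<_{(u,s)}K$ only requires a comparability between a maximal element of $R(N)$ and a minimal element of $R(K)$, and the regular bipartite posets between levels are not complete --- for instance the top side of $R(N)$ and the bottom side of $R(K)$ can be disjoint subsets of one common antichain. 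Such nodes can even both have full width $u=w$, so their interiors meet the same chains and the projected cliques overlap heavily; your counting then collapses. The fallback you sketch (``incomparability bounds $|A\upseteq\cap Y|$ by roughly $|A|$, killing the surplus'') is not an argument: the surplus condition on the ancestor is a lower bound on upsets inside $N$, and nothing in the $\leq_{(u,s)}$-incomparability of cliques sitting possibly far above $X$ yields the needed upper bound. Ruling out the vertical configuration in large numbers is precisely the content of the lemma, so it cannot be assumed.

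The missing idea is the paper's use of Proposition \ref{prop:matchings}. Since $\mathcal{P}(u,s)$ is ancestor-free (no active node has an ancestor with the same characteristics), one may fix for each $N\in\mathcal{P}(u,s)$ a perfect matching $M(N)$ containing the $\lceil\sqrt w\rceil$ clique edges of $R(N)$, and Proposition \ref{prop:matchings} produces a single partition of all incident vertices into $w$ chains in which every matched pair, in every node simultaneously, lies on one chain. Then the pigeonhole is applied to \emph{edges versus chains}, not to vertex sets inside one node: an antichain of size $\lfloor\sqrt w\rfloor+1$ would contribute $(\lfloor\sqrt w\rfloor+1)\lceil\sqrt w\rceil>w$ clique edges, each node's clique edges occupying distinct chains, so two clique edges of different nodes $N,K$ share a chain; ancestor-freeness then forces these two edges to be $\leq_E$-comparable (they cannot be nested, as nesting would put an edge of one node inside the interior of the other), which gives $N\leq_{(u,s)}K$ or $K\leq_{(u,s)}N$, contradicting the antichain assumption. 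In short, your counting $(\lfloor\sqrt w\rfloor+1)\lceil\sqrt w\rceil>w$ is the right arithmetic, but it must be run against a carefully constructed global chain partition rather than against disjoint subsets of a single ancestor node, and the comparability extraction from ``same chain'' is the step your proposal leaves unsupported.
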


\begin{proof}
For every node $N$ in $\mathcal{P}(u,s)$ choose a perfect matching $M(N)$ of $N$ consisting of $\lceil \sqrt{w} \rceil$ 
edges with the endpoints in $R(N)$ and $(u - \lceil \sqrt{w} \rceil)$ edges with the endpoints outside $R(N)$.
Suppose $P(u,s)$ is the set of all elements of $P$ incident to the nodes from $\mathcal{P}(u,s)$.
Since $\mathcal{P}(u,s)$ is ancestor-free, by Proposition \ref{prop:matchings}, 
there is a chain partition $\mathcal{C}$ of size $w$ of $(P(u,s),{\leq})$ 
such that for every node $N \in \mathcal{P}(u,s)$ and every edge $(a < b)$ from $M(N)$, 
the elements $a,b$ are in a same chain of $\mathcal{C}$.
Suppose to the contrary that there is an antichain $\mathcal{A}$ of size $\lfloor \sqrt{w} \rfloor + 1$ in $(\mathcal{P}(u,s),{\leq_{(u,s)}})$.
For every $N \in \mathcal{A}$, the edges from $M(N)$ that have their endpoints in $R(N)$ belong to $\lceil \sqrt{w} \rceil$ different chains of $\mathcal{C}$.
Since $|\mathcal{A}| = \lfloor \sqrt{w} \rfloor + 1$, there are two edges, say $(a < b) \in R(N)$ and $(c < d) \in R(K)$ for some $N,K \in \mathcal{A}$, that lie in the same chain of $\mathcal{C}$.
Since $\mathcal{A}$ is ancestor-free, we have either $(a < b) \leq (c < d)$ or $(c < d) \leq (a < b)$.
So, either $N \leq_{(u,s)} K$ or $K \leq_{(u,s)} N$, which contradicts the fact
that $\mathcal{A}$ is an antichain in $(\mathcal{P}(u,s), \leq_{(u,s)})$.
\end{proof}

The algorithm recursively partitions on-line poset $(\mathcal{P}(u,s),{\leq}_{(u,s)})$ into chains.
Since the width of $(\mathcal{P}(u,s),{\leq_{(u,s)}})$ is at most $\lfloor \sqrt{w} \rfloor$,
$\mathcal{P}(u,s)$ is partitioned into at most $\lfloor \sqrt{w} \rfloor^3 \lambda(\lfloor \sqrt{w} \rfloor)$ chains.
Suppose $\mathcal{L}$ is a chain in $(\mathcal{P}(u,s), \leq_{(u,s)})$ generated by the algorithm.
The next lemma shows that First-Fit can be used to partition efficiently the edge poset $(\mathcal{L}_E, {\leq_{E}})$ into chains.

\begin{lemma}
\label{lem:edge_order_for_a_chain}
Let $\mathcal{L}$ be a chain in $(\mathcal{P}(u,s),{\leq_{(u,s)}})$. 
The edge poset $(\mathcal{L}_E, {\leq_{E}})$ has width at most $w^3$ and is $(\underline{2w-2\lceil \sqrt{w}\rceil +3}+\underline{2w-2\lceil \sqrt{w} \rceil +3})$-free.
\end{lemma}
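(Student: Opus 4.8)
The plan is to prove the two assertions separately. For the width bound, note that a chain $\mathcal{L}$ in $(\mathcal{P}(u,s),{\leq_{(u,s)}})$ is in particular a set of nodes, and since distinct active nodes in $\mathcal{P}(u,s)$ share no ancestor–descendant relation, $\mathcal{L}$ is an ancestor-free set of nodes in $(\mathcal{N},{\subnode})$. Lemma \ref{lem:ancestor_free_edge_order_width} then immediately gives that $(\mathcal{L}_E,{\leq_E})$ has width at most $w^3$, so there is nothing more to do for that part.

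The substantive part is showing $(\mathcal{L}_E,{\leq_E})$ is $(\underline{m}+\underline{m})$-free for $m = 2w - 2\lceil\sqrt{w}\rceil + 3$. I would argue by contradiction: suppose $\mathcal{L}_E$ contains two incomparable chains $C_1 = (e_1 \leq_E \cdots \leq_E e_m)$ and $C_2 = (f_1 \leq_E \cdots \leq_E f_m)$, with every $e_i$ incomparable to every $f_j$. Each edge $e_i$ lies in some node $N_i \in \mathcal{L}$, and each $f_j$ in some $K_j \in \mathcal{L}$. The first step is to observe that, because $C_1$ is a chain in the edge order, the edges $e_1,\dots,e_m$ ``climb'' through the poset $(P,{\leq})$: writing $e_i = (a_i < b_i)$ we have $b_i \leq a_{i+1}$, so the $N_i$ are encountered in $\sqsubseteq$-increasing fashion along $\mathcal{L}$; similarly for the $f_j$ and $K_j$. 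The key point to extract is that within a single node $N \in \mathcal{L}$ a chain in $(N_E,{\leq_E})$ can contain at most... essentially one edge per ``level'' of $N$, but more usefully: since $N$ is regular of width $u$ and contains the Dilworth clique $R(N)$ of width $\lceil\sqrt{w}\rceil$, the surplus and the clique structure limit how a long edge-chain can sit inside $N$ while remaining incomparable to a parallel chain in another node of $\mathcal{L}$.

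The heart of the argument is a counting/interleaving estimate. Because $\mathcal{L}$ is a chain in $(\mathcal{P}(u,s),{\leq_{(u,s)}})$, for any two of its nodes $N,K$ there is a maximal $x$ in $R(N)$ and a minimal $y$ in $R(K)$ (or vice versa) with $x \leq y$; so the Dilworth cliques are themselves linearly threaded. If $C_1$ and $C_2$ were both long and mutually incomparable, I would derive a contradiction by producing, inside one node $N$ that both chains ``pass near,'' too many pairwise-incomparable edges relative to what the characteristics $(u,s)$ and the fixed Dilworth clique $R(N)$ of width $\lceil\sqrt{w}\rceil$ permit: the point is that an edge of $C_1$ in $N$ and an edge of $C_2$ in $N$ being incomparable forces them to be ``horizontally spread,'' and a regular node of width $u \leq w$ cannot host a long antichain of mutually incomparable edges once $\lceil\sqrt{w}\rceil$ of its matching coordinates are pinned down by $R(N)$ — this is exactly where the $2w - 2\lceil\sqrt{w}\rceil$ term comes from, with the $+3$ as slack. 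I expect this interleaving step — turning ``two incomparable edge-chains of length $m$'' into ``a forbidden configuration inside a single node, exceeding the $u - \lceil\sqrt{w}\rceil$ budget twice over'' — to be the main obstacle, since it requires carefully tracking which node each $e_i$ and $f_j$ lives in, using the $\sqsubseteq$-linearity of $\mathcal{L}$ and the threading of the cliques $R(\cdot)$ to localize the clash, and then invoking regularity (via Hall's condition, as in Proposition \ref{prop:nodes}) to bound the antichain of edges. The width-$w^3$ bound will be used again here to keep the bookkeeping finite.
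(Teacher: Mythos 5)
Your first half is fine: a chain $\mathcal{L}$ in $(\mathcal{P}(u,s),{\leq_{(u,s)}})$ is indeed ancestor-free (all its nodes have the same characteristics, and characteristics weakly decrease along root-to-leaf paths, so an ancestor--descendant pair with equal characteristics is excluded by the definition of active), and Lemma~\ref{lem:ancestor_free_edge_order_width} gives width at most $w^3$; this is exactly the paper's argument. The second half, however, has a genuine gap: after setting up the two incomparable chains $(a_1<b_1)\leq_E\cdots\leq_E(a_m<b_m)$ and $(c_1<d_1)\leq_E\cdots\leq_E(c_m<d_m)$, you never supply the argument that produces the contradiction, and the mechanism you gesture at --- concentrating ``too many pairwise-incomparable edges inside a single node'' that both chains pass near --- is not how the bound arises and is unlikely to work, because the edges of the two chains may be distributed over many distinct nodes of $\mathcal{L}$ and need never meet in one node at all. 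The actual contradiction is not an over-large antichain localized in a node; it is a \emph{forced comparability} between the two allegedly incomparable chains.

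Concretely, with $w'=w-\lceil\sqrt{w}\rceil+2$ and $m=2w'-1$, the missing core is the following claim: if $N_i$ denotes the node of $\mathcal{L}$ containing $(a_i<b_i)$, then $b_1$ lies below some minimal element of the clique $R(N_{w'})$, and dually $a_{2w'-1}$ lies above some maximal element of $R(N_{w'})$ (and the same for the $c,d$-chain and $M_{w'}$). One proves this by assuming $b_1\upseteq$ misses $R(N_j)$ on the lower level of $N_j$ for all $j\leq w'$ and showing by induction that $\bigl|b_1\upseteq\cap A(b_j)\bigr|\geq j$: split $b_1\upseteq$ restricted to the lower level of $N_i$ into the part outside $N_i$ (pushed up injectively by the perfect matching between consecutive antichains) and the nonempty part inside $N_i$, which avoids $R(N_i)$ and hence is proper, so the surplus $\geq 1$ of $N_i$ gains one extra element; after $w'-1$ steps this upset has size $\geq w-\lceil\sqrt{w}\rceil+1$ while being disjoint from the $\lceil\sqrt{w}\rceil$ clique vertices in the same antichain of width $\leq w$ --- impossible. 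This is where the quantity $2w-2\lceil\sqrt{w}\rceil+3$ actually comes from (it guarantees $w'-1$ surplus gains below the middle node and, symmetrically, above it), not from a ``budget'' of incomparable edges within one node. Finally, since $\mathcal{L}$ is a chain, $N_{w'}$ and $M_{w'}$ are $\leq_{(u,s)}$-comparable, say $N_{w'}\leq_{(u,s)}M_{w'}$; because $R(N_{w'})$ is a complete bipartite poset, $b_1$ is below every maximal element of $R(N_{w'})$, some such maximal element is below a minimal element of $R(M_{w'})$, and $c_{2w'-1}$ is above every minimal element of $R(M_{w'})$, whence $(a_1<b_1)\leq_E(c_{2w'-1}<d_{2w'-1})$, contradicting incomparability. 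Without this surplus-induction and the final transitivity step through the $\leq_{(u,s)}$-comparability of the two middle cliques, the proof is not there.
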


\begin{proof}
Since $\mathcal{L}$ is ancestor-free, $(\mathcal{L}_E, {\leq_{E}})$ has width at most $w^3$ by Lemma~\ref{lem:ancestor_free_edge_order_width}.

Let $w' = w - \lceil \sqrt{w} \rceil + 2$. 
Then $2w'-1 = 2w-2\lceil \sqrt{w} \rceil +3$.
To prove that the edge poset $(\mathcal{L}_E, {\leq_{E}})$ is $(\underline{2w'-1}+\underline{2w'-1})$-free, assume to the contrary that
$$(a_1 < b_1) \leq \ldots \leq (a_{2w'-1} < b_{2w'-1}) \text{ and } (c_1 < d_1) \leq \ldots \leq (c_{2w'-1} < d_{2w'-1})$$
is a $(\underline{2w'-1}+\underline{2w'-1})$ structure in $(\mathcal{L}_E, {\leq_{E}})$.
Suppose $(a_i < b_i)$ is an edge of a node $N_i$ and $(c_i < d_i)$ is an edge of a node $M_i$, for $i \in [2w'-1]$.
We claim that:
\begin{itemize}
 \item $b_1 < x$ for some minimal $x$ in $(R(N_{w'}), {\leq})$,
 \item $d_1 < x$ for some minimal $x$ in $(R(M_{w'}), {\leq})$,
 \item $y < a_{2w'-1}$ for some maximal $y$ in $(R(N_{w'}), {\leq})$,
 \item $y < c_{2w'-1}$ for some maximal $y$ in $(R(M_{w'}), {\leq})$.
\end{itemize}
See Figure \ref{fig_2w2wfree_2} for an illustration.
\begin{figure}[htbp!]
\begin{center}
\begin{tikzpicture}

\draw (0,3.5\vdist) node[bvertex] (b2){} node[right=1mm] {$b_{2w'-1}$};
\draw (0,3\vdist) node[bvertex] (a2){} node[right=1mm] {$a_{2w'-1}$};
\draw (0,1.5\vdist) node[bvertex] (n12) {};
\draw (0,1\vdist) node[bvertex] (n11) {};
\draw (0,0.5\vdist) node[bvertex] (b1){} node[right=1mm] {$b_1$};
\draw (0,0\vdist) node[bvertex] (a1){} node[right=1mm] {$a_1$};

\draw (\hdist,1.5\vdist)node[bvertex] (n22) {};
\draw (\hdist,1\vdist)node[bvertex] (n21) {};
\draw (2\hdist,1.5\vdist) node[bvertex] (n32) {};
\draw (2\hdist,1\vdist) node[bvertex] (n31) {};
\draw (2\hdist,2.5\vdist) node[bvertex] (m32) {};
\draw (2\hdist,2\vdist) node[bvertex] (m31) {};
\draw (3\hdist,2.5\vdist) node[bvertex] (m22) {};
\draw (3\hdist,2\vdist) node[bvertex] (m21) {};
\draw (4\hdist,3.5\vdist) node[bvertex] (d2){} node[right=0.15cm] {$d_{2w'-1}$};
\draw (4\hdist,3\vdist) node[bvertex] (c2){} node[right=0.15cm] {$c_{2w'-1}$};
\draw (4\hdist,2.5\vdist) node[bvertex] (m12) {};
\draw (4\hdist,2\vdist) node[bvertex] (m11) {};
\draw (4\hdist,0.5\vdist) node[bvertex] (d1){} node[right=0.15cm] {$d_1$};
\draw (4\hdist,0\vdist) node[bvertex] (c1){} node[right=0.15cm] {$c_1$};

\path[edge](a1)-- (b1);
\path[dedge] (b1)-- (n11);
\path[dedge] (n12) -- (a2);
\path[edge](a2)-- (b2);

\path[cedge] (n11) -- node [midway,left=1mm] {$R(N_{w'})$} (n12);
\path[cedge] (n11) -- (n22);
\path[cedge] (n11) -- (n32);
\path[cedge] (n21) -- (n12);
\path[cedge] (n21) -- (n22);
\path[cedge] (n21) -- (n32);
\path[cedge] (n31) -- (n12);
\path[cedge] (n31) -- (n22);
\path[cedge] (n31) -- (n32);
\path[dedge] (n32) -- (m31);
\path[cedge] (m11) -- node [midway,right=1mm] {$R(M_{w'})$} (m12);
\path[cedge] (m11) -- (m22);
\path[cedge] (m11) -- (m32);
\path[cedge] (m21) -- (m12);
\path[cedge] (m21) -- (m22);
\path[cedge] (m21) -- (m32);
\path[cedge] (m31) -- (m12);
\path[cedge] (m31) -- (m22);
\path[cedge] (m31) -- (m32);

\path[edge](c1)-- (d1);
\path[dedge] (d1)-- (m11);
\path[dedge] (m12) -- (c2);
\path[edge](c2)-- (d2);

\end{tikzpicture}
\end{center}
\caption{}
\label{fig_2w2wfree_2}
\end{figure}
First, using the above claims, we prove the statement of the lemma.
Since $\mathcal{L}$ is a chain in $(\mathcal{P}(u,s),{\leq}_{(u,s)})$, we have either
$N_{w'} \leq_{(u,s)} M_{w'}$ or $M_{w'} \leq_{(u,s)} N_{w'}$.
Suppose $N_{w'} \leq_{(u,s)} M_{w'}$. 
By the first statement of the claim, we have $b_1 < x$ for some minimal $x$ in $(R(N_{w'}), {\leq})$.
Hence, since $(R(N_{w'}),{\leq})$ is a complete bipartite poset, $b_1$ is $\leq$-below all maximal elements in $(R(N_{w'}), {\leq})$.
Similarly, by the last statement of the claim, $y < c_{2w'-1}$ for some maximal $y$ in $(R(M_{w'}), {\leq})$, 
and $c_{2w'-1}$ is $\leq$-above every minimal element in $(R(M_{w'}),{\leq})$.
Since $N_{w'} \leq_{(u,s)} M_{w'}$, some maximal element in $(R(N_{w'}), {\leq})$
is below some minimal element in $(R(M_{w'}), {\leq})$.
By transitivity of ${\leq}$ we get $(a_1 < b_1) \leq (c_{2w'-1} < d_{2w'-1})$, which contradicts the fact that the chains
$(a_1 < b_1) \leq \ldots \leq (a_{2w'-1} < b_{2w'-1})$ and $(c_1 < d_1) \leq \ldots \leq (c_{2w'-1} < d_{2w'-1})$ are $\leq_{E}$-incomparable.
Similarly, if $M_{w'} < N_{w'}$, we get $(c_1 < d_1) \leq (a_{2w'-1} < b_{2w'-1})$, 
and we reach the same conclusion as earlier.

To complete the proof of the lemma, we need to show the claim.
Due to symmetry, we prove only its first sentence.
Suppose that $a_i \in A(a_i)$ and $b_i \in A(b_i)$ for some $A(a_i), A(b_i)$ in $\{A_1,\ldots,A_k\}$.
If $b_1$ is $\leq$-below some minimal element in $(R(N_j),{\leq})$ for some $j \in [2,w']$,
then, because of $N_{j} \leq_{(u,s)} N_{w'}$, $b_1$ is $\leq$-below some minimal element in $(R(N_{w'}),{\leq})$, and the claim holds.
So, we assume that:
\begin{equation}
\label{eq:upsets_and_cliques}
\big(b_1 \upseteq \cap A(a_j)\big) \cap \big(R(N_j) \cap A(a_j)\big) = \emptyset \text{ for every } j \in [2,w'].
\end{equation}
Under this assumption we will show that:
\begin{equation}
\label{eq:upsets_1}
\big|b_1 \upseteq \cap A(b_j)\big| \geq j \text{ for every } j \in [1,w'-1].
\end{equation}
Note that \eqref{eq:upsets_1} implies:
\begin{equation}
\label{eq:upsets_2}
\big|b_1\upseteq \cap A(a_{j+1})\big| \geq j \text{ for every } j \in [1,w'-1],
\end{equation}
as either $A(b_j)=A(a_{j+1})$ or $A(b_j) \sqsubset A(a_{j+1})$ 
and there is a perfect matching between $A(b_j)$ and $A(a_{j+1})$ in $(A(b_j),A(a_{j+1}),{<})$.
However, note that \eqref{eq:upsets_2} for $j=w'-1$ yields
$\big|b_1 \upseteq \cap A(a_{w'})\big| \geq w'-1 = w - \lceil \sqrt{w} \rceil +1$, which
contradicts \eqref{eq:upsets_and_cliques} for $j=w'$ as $|R(N_{w'}) \cap A(a_{w'})| = \lceil \sqrt{w} \rceil$.

We prove \eqref{eq:upsets_1} by induction on $j$.
Clearly, \eqref{eq:upsets_1} holds for $j=1$.
Suppose \eqref{eq:upsets_1} holds for $j=i-1$ for some $i \in [2,w'-1]$; we will show \eqref{eq:upsets_1} holds for $j=i$.
By \eqref{eq:upsets_2}, we have $\big|b_1\upseteq \cap A(a_{i})\big| \geq i-1$.
Suppose $N_i = (X,Y,{<})$, for some $X \subseteq A(a_i)$ and $Y \subseteq A(b_i)$.
We split the set $b_1 \upseteq \cap A(a_i)$ into two sets:
$C = (b_1\upseteq \cap A(a_i)) \setminus X$ and $D=(b_1\upseteq \cap A(a_i)) \cap X$ -- see Figure \ref{fig_2w2wfree_1}.
\begin{figure}[htbp!]
\begin{center}
\begin{tikzpicture}

\setlength{\hdist}{13mm}
\setlength{\vdist}{13mm}

\setlength{\antInner}{2.5mm}
\setlength{\antinner}{3.5mm}
\setlength{\nodeinner}{4.5mm}

\def\n{9}
\foreach \i in {1,...,\n} {
\coordinate (A\i) at (\i\hdist,0);
\coordinate (B\i) at (\i\hdist,\vdist);
\coordinate (C\i) at (\i\hdist,2.5\vdist);
\coordinate (D\i) at (\i\hdist,3.5\vdist);
}

\draw[nodebordered] \convexpath{C5,D5,D\n,C\n}{\nodeinner};

\draw[draw=black] \convexpath{B3,B6}{\antInner};
\draw[draw=black] \convexpath{C3,C4}{\antInner};
\draw[draw=black] \convexpath{C5,C6}{\antInner};
\draw[draw=black] \convexpath{D3,D4}{\antInner};
\draw[draw=black] \convexpath{D5,D7}{\antInner};

\foreach \i in {1,...,\n} {
\draw (A\i) node[bvertex] (a\i) {};
\draw (B\i) node[bvertex] (b\i) {};
\draw (C\i) node[bvertex] (c\i) {};
\draw (D\i) node[bvertex] (d\i) {};
}

\foreach \i in {1,...,\n} {
\path[edge] (a\i) -- (b\i);
\path[dedge] (b\i) -- (c\i);
\path[edge] (c\i) -- (d\i);
}

\path[cedge] (a1) -- (b1);
\path[cedge] (a1) -- (b2);
\path[cedge] (a1) -- (b3);
\path[cedge] (a2) -- (b1);
\path[cedge] (a2) -- (b2);
\path[cedge] (a2) -- (b3);
\path[cedge] (a3) -- (b1);
\path[cedge] (a3) -- (b2);
\path[cedge] (a3) -- (b3);

\path[edge] (a3) -- (b4);
\path[edge] (a4) -- (b5);
\path[edge] (a4) -- (b3);
\path[edge] (a5) -- (b4);

\path[bedge] (a5) -- (b5);

\path[edge] (a6) -- (b7);
\path[edge] (a7) -- (b8);
\path[edge] (a8) -- (b9);
\path[edge] (a9) -- (b6);

\path[edge] (c1) -- (d2);
\path[edge] (c2) -- (d1);

\path[bedge] (c5) -- (d5);

\path[edge] (c5) -- (d6);
\path[edge] (c6) -- (d5);
\path[edge] (c6) -- (d7);
\path[edge] (c7) -- (d6);
\path[edge] (c7) -- (d8);
\path[edge] (c8) -- (d7);

\path[cedge] (c7) -- (d7);
\path[cedge] (c7) -- (d8);
\path[cedge] (c7) -- (d9);
\path[cedge] (c8) -- (d7);
\path[cedge] (c8) -- (d8);
\path[cedge] (c8) -- (d9);
\path[cedge] (c9) -- (d7);
\path[cedge] (c9) -- (d8);
\path[cedge] (c9) -- (d9);

\node [node] at (barycentric cs:c6=0,d6=0,c7=0.25,d7=0.25) {\begin{small}$N_i=(X, Y,<)$\end{small}};
\node [above] at ($(barycentric cs:d3=0.5,d4=0.5) + (0,\antinner)$) {$C\upseteq\cap A(b_i)$};
\node [above] at ($(d6) + (0,\nodeinner)$) {$D\upseteq\cap A(b_i)$};
\node [below] at ($(barycentric cs:c3=0.5,c4=0.5) + (0,-\antinner)$) {$C$};
\node [below] at ($(barycentric cs:c5=0.5,c6=0.5) + (0,-\nodeinner)$) {$D$};
\node [above,fill=white] at ($(barycentric cs:b4=0.5,b5=0.5) + (0,\antinner)$) {$b_1\upseteq\cap A(b_{i-1})$};

\node [right] at ($(a9) + (\antinner,0)$) {$A(a_{i-1})$};
\node [right] at ($(b9) + (\antinner,0)$) {$A(b_{i-1})$};
\node [right] at ($(c9) + (\nodeinner,0)$) {$A(a_{i})$};
\node [right] at ($(d9) + (\nodeinner,0)$) {$A(b_{i})$};

\node [right] at (a5) {$a_{i-1}$};
\node [right] at (b5) {$b_{i-1}$};
\node [right] at (c5) {$a_{i}$};
\node [right] at (d5) {$b_{i}$};

\draw[draw=black,dashed] \convexpath{a1,a\n}{\antinner};
\draw[draw=black,dashed] \convexpath{b1,b\n}{\antinner};
\draw[draw=black,dashed] \convexpath{c1,c\n}{\antinner};
\draw[draw=black,dashed] \convexpath{d1,d\n}{\antinner};

\end{tikzpicture}
\end{center}
\caption{}
\label{fig_2w2wfree_1}
\end{figure}
Consider the set $C\upseteq \cap A(b_i)$. 
Since $(X,Y,{<})$ is a node spanned between $A(a_i)$ and $A(b_i)$, 
the set $C\upseteq \cap A(b_i)$ is disjoint with $Y$.
Moreover, since there is a perfect matching between $A(a_i)$ and $A(b_i)$ in $(A(a_i),A(b_i),{<})$,
the set $C\upseteq \cap A(b_i)$ has at least $|C|$ elements.
Now consider the set $D$.
Note that $D$ is non-empty as $a_i$ belongs to $D$.
Note also that $D$ is strictly contained in $X$ as $D$ is disjoint with $R(N_i) \cap A(a_i)$ by \eqref{eq:upsets_and_cliques}.
Since $N_i$ has surplus at least $1$, we conclude that $D\upseteq \cap Y$ has size at least 
$|D|+1$.
So, $b_1\upseteq \cap A(b_i)$ has at least $|C|+|D|+1 \geq i$ elements.
This proves~\eqref{eq:upsets_1}.
\end{proof}

Based on Lemma \ref{lem:edge_order_for_a_chain} and Theorem \ref{thm:first_fit_two_long_incomparable_chains}, 
First-Fit partitions the edge poset $(\mathcal{L}_E, {\leq_{E}})$ into at most $8 \cdot (2w-2\lceil \sqrt{w} \rceil + 3) \cdot w^3 < 16w^4$ chains. 
Since we have at most:
\begin{itemize}
 \item $w^2$ possible characteristics $(u,s)$,
 \item $\lfloor \sqrt{w} \rfloor^3 \lambda(\lfloor \sqrt{w} \rfloor)$ chains $\mathcal{L}$ into which $(\mathcal{P}(u,s),{\leq_{(u,s)}})$ is partitioned,
 \item $16w^4$ chains into which $(\mathcal{L}_E, {\leq_{E}})$ is partitioned by First-Fit,
\end{itemize}
we conclude that the edges of the active nodes in $(\mathcal{N}, {\subnode})$ can be partitioned into 
$\lambda_1(w) = 16\lfloor w^{\frac{15}{2}} \rfloor \lambda(\lfloor \sqrt{w} \rfloor)$ edge chains.
Eventually, the use of the replacing-chains-with-colors method leads us to the following:
\begin{proposition}
\label{prop:coloring_active_nodes_summary}
If every edge of $(A_1,A_2,{<})$ is colored by a bundle of $\lambda_1(w)$ colors, then,
with the use of these colors, the algorithm can color all active nodes in $\mathcal{N}$ such that 
\eqref{eq:prop_edge_coloring}-property is satisfied.
\end{proposition}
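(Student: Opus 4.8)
The plan is to reduce the statement, via the replacing-chains-by-colors method, to producing an on-line partition of the edge poset of all active nodes into at most $\lambda_1(w)$ chains. Concretely, I would take $N=(A_1,A_2,{<})$, the root of the node tree $(\mathcal{N},{\subnode})$, and let $\mathcal{F}$ be the set of active nodes other than the root. Since $(\mathcal{N},{\subnode})$ is rooted at $(A_1,A_2,{<})$, every member of $\mathcal{F}$ is a descendant of $N$, so the method applies; the root itself is colored by the hypothesis of the proposition. Thus it suffices to equip $\mathcal{F}$ with an on-line procedure $\mathbb{P}$ partitioning the edge poset $(\mathcal{F}_E,{\leq_E})$ into at most $\lambda_1(w)$ chains, after which the method colors every edge of every node of $\mathcal{F}$ using the $\lambda_1(w)$-color bundles placed on the root's edges, keeping the \eqref{eq:prop_edge_coloring}-property.

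To build $\mathbb{P}$, I would compose three on-line partitions, each decidable as soon as a new active node is presented. First, split the active nodes into the at most $w^2$ classes $\mathcal{P}(u,s)$ indexed by their characteristics $(u,s)$. Second, inside each class the algorithm recursively partitions on-line the poset $(\mathcal{P}(u,s),{\leq_{(u,s)}})$, which has width at most $\lfloor\sqrt{w}\rfloor$ as established above; by the inductive hypothesis that~\eqref{eq:P-property} holds at width $\lfloor\sqrt{w}\rfloor$ together with Lemma~\ref{lem:reg}, this uses at most $\lfloor\sqrt{w}\rfloor^3\lambda(\lfloor\sqrt{w}\rfloor)$ chains $\mathcal{L}$. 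Third, for each such chain $\mathcal{L}$, Lemma~\ref{lem:edge_order_for_a_chain} guarantees that $(\mathcal{L}_E,{\leq_E})$ has width at most $w^3$ and is $(\underline{2w-2\lceil\sqrt{w}\rceil+3}+\underline{2w-2\lceil\sqrt{w}\rceil+3})$-free, so by Theorem~\ref{thm:first_fit_two_long_incomparable_chains} First-Fit partitions $(\mathcal{L}_E,{\leq_E})$ into fewer than $16w^4$ chains. Labelling the resulting edge chains by the triple consisting of the characteristics, the index of $\mathcal{L}$, and the First-Fit index, $\mathbb{P}$ uses at most $w^2\cdot\lfloor\sqrt{w}\rfloor^3\lambda(\lfloor\sqrt{w}\rfloor)\cdot 16w^4\leq\lambda_1(w)$ chains, and the label of each edge is fixed the moment its node appears.

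Finally I would run the replacing-chains-by-colors method with $k=\lambda_1(w)$: match the $\lambda_1(w)$ colors in the bundle of each root edge with the $\lambda_1(w)$ chains of $\mathbb{P}$, and whenever $\mathbb{P}$ places an edge $(z<t)$ of a node of $\mathcal{F}$ in chain $c$, assign $(z<t)$ the color matched to $c$ in the bundle of some root edge $(x<y)$ satisfying $x\leq z<t\leq y$ (such an edge exists because $(\Int(A_1,A_2,{<}),{\leq})$ has width $w$), leaving the root edges with their given bundles. It then remains to verify the \eqref{eq:prop_edge_coloring}-property: a fixed color $\gamma$ lies in the bundle of a unique root edge $(x<y)$, so the edges carrying $\gamma$ are $(x<y)$ together with those edges $(z<t)$ of $\mathcal{F}$ that $\mathbb{P}$ placed in the chain matched to $\gamma$ and that selected $(x<y)$; the latter edges are pairwise $\leq_E$-comparable (lying in one chain of $\mathbb{P}$) and each satisfies $x\leq z<t\leq y$, so all endpoints of all edges carrying $\gamma$ form a chain in $(P,{\leq})$. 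I do not foresee a real obstacle here, since Lemma~\ref{lem:edge_order_for_a_chain}, the width bound on $(\mathcal{P}(u,s),{\leq_{(u,s)}})$, and the three color-manipulation techniques do the work; the points that need care are that $\mathbb{P}$ is genuinely on-line -- which holds because the characteristics of a newly presented active node, its chain in $(\mathcal{P}(u,s),{\leq_{(u,s)}})$, and the First-Fit assignments are all determined at the moment the node appears -- and the routine bookkeeping that $w^2\cdot\lfloor\sqrt{w}\rfloor^3\lambda(\lfloor\sqrt{w}\rfloor)\cdot 16w^4$ does not exceed $\lambda_1(w)$.
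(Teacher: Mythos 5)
Your proposal is correct and follows essentially the same route as the paper: split the active nodes by characteristics into the at most $w^2$ posets $(\mathcal{P}(u,s),{\leq_{(u,s)}})$ of width at most $\lfloor\sqrt{w}\rfloor$, partition each recursively into at most $\lfloor\sqrt{w}\rfloor^3\lambda(\lfloor\sqrt{w}\rfloor)$ chains, apply First-Fit to each chain's edge poset via Lemma~\ref{lem:edge_order_for_a_chain} and Theorem~\ref{thm:first_fit_two_long_incomparable_chains} to get fewer than $16w^4$ edge chains, and transfer the resulting at most $\lambda_1(w)$ edge chains to colors of the root's edges by the replacing-chains-by-colors method. Your explicit verification of the \eqref{eq:prop_edge_coloring}-property and of the on-line character of the composed partition only spells out what the paper leaves implicit.
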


\subsection{Coloring dependent nodes of an active node}
To color all nodes dependent on active nodes, we need to have every edge of an active node colored by $\lambda_2(w)\lambda_3(w)$ colors.
To assert such number of colors, we replace every color used for coloring the edges of active
nodes by a bundle of $\lambda_2(w)\lambda_3(w)$ colors.
By Proposition~\ref{prop:coloring_active_nodes_summary}, this requires to have $\lambda(w)=\lambda_1(w)\lambda_2(w)\lambda_3(w)$ colors on every edge of $(A_1,A_2,{<})$, as we have guaranteed.

Suppose $N$ is an active node with characteristics $(u,s)$.
The algorithm to be presented colors every node from $D(N)$ using the colors assigned to the edges of $N$.
Therefore, to guarantee \eqref{eq:prop_edge_coloring} in $(P,{\leq})$, 
it suffices to check \eqref{eq:prop_edge_coloring} locally in the poset $(\Int(N),{\leq})$.

\subsubsection{The structure of the nodes in $D(N)$}
In the set $D(N)$ we distinguish the following types of nodes:
\begin{itemize}
\item a node $M$ in $D(N)$ is \emph{recursive} if $M$ is the first node with 
the width at most $\lfloor \sqrt{w} \rfloor$ on the path from $N$ to $M$ in the node tree,
\item a node $M$ in $D(N)$ is \emph{problematic} if $\width(M) \geq \lfloor \sqrt{w} \rfloor + 1$ and 
$M$ has no Dilworth clique of width $\lceil \sqrt{w} \rceil$.
In particular, a problematic node can not be complete.
\end{itemize}
By Proposition \ref{prop:characteristics_lexicographic_order}, all nodes in $D(N)$ have characteristics lexicographically not exceeding $(u,s)$.
Suppose $Q$ is the set of all nodes from $D(N)$ with characteristics $(u,s)$.
Note that every child of $Q$ with characteristics lexicographically strictly smaller than $(u,s)$ 
is either active, problematic, or recursive.
Active children of the nodes from $Q$ are already handled.
In Subsection \ref{subsec:Q_and_their_children} we show how the algorithm colors the nodes from $Q$ and 
the problematic and recursive children of the nodes from $Q$.
In Subsection \ref{subsec:recursive_nodes} we show how the descendants of a recursive node are colored.
Eventually, in Subsection \ref{subsec:problematic_nodes} we show how the descendants of a problematic child of a node from $Q$ are colored. 

\subsubsection{The nodes from $Q$ and their recursive and problematic children}
\label{subsec:Q_and_their_children}
To describe how the algorithm colors the nodes from $Q$ and their recursive and problematic children, 
we split into two cases: $N$ is not a complete bipartite poset and $N$ is a complete bipartite poset.

Suppose $N$ is not a complete bipartite poset, i.e., $N$ has characteristic $(u,s)$ with $s < \infty$.
Under this assumption, the set $Q$ forms a non-empty path in $(\mathcal{N},{\subnode})$ starting at $N$.
Let $L$ be the last node in the path $Q$. 
That is, if $L$ has children, all of them have characteristics smaller than $(u,s)$.
All recursive and problematic children of the nodes from $Q \setminus \{L\}$ are kept in the sets $A(Q)$ and $B(Q)$.
Loosely speaking, $A(Q)$ and $B(Q)$ contain nodes that lie
`above' and `below' the path $Q$, respectively -- see Figure \ref{fig:Q_path}.
Formally, the sets $A(Q)$ and $B(Q)$ are processed as follows.
When $N$ appears in the node tree, we have $Q=N$ and we set $A(Q)=\emptyset$, and $B(Q)=\emptyset$.
Now, suppose we are at the moment when a node, say $N'$, from the path $Q$ is being split. 
If $N'$ has a child $M'$ of characteristics $(u,s)$ (i.e., $Q$ is extended by a node $M'$),
the sets $A(Q)$ and $B(Q)$ are set such that:
\begin{itemize}
 \item if $M'$ is a lower-layered child of $N'$, $A(Q)$ is extended by the upper-layered recursive and problematic children of $N'$ and $B(Q)$
 remains unchanged,
 \item if $M'$ is an upper-layered child of $N'$, $B(Q)$ is extended by the lower-layered recursive and problematic children of $N'$ and $A(Q)$
 remains unchanged.
\end{itemize}
Otherwise, if $N'$ has no children with characteristics $(u,s)$, the construction of $A(Q)$ and $B(Q)$ is finished.

Now, consider the edge posets $(A(Q)_E,{\leq_E})$ and $(B(Q)_E,{\leq_E})$ as on-line posets and note the following:
\begin{proposition}
\label{rem:A(Q)_B(Q)_path_online}
The edge posets $(A(Q)_E,{\leq_E})$ and $(B(Q)_E,{\leq_E})$ are down-growing and up-growing orders of width at most $w^3$, respectively.
\end{proposition}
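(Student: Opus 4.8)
The claim has two parts: (i) both edge posets have width at most $w^3$, and (ii) $(A(Q)_E,{\leq_E})$ is down-growing while $(B(Q)_E,{\leq_E})$ is up-growing. For part (i) I would argue that $A(Q)$ and $B(Q)$ are each ancestor-free subsets of $(\mathcal{N},{\subnode})$ and then invoke Lemma~\ref{lem:ancestor_free_edge_order_width}. Ancestor-freeness should follow from the construction: at every split of a node $N'$ on the path $Q$, the newly added nodes are children of $N'$ that are \emph{not} on the path $Q$ (they are the upper- or lower-layered recursive and problematic children), and the path continues through a different child $M'$; since distinct children of $N'$ lie in disjoint subtrees of $(\mathcal{N},{\subnode})$, and since later additions come from strictly deeper nodes on $Q$, no node added to $A(Q)$ (respectively $B(Q)$) can be an ancestor or descendant of another. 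This makes $A(Q)$ and $B(Q)$ ancestor-free, so Lemma~\ref{lem:ancestor_free_edge_order_width} gives width at most $w^3$ for both edge posets.

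**The up/down-growing structure.** For part (ii) I would track the order in which edges appear on-line. An edge $(a<b)$ of a node in $A(Q)$ is introduced when some node $N'$ on $Q$ is split by a \emph{lower-layered} child $M'$, and the edge belongs to an upper-layered child of $N'$; geometrically this edge sits strictly above the new node $M'$ on the path, hence above all nodes of $Q$ appended afterwards, and hence above all $A(Q)$-edges introduced in all later rounds. So when $(a<b)$ is introduced, every $A(Q)$-edge already present is $\leq_E$-below it (its top endpoint is $\leq$-below the bottom endpoint of the fresh edge, because the interiors of the older nodes lie strictly below $\Int(N')\cap$(upper layer)). Thus every newly presented element of $(A(Q)_E,{\leq_E})$ is maximal at presentation time, i.e. the poset is up-growing in the sense of the paper — wait, I must be careful about which direction the paper calls "up-growing": an up-growing presentation is one in which every new element is maximal. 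For $A(Q)$ the fresh edge is above everything seen so far, so $A(Q)_E$ is up-growing; symmetrically, a fresh $B(Q)$-edge lies strictly below all previously introduced $B(Q)$-edges (it comes from a lower-layered child of a node split by an upper-layered child), so every new element of $(B(Q)_E,{\leq_E})$ is minimal at presentation, i.e. $B(Q)_E$ is down-growing. (The proposition as stated assigns "down-growing" to $A(Q)$ and "up-growing" to $B(Q)$, so in my writeup I would match whichever convention makes $A(Q)$ the one receiving new minima; the argument is the same up to swapping the words "upper" and "lower".)

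**The main obstacle.** The delicate point is verifying the comparability direction precisely: when $N'$ is split and the path is extended through $M'$, I must show that every edge in the newly added (non-path) children is $\leq_E$-comparable — in the correct direction — to \emph{every} edge that will ever be added to the same set later, not merely to the edges of $M'$. This requires using that all later additions to $A(Q)$ come from nodes that are descendants of $M'$ on the path $Q$, hence have interiors contained in $\Int(M')$, which lies (as a $\sqsubset$-interval) on the opposite side of the relevant antichain from the upper-layered children of $N'$. Pinning down this containment of $\sqsubset$-intervals, using Proposition~\ref{prop:nodes_tree_structure} and the definition of $\sqsubseteq$ on maximal antichains, is the crux; once it is in place, transitivity of $\leq$ in $(P,{\leq})$ immediately gives the $\leq_E$-comparability and hence the claimed monotone growth structure. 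I would write this as a short lemma on the geometry of $Q$, $A(Q)$, $B(Q)$ first, and then both parts of the proposition fall out in a line each.
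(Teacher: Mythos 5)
Your width argument is fine and matches the paper: $A(Q)$ and $B(Q)$ are ancestor-free by construction, and Lemma~\ref{lem:ancestor_free_edge_order_width} gives width at most $w^3$. The gap is in the monotonicity part, where your key inference runs backwards. You correctly observe that the $A(Q)$-nodes added when a path node $N'$ is split through a lower-layered child $M'$ lie above $\Int(M')$, hence above everything that will be added to $A(Q)$ in \emph{later} rounds (all later additions have interiors inside $\Int(M')$). But from this you conclude that ``every $A(Q)$-edge already present is $\leq_E$-below'' the fresh edge, which is the opposite of what that observation gives. The $A(Q)$-edges already present were produced at earlier splits, i.e.\ higher up along the path, and they sit above the antichains through which the path has since descended; a freshly introduced $A(Q)$-edge is therefore below or incomparable to them, never above them. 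Concretely, if $(c<d)$ is an old $A(Q)$-edge and $(a<b)$ a new one, then $d$ lies in an antichain strictly $\sqsubset$-above the antichain whose elements dominate $\Int(M')$, so $d\leq a$ is impossible; hence every new element of $(A(Q)_E,{\leq_E})$ is \emph{minimal} at presentation and the poset is down-growing, exactly as the proposition states, and dually $(B(Q)_E,{\leq_E})$ is up-growing. Your final assignment (up-growing for $A(Q)$, down-growing for $B(Q)$) is thus not a matter of convention, as your parenthetical suggests, but the product of this reversed step: ``down-growing'' simply means every new element is minimal, the dual of Felsner's up-growing condition, and with that meaning the statement's pairing is forced.

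A second, smaller flaw: in your ``main obstacle'' paragraph you set out to prove that each newly added edge is $\leq_E$-comparable, in a fixed direction, to \emph{every} edge added to the same set later. That is stronger than needed and false in general: for a later $A(Q)$-edge $(a<b)$ and an older one $(c<d)$ one need not have $b\leq c$, since $b$ may be $\leq$-below only part of the inserted antichain while $c$ is a different element of it. What the on-line property actually requires, and what the containment of the interiors of later path nodes in $\Int(M')$ genuinely yields, is only the one-sided exclusion used above (no old edge lies $\leq_E$-below a new $A(Q)$-edge; no old edge lies $\leq_E$-above a new $B(Q)$-edge). With these two corrections your plan goes through and coincides with the paper's (very terse) proof, which likewise combines ancestor-freeness with Lemma~\ref{lem:ancestor_free_edge_order_width} and reads the growth direction off the construction.
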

\begin{proof}
It is clear that the sets $A(Q)$ and $B(Q)$ are ancestor-free.
By Lemma \ref{lem:ancestor_free_edge_order_width}, the width of $(A(Q)_E,{\leq_E})$ and $(B(Q)_E,{\leq_E})$ is at most $w^3$.
That $(A(Q)_E,{\leq_E})$ and $(B(Q)_E,{\leq_E})$ are, respectively, down-growing and up-growing follows by the construction of $A(Q)$ and $B(Q)$.
\end{proof}
We use Felsner's algorithm (see Theorem \ref{thm:Felsner}) to partition on-line each of $(A(Q)_E,{\leq_E})$ and $(B(Q)_E,{\leq_E})$ into at most $\binom{w^3+1}{2}$ edge chains.
Then, using the replacing-chains-by-colors method we achieve the following:
if every edge of $N$ is assigned a bundle of $2\binom{w^3+1}{2}$ colors,
then we can color with the colors assigned to the edges of $N$ all the edges of $A(Q)_E \cup B(Q)_E$ such that \eqref{eq:prop_edge_coloring}-property holds.

We are left with the problem of coloring the nodes from $Q$ and possible recursive and problematic children of the last node $L$ in $Q$.
For this purpose, we reserve $2$ additional colors in a bundle of every edge of $N$.
The coloring of the nodes from $Q$ is easy: we project additional colors from a parent $N'$ to its child $M'$ 
for every two consecutive nodes $N',M'$ in $Q$.
To color the recursive and problematic children of the last node $L$ in $Q$, 
we project one additional color from every edge of $L$ onto upper-layered children of $L$ and the second one onto lower-layered children of $L$.

\begin{figure}[htbp!]
\begin{center}
\begin{tikzpicture}

\coordinate (A) at (2,2);
\coordinate (B) at ($(A)+(\boxwidth+\boxshift,0)$);
\coordinate (C) at ($(B)+(\boxwidth+\boxshift,0)$);
\coordinate (D) at ($(C)+(\boxwidth+\boxshift,0)$);
\coordinate (E) at ($(D)+(\boxwidth+\boxshift,0)$);

\coordinate (B1) at ($(B)+(0,\innerboxheight*2)$);
\coordinate (B2) at ($(B)-(0,\innerboxheight/2)$);

\coordinate (C1) at ($(C)+(0,\innerboxheight*2)$);
\coordinate (C2) at (C);
\coordinate (C5) at ($(C)-(0,\innerboxheight*2)$);

\coordinate (D1) at ($(D)+(0,\innerboxheight*2)$);
\coordinate (D2) at ($(D)+(0,\innerboxheight/2)$);
\coordinate (D4) at ($(D)-(0,\innerboxheight)$);
\coordinate (D5) at ($(D)-(0,\innerboxheight*2)$);

\coordinate (E1) at ($(E)+(0,\innerboxheight*2)$);
\coordinate (E2) at ($(E)+(0,\innerboxheight)$);
\coordinate (E3) at (E);
\coordinate (E4) at ($(E)-(0,\innerboxheight)$);
\coordinate (E5) at ($(E)-(0,\innerboxheight*2)$);

\newcommand{\LevelI}[1]{
\coordinate (B11) at ($(#1)-(\boxwidth/3,0)$);
\coordinate (B12) at (#1);
\coordinate (B13) at ($(#1)+(\boxwidth/3,0)$);
\nonvitalnode{\boxwidth/3}{\innerboxheight}{B11}{$F_1$}
\activenodee{\boxwidth/3}{\innerboxheight}{B12}
\nonvitalnode{\boxwidth/3}{\innerboxheight}{B13}{$F_2$}}

\newcommand{\LevelII}[1]{
\coordinate (E21) at ($(#1)-(\boxwidth/6,0)$);
\coordinate (E22) at ($(#1)+(\boxwidth/3,0)$);
\nonvitalnode{2\boxwidth/3}{\innerboxheight}{E21}{$F_7$}
\nonvitalnode{\boxwidth/3}{\innerboxheight}{E22}{$F_8$}}

\newcommand{\LevelIII}[1]{
\coordinate (E31) at ($(#1)-(\boxwidth/3,0)$);
\coordinate (E32) at (#1);
\coordinate (E33) at ($(#1)+(\boxwidth/3,0)$);
\activenodee{\boxwidth/3}{\innerboxheight}{E31}
\activenodee{\boxwidth/3}{\innerboxheight}{E32}
\nonvitalnode{\boxwidth/3}{\innerboxheight}{E33}{$F_9$}}

\newcommand{\LevelIV}[1]{
\coordinate (D41) at ($(#1)-(\boxwidth/4,0)$);
\coordinate (D42) at ($(#1)+(\boxwidth/4,0)$);
\nonvitalnode{\boxwidth/2}{\innerboxheight}{D41}{$F_5$}
\nonvitalnode{\boxwidth/2}{\innerboxheight}{D42}{$F_6$}}

\newcommand{\LevelV}[1]{
\coordinate (C51) at ($(#1)-(\boxwidth/3,0)$);
\coordinate (C52) at (#1);
\coordinate (C53) at ($(#1)+(\boxwidth/3,0)$);
\nonvitalnode{\boxwidth/3}{\innerboxheight}{C51}{$F_3$}
\nonvitalnode{\boxwidth/3}{\innerboxheight}{C52}{$F_4$}
\activenodee{\boxwidth/3}{\innerboxheight}{C53}}

\node [box] at (A) {};
\activenode{\boxwidth}{\boxheight}{A}{$N=N_1$}

\node [box] at (B) {};
\node [box, minimum height=\innerboxheight] at (B1) {};
\node [box, minimum height=4\innerboxheight] at (B2) {};
\LevelI{B1}
\vitalnode{\boxwidth}{4\innerboxheight}{B2}{$N_2$}

\node [box] at (C) {};
\node [box, minimum height=\innerboxheight] at (C1) {};
\node [box, minimum height=3\innerboxheight] at (C2) {};
\node [box, minimum height=\innerboxheight] at (C5) {};
\LevelI{C1}
\vitalnode{\boxwidth}{3\innerboxheight}{C2}{$N_3$}
\LevelV{C5}

\node [box] at (D) {};
\node [box, minimum height=\innerboxheight] at (D1) {};
\node [box, minimum height=2\innerboxheight] at (D2) {};
\node [box, minimum height=\innerboxheight] at (D4) {};
\node [box, minimum height=\innerboxheight] at (D5) {};
\LevelI{D1}
\vitalnode{\boxwidth}{2\innerboxheight}{D2}{$N_4$}
\LevelIV{D4}
\LevelV{D5}

\node [box] at (E) {};
\node [box, minimum height=\innerboxheight] at (E1) {};
\node [box, minimum height=\innerboxheight] at (E2) {};
\node [box, minimum height=\innerboxheight] at (E3) {};
\node [box, minimum height=\innerboxheight] at (E4) {};
\node [box, minimum height=\innerboxheight] at (E5) {};
\LevelI{E1}
\LevelII{E2}
\LevelIII{E3}
\LevelIV{E4}
\LevelV{E5}

\coordinate (a1) at ($(A)-(0,\treeshift)$);

\coordinate (b11) at ($(B)-(0,\treeshift-2.2\vertexdist)$);
\coordinate (b12) at ($(B)-(0,\treeshift-\vertexdist)$);
\coordinate (b13) at ($(B)-(0,\treeshift)$);
\coordinate (b2) at ($(B)-(0,\treeshift+\vertexdist)$);

\coordinate (c2) at ($(C)-(0,\treeshift)$);
\coordinate (c51) at ($(C)-(0,\treeshift+\vertexdist)$);
\coordinate (c52) at ($(C)-(0,\treeshift+2\vertexdist)$);
\coordinate (c53) at ($(C)-(0,\treeshift+3.2\vertexdist)$);

\coordinate (d2) at ($(D)-(0,\treeshift-\vertexdist/2)$);
\coordinate (d41) at ($(D)-(0,\treeshift+\vertexdist/2)$);
\coordinate (d42) at ($(D)-(0,\treeshift+3\vertexdist/2)$);

\coordinate (e21) at ($(E)-(0,\treeshift-1.5\vertexdist)$);
\coordinate (e22) at ($(E)-(0,\treeshift-0.5\vertexdist)$);
\coordinate (e31) at ($(E)-(0,\treeshift+0.5\vertexdist)$);
\coordinate (e32) at ($(E)-(0,\treeshift+2\vertexdist)$);
\coordinate (e33) at ($(E)-(0,\treeshift+3\vertexdist)$);

\path[edge] (a1) -- (b11);
\path[edge] (a1) -- (b12);
\path[edge] (a1) -- (b13);
\path[edge] (a1) -- (b2);

\path[edge] (b2) -- (c2);
\path[edge] (b2) -- (c51);
\path[edge] (b2) -- (c52);
\path[edge] (b2) -- (c53);

\path[edge] (c2) -- (d2);
\path[edge] (c2) -- (d41);
\path[edge] (c2) -- (d42);

\path[edge] (d2) -- (e21);
\path[edge] (d2) -- (e22);
\path[edge] (d2) -- (e31);
\path[edge] (d2) -- (e32);
\path[edge] (d2) -- (e33);

\draw (a1) node[bvertex] {} node[left=0.13] (v1) {$N_1$};

\draw (b11) node[bvertex] {};
\draw (b12) node[wvertex] {} node[right=0.13] (f1) {$F_1$};
\draw (b13) node[wvertex] {} node[right=0.13] {$F_2$};
\draw (b2) node[bvertex] {} node[below left=0.13] (v2) {$N_2$};

\draw (c2) node[bvertex] {} node[above left=0.13] {$N_3$};
\draw (c51) node[wvertex] {} node[right=0.13] {$F_3$};
\draw (c52) node[wvertex] {} node[right=0.13] (f4) {$F_4$};
\draw (c53) node[bvertex] {};

\draw (d2) node[bvertex] {} node[above left=0.13] {$N_4$};
\draw (d41) node[wvertex] {} node[right=0.13] {$F_5$};
\draw (d42) node[wvertex] {} node[right=0.13] (f6) {$F_6$};

\draw (e21) node[wvertex] {} node[right=0.13] (f7) {$F_7$};
\draw (e22) node[wvertex] {} node[right=0.13] {$F_8$};
\draw (e31) node[wvertex] {} node[right=0.13] (f9) {$F_9$};
\draw (e32) node[bvertex] {};
\draw (e33) node[bvertex] {};

\draw ($(a1)+(-0.5\vertexdist,-1.9\vertexdist)$) node (DN) {} node[above right=-0.5cm of DN] {\begin{Large}$D(N)$\end{Large}};

\draw[dashed] \convexpath{v1,b12,f7,f9,f6,f4,DN}{0.45cm};

\coordinate (l1) at ($(a1)+(-\boxwidth/2+\boxwidth/6,-4.2\vertexdist)$);

\node [active, minimum width=\boxwidth/3-\innermargin, minimum height=\innerboxheight-\innermargin, rounded corners=0.1\boxwidth/3] at (l1) {}
node[right = \boxwidth/6 of l1] (l2) {-- active}
node [right = \boxwidth/3 of l2, vital, minimum width=\boxwidth/3-\innermargin, minimum height=\innerboxheight-\innermargin, rounded corners=0.1\boxwidth/3] (l3) {}
node[right = 0.1cm of l3] (l4) {-- $Q$-node}
node [right = \boxwidth/3 of l4, nonvital, minimum width=\boxwidth/3-\innermargin, minimum height=\innerboxheight-\innermargin, rounded corners=0.1\boxwidth/3] (l5) {}
node[right = 0.1cm of l5] (l6) {-- recursive or problematic};

\end{tikzpicture}
\end{center}
\caption{\label{fig:Q_path}
The path $Q=\{N=N_1,N_2,N_3,N_4\}$ and the sets $A(Q) = \{F_1,F_2\}$ and $B(Q) = \{F_3,F_4,F_5,F_6\}$.
The nodes $\{F_7,F_8,F_9\}$ are the children of the last node $N_4$ in the path $Q$.
Every $F_i$ is a recursive or a problematic child of a node from $Q$.}
\end{figure}

Suppose $N$ is a complete bipartite poset of characteristics $(u,\infty)$.
Since a complete bipartite poset may have two complete bipartite posets of the same width as its children,
the set $Q$ of all nodes in $D(N)$ with characteristic $(u,\infty)$ may form a subtree of $(\mathcal{N}, {\subnode})$ rooted in $N$.
Suppose $\mathcal{L}$ is a set of all leaves of the tree $Q$.

Similarly to the previous case, the algorithm maintains two sets of nodes $A(R)$ and $B(R)$, 
for every root-to-leaf path $R$ in $Q$.
This time the sets in $\{A(R) , B(R): R \text{ is a root to leaf path in } Q \}$
form a partition of all recursive and problematic children of the nodes from $Q \setminus \mathcal{L}$.
The sets $A(R)$ and $B(R)$ are processed as follows.
At the time $N$ appears in the node tree, i.e., when $Q = \{N\}$ and $Q$ has a single root-to-leaf path $R=\{N\}$, the sets
$A(R)$ and $B(R)$ are set empty.
Now suppose $R$ is a root-to-leaf path in $Q$ and a node, say $N'$, from $R$ is being split.
When $N'$ has exactly one child with characteristics $(u,\infty)$, the sets $A(R)$ and $B(R)$ 
are extended as in the previous case.
When $N'$ has no children with characteristics $(u,\infty)$, the construction of $A(R)$ and $B(R)$ is finished.
Eventually, when $N'$ splits into two complete bipartite posets of characteristics $(u,\infty)$, 
an upper-layered $M'$ and a lower-layered $M''$, 
the tree $Q$ is expanded by the nodes $M',M''$ attached to the node $N'$, 
and the root-to-leaf path $R$ in $Q$ is replaced by two root-to-leaf paths, $R'$ from $N$ to $M'$ and $R''$ from $N$ to $M''$.
Eventually, the sets $A(R'),B(R'),A(R''),B(R'')$ are set such that:
$$A(R') = A(R), B(R') = \emptyset,\ A(R'') = \emptyset, B(R'') = B(R).$$
The following proposition is analogous to the previous one and admits a similar proof:
\begin{proposition} 
\label{prop:A(Q)B(Q)-tree-online}
The following sentences hold:
\begin{enumerate} 
\item \label{prop:A(Q)B(Q)-tree-online-1}
For every root-to-leaf path $R$ in $Q$, the edge posets $(A(R)_E,{\leq_E})$ and $(B(R)_E,{\leq_E})$ are down-growing and up-growing orders of width at most $w^3$, respectively.
\item \label{prop:A(Q)B(Q)-tree-online-2}
For every two different root-to-leaf paths $R'$ and $R''$ in $Q$, we have either $A(R')_E \leq_E A(R'')_E$ or $A(R'')_E \leq_E A(R')_E$
and either $B(R')_E \leq_E B(R'')$ or $B(R'') \leq_E B(R')$.
\end{enumerate}
\end{proposition}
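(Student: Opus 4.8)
The plan is to imitate the proof of Proposition~\ref{rem:A(Q)_B(Q)_path_online}, exploiting the fact that along any fixed root-to-leaf path $R$ of $Q$ the sets $A(R)$ and $B(R)$ are accumulated by exactly the same rule that produced $A(Q)$ and $B(Q)$ in the path case; the only genuinely new feature is the redistribution $A(R')=A(R),\ B(R')=\emptyset,\ A(R'')=\emptyset,\ B(R'')=B(R)$ performed when a node of $Q$ splits into an upper-layered child (continued by $R'$) and a lower-layered child (continued by $R''$), and one must check that this does not destroy the properties in question.

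For part~\eqref{prop:A(Q)B(Q)-tree-online-1} I would first argue that $A(R)$ and $B(R)$ are ancestor-free: their members are upper-layered (resp.\ lower-layered) recursive or problematic children of distinct, mutually incomparable nodes lying on $R$, two such children occupying antichain ranges that overlap in at most the single antichain produced by the relevant split, so neither is an ancestor of the other; and a redistribution step only copies one of the two sets verbatim and empties the other, hence cannot create an ancestor relation. Lemma~\ref{lem:ancestor_free_edge_order_width} then bounds the widths of $(A(R)_E,{\leq_E})$ and $(B(R)_E,{\leq_E})$ by $w^3$. That $(A(R)_E,{\leq_E})$ is down-growing (and, symmetrically, $(B(R)_E,{\leq_E})$ is up-growing) follows as in the path case: a batch of nodes put into $A(R)$ at a path node $N'$ lies inside $\Int$ of the upper-layered child of $N'$, hence is $\sqsubset$-above every batch added later along $R$ and $\sqsubset$-below every batch added earlier, so presenting each batch top-down makes every newly introduced edge $\leq_E$-minimal. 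A split does not spoil this: on the branch that keeps $A(R)$, all subsequent additions land inside $\Int$ of the upper-layered child, which is $\sqsubset$-below the carried-over content, while on the other branch $A(R)$ is reset to $\emptyset$; dually for $B(R)$.

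For part~\eqref{prop:A(Q)B(Q)-tree-online-2} I would use that two distinct root-to-leaf paths $R'$ and $R''$ of $Q$ have a unique last common node $N^*$, at which $Q$ is expanded into an upper-layered child $M'$ and a lower-layered child $M''$; say $R'$ runs through $M'$ and $R''$ through $M''$. Since $R''$ carries nothing of $A$ over from before $N^*$ and all later path nodes of $R''$ lie inside $\Int(M'')$, the set $A(R'')$ consists solely of nodes inside $\Int(M'')$; symmetrically $B(R')$ consists solely of nodes inside $\Int(M')$. Moreover $A(R')$ consists of nodes inside $\Int(M')$ together with possibly some nodes carried over from the common prefix, all of which lie $\sqsubset$-above $\Int(N^*)$, and likewise $B(R'')$ consists of nodes inside $\Int(M'')$ together with carried-over nodes lying $\sqsubset$-below $\Int(N^*)$. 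Because $M'$ and $M''$ arise by cutting the component of $N^*$ along the newly inserted antichain, $\Int(M')$ and $\Int(M'')$ overlap in at most that antichain, and that antichain is only ever a bottom class — never a top class — of a node inside $\Int(M')$; hence no edge lying inside $\Int(M')$ is strictly $\leq_E$-below an edge lying inside $\Int(M'')$. Combining this with the placement of the carried-over nodes relative to $\Int(N^*)$ yields $A(R'')_E \leq_E A(R')_E$ and $B(R'')_E \leq_E B(R')_E$.

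I expect the main obstacle to be the bookkeeping in the last step: tracking, after each split, exactly which interiors the newly added nodes of $A(R)$ and $B(R)$ fall into, and pinning down the regional separation of $\Int(M')$ and $\Int(M'')$ — that their overlap is confined to the inserted antichain and that this antichain never serves as the top class of a node inside $\Int(M')$. It is precisely this separation that upgrades pairwise $\leq_E$-incomparability of edges across the two interiors into the block inequalities asserted in~\eqref{prop:A(Q)B(Q)-tree-online-2}. Once it is established, part~\eqref{prop:A(Q)B(Q)-tree-online-1} is, as the excerpt indicates, essentially a verbatim repetition of the proof of Proposition~\ref{rem:A(Q)_B(Q)_path_online}.
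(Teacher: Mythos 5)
Your part~(1) coincides with what the paper intends (the paper itself only remarks that the proof is analogous to Proposition~\ref{rem:A(Q)_B(Q)_path_online}): ancestor-freeness of $A(R)$ and $B(R)$, Lemma~\ref{lem:ancestor_free_edge_order_width} for the width bound $w^3$, and the insertion order for the down-/up-growing property. That part is fine.

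Part~(2), however, contains a genuine gap. The assertion $A(R'')_E \leq_E A(R')_E$ is a \emph{positive}, pointwise statement: for every edge $(a<b)\in A(R'')_E$ and every edge $(c<d)\in A(R')_E$ one must have $b\leq c$; this pointwise domination is exactly what is needed later to concatenate the Felsner chains of the different blocks into $2\binom{w^3+1}{2}$ chains overall. What your argument actually establishes is only the \emph{negative} fact that no edge lying inside $\Int(M')$ is strictly $\leq_E$-below an edge lying inside $\Int(M'')$. Incomparability in the wrong direction does not give comparability in the right one (two mutually incomparable blocks would ruin the chain count), and your closing sentence ``combining this with the placement of the carried-over nodes\dots yields $A(R'')_E\leq_E A(R')_E$'' is precisely the missing step; likewise, ``lying $\sqsubset$-above/below $\Int(N^*)$'' is a statement about antichain levels and never by itself yields element-wise $\leq$. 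A correct argument has to exploit that all $Q$-nodes are complete bipartite of width $u$. For instance, for an edge of $A(R')_E$ contributed at a $Q$-node $P'$ after the branch, its lower endpoint $c$ lies in the inserted antichain $W_{P'}$ and the lower $Q$-child of $P'$ is complete, so $c$ is above \emph{all} of the bottom class of $P'$; since $Z$ and that bottom class are both $u$-element, hence maximum, antichains of the width-$u$ poset $(\Int(M'),\leq)$, a Dilworth chain partition (with cross-level comparabilities necessarily oriented upwards) shows every $z\in Z$ lies below some element of that bottom class, while $b\leq z$ for some $z\in Z$ because $b\in\Int(M'')$; chaining these inequalities gives $b<c$. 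For the edges carried over from the common prefix one needs in addition a propagation argument: completeness of $M'=(Z,Y^*)$ shows every element of $\Int(M'')$ lies below every element of $Y^*$, and an induction along the prefix $Q$-nodes (again via maximum antichains of their width-$u$ interiors) pushes this below the inserted antichains containing the bottom endpoints of the carried-over edges. None of these comparabilities are produced by the facts you cite, so the sketch as written does not prove part~(2).
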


\begin{figure}[htbp!]
\begin{center}
\begin{tikzpicture}

\coordinate (A) at (2,2);
\coordinate (B) at ($(A)+(\boxwidth+\boxshift,0)$);
\coordinate (C) at ($(B)+(\boxwidth+\boxshift,0)$);
\coordinate (D) at ($(C)+(\boxwidth+\boxshift,0)$);
\coordinate (E) at ($(D)+(\boxwidth+\boxshift,0)$);

\coordinate (B1) at ($(B)+(0,\innerboxheight*2)$);
\coordinate (B2) at ($(B)-(0,\innerboxheight/2)$);

\coordinate (C1) at ($(C)+(0,\innerboxheight*2)$);
\coordinate (C2) at (C);
\coordinate (C5) at ($(C)-(0,\innerboxheight*2)$);

\coordinate (D1) at ($(D)+(0,\innerboxheight*2)$);
\coordinate (D2) at ($(D)+(0,\innerboxheight/2)$);
\coordinate (D4) at ($(D)-(0,\innerboxheight)$);
\coordinate (D5) at ($(D)-(0,\innerboxheight*2)$);

\coordinate (E1) at ($(E)+(0,\innerboxheight*2)$);
\coordinate (E2) at ($(E)+(0,\innerboxheight)$);
\coordinate (E3) at (E);
\coordinate (E4) at ($(E)-(0,\innerboxheight)$);
\coordinate (E5) at ($(E)-(0,\innerboxheight*2)$);

\newcommand{\LevelI}[1]{
\coordinate (B11) at ($(#1)-(\boxwidth/3,0)$);
\coordinate (B12) at (#1);
\coordinate (B13) at ($(#1)+(\boxwidth/3,0)$);
\nonvitalnode{\boxwidth/3}{\innerboxheight}{B11}{$F_1$}
\activenodee{\boxwidth/3}{\innerboxheight}{B12}
\nonvitalnode{\boxwidth/3}{\innerboxheight}{B13}{$F_2$}}

\newcommand{\LevelII}[1]{
\vitalnode{\boxwidth}{\innerboxheight}{#1}{$N_6$}}
\newcommand{\LevelIII}[1]{
\coordinate (E31) at ($(#1)-(\boxwidth/3,0)$);
\coordinate (E32) at (#1);
\coordinate (E33) at ($(#1)+(\boxwidth/3,0)$);
\activenodee{\boxwidth/3}{\innerboxheight}{E31}
\activenodee{\boxwidth/3}{\innerboxheight}{E32}
\nonvitalnode{\boxwidth/3}{\innerboxheight}{E33}{$F_5$}}

\newcommand{\LevelIV}[1]{
\vitalnode{\boxwidth}{\innerboxheight}{#1}{$N_5$}}

\newcommand{\LevelV}[1]{
\coordinate (C51) at ($(#1)-(\boxwidth/6,0)$);
\coordinate (C52) at ($(#1)+(\boxwidth/3,0)$);
\nonvitalnode{2\boxwidth/3}{\innerboxheight}{C51}{$F_3$}
\nonvitalnode{\boxwidth/3}{\innerboxheight}{C52}{$F_4$}}

\node [box] at (A) {};
\activenode{\boxwidth}{\boxheight}{A}{$N=N_1$}

\node [box] at (B) {};
\node [box, minimum height=\innerboxheight] at (B1) {};
\node [box, minimum height=4\innerboxheight] at (B2) {};
\LevelI{B1}
\vitalnode{\boxwidth}{4\innerboxheight}{B2}{$N_2$}

\node [box] at (C) {};
\node [box, minimum height=\innerboxheight] at (C1) {};
\node [box, minimum height=3\innerboxheight] at (C2) {};
\node [box, minimum height=\innerboxheight] at (C5) {};
\LevelI{C1}
\vitalnode{\boxwidth}{3\innerboxheight}{C2}{$N_3$}
\LevelV{C5}

\node [box] at (D) {};
\node [box, minimum height=\innerboxheight] at (D1) {};
\node [box, minimum height=2\innerboxheight] at (D2) {};
\node [box, minimum height=\innerboxheight] at (D4) {};
\node [box, minimum height=\innerboxheight] at (D5) {};
\LevelI{D1}
\vitalnode{\boxwidth}{2\innerboxheight}{D2}{$N_4$}
\LevelIV{D4}
\LevelV{D5}

\node [box] at (E) {};
\node [box, minimum height=\innerboxheight] at (E1) {};
\node [box, minimum height=\innerboxheight] at (E2) {};
\node [box, minimum height=\innerboxheight] at (E3) {};
\node [box, minimum height=\innerboxheight] at (E4) {};
\node [box, minimum height=\innerboxheight] at (E5) {};
\LevelI{E1}
\LevelII{E2}
\LevelIII{E3}
\LevelIV{E4}
\LevelV{E5}

\coordinate (a1) at ($(A)-(0,\treeshift)$);

\coordinate (b11) at ($(B)-(0,\treeshift-2.2\vertexdist)$);
\coordinate (b12) at ($(B)-(0,\treeshift-\vertexdist)$);
\coordinate (b13) at ($(B)-(0,\treeshift)$);
\coordinate (b2) at ($(B)-(0,\treeshift+\vertexdist)$);

\coordinate (c2) at ($(C)-(0,\treeshift)$);
\coordinate (c51) at ($(C)-(0,\treeshift+\vertexdist)$);
\coordinate (c52) at ($(C)-(0,\treeshift+2\vertexdist)$);

\coordinate (d2) at ($(D)-(0,\treeshift-\vertexdist/2)$);
\coordinate (d4) at ($(D)-(0,\treeshift+\vertexdist/2)$);

\coordinate (e21) at ($(E)-(0,\treeshift-1.5\vertexdist)$);
\coordinate (e31) at ($(E)-(0,\treeshift-0.5\vertexdist)$);
\coordinate (e32) at ($(E)-(0,\treeshift+1\vertexdist)$);
\coordinate (e33) at ($(E)-(0,\treeshift+2\vertexdist)$);

\path[edge] (a1) -- (b11);
\path[edge] (a1) -- (b12);
\path[edge] (a1) -- (b13);
\path[edge] (a1) -- (b2);

\path[edge] (b2) -- (c2);
\path[edge] (b2) -- (c51);
\path[edge] (b2) -- (c52);

\path[edge] (c2) -- (d2);
\path[edge] (c2) -- (d4);

\path[edge] (d2) -- (e21);
\path[edge] (d2) -- (e31);
\path[edge] (d2) -- (e32);
\path[edge] (d2) -- (e33);

\draw (a1) node[bvertex] {} node[left=0.13] (v1) {$N_1$};

\draw (b11) node[bvertex] {};
\draw (b12) node[wvertex] {} node[right=0.13] (f1) {$F_1$};
\draw (b13) node[wvertex] {} node[right=0.13] {$F_2$};
\draw (b2) node[bvertex] {} node[below left=0.13] (v2) {$N_2$};

\draw (c2) node[bvertex] {} node[above left=0.13] {$N_3$};
\draw (c51) node[wvertex] {} node[right=0.13] {$F_3$};
\draw (c52) node[wvertex] {} node[right=0.13] (f4) {$F_4$};

\draw (d2) node[bvertex] {} node[above left=0.13] {$N_4$};
\draw (d4) node[bvertex] {} node[right=0.13] (v5) {$N_5$};

\draw (e21) node[bvertex] {} node[right=0.13] (v6) {$N_6$};
\draw (e31) node[wvertex] {} node[right=0.13] (f5) {$F_5$};
\draw (e32) node[bvertex] {};
\draw (e33) node[bvertex] {};

\draw ($(a1)+(-0.5\vertexdist,-1.9\vertexdist)$) node (DN) {} node[above right=-0.5cm of DN] {\begin{Large}$D(N)$\end{Large}};

\draw[dashed] \convexpath{v1,b12,v6,f5,f4,DN}{0.45cm};

\coordinate (l1) at ($(a1)+(-\boxwidth/2+\boxwidth/6,-4.2\vertexdist)$);

\node [active, minimum width=\boxwidth/3-\innermargin, minimum height=\innerboxheight-\innermargin, rounded corners=0.1\boxwidth/3] at (l1) {}
node[right = \boxwidth/6 of l1] (l2) {-- active}
node [right = \boxwidth/3 of l2, vital, minimum width=\boxwidth/3-\innermargin, minimum height=\innerboxheight-\innermargin, rounded corners=0.1\boxwidth/3] (l3) {}
node[right = 0.1cm of l3] (l4) {-- $Q$-node}
node [right = \boxwidth/3 of l4, nonvital, minimum width=\boxwidth/3-\innermargin, minimum height=\innerboxheight-\innermargin, rounded corners=0.1\boxwidth/3] (l5) {}
node[right = 0.1cm of l5] (l6) {-- recursive or problematic};

\end{tikzpicture}
\end{center}
\caption{The tree $Q$ consists of two root-to-leaf paths: $R'=\{N_1,N_2,N_3,N_4,N_6\}$ and $R''= \{N_1,N_2,N_3,N_5\}$.
The sets $A(R')=\{F_1,F_2\}$, $B(R')=\{F_5\}$, $A(R'')=\emptyset$, $B(R'') = \{F_3,F_4\}$.}
\end{figure}

For every root-to-leaf path $R$ in $Q$, Felsner's algorithm is used to partition each of the edge posets $(A(R)_E,{\leq_E})$ and $(B(R)_E,{\leq_E})$ into at most $\binom{w^3+1}{2}$ chains.
By Proposition \ref{prop:A(Q)B(Q)-tree-online}.\eqref{prop:A(Q)B(Q)-tree-online-2}, 
the edges from the set $\bigcup \{A(R)_E \cup B(R)_E : R \text{ is a root-to-leaf path in } Q\}$ can be partitioned into at most $2\binom{w^3+1}{2}$ edge chains.
Using the replacing-chains-by-colors method, if we assign a bundle of $2\binom{w^3+1}{2}$ colors for every edge of $N$, with the use of these colors we can color the edges from the set $\bigcup \{A(R)_E \cup B(R)_E : R \text{ is a root-to-leaf path in } Q\}$ such that \eqref{eq:prop_edge_coloring}-property holds.

We are left with the coloring of the nodes in the tree $Q$ and possible non-active children of the leaf nodes of the tree $Q$.
Similarly to the previous case, we reserve two additional colors in a bundle of every edge of $N$ for this purpose.
The nodes of $Q$ are colored as follows. 
If a node $N'$ in $Q$ has one child $M'$ in $Q$, we project additional colors
from $N'$ to $M'$.
If a node $N'$ in $Q$ has two children $M'$ and $M''$ in $Q$, we use the `shuffling-colors' method to color
$M'$ and $M''$.
Hence, every edge from both $M'$ and $M''$ is covered by two additional colors.
Eventually, the recursive and the non-active children of the leaf nodes in $Q$ are colored the same way as in the previous case.

We can summarize the result of this part as:
\begin{proposition}
\label{prop:coloring_Q_and_its_children}
If every edge of an active node $N$ with characteristics $(u,s)$ is assigned a bundle of $\lambda_2(w) = 2\binom{w^3+1}{2}+2$ colors, then, with the use of these colors, the algorithm can color the nodes from~$Q$ and their recursive and problematic children such that
\eqref{eq:prop_edge_coloring} property is kept. 
\end{proposition}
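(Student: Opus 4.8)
The plan is to assemble the construction laid out just before the statement into a proof, splitting on whether the active node $N=(X,Y,{<})$ is complete bipartite, i.e.\ on whether $s=\infty$. In both cases I rely on the standing observation that every node of $D(N)$ lies inside $\Int(N)$ and is colored using only the colors placed on the edges of $N$, so that \eqref{eq:prop_edge_coloring} need only be certified locally inside $(\Int(N),{\leq})$; since each of the replacing-chains-by-colors, projecting-colors, and shuffling-colors methods preserves \eqref{eq:prop_edge_coloring} by construction, the proof reduces to counting the colors per edge of $N$ that are consumed and to checking that the structural hypotheses of those methods hold.

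\emph{The case $s<\infty$.} Here $Q$ is a path $N=N_1\subnode\cdots\subnode N_m=L$. First I would invoke Proposition~\ref{rem:A(Q)_B(Q)_path_online}, so that $(A(Q)_E,{\leq_E})$ is down-growing of width at most $w^3$ and $(B(Q)_E,{\leq_E})$ is up-growing of width at most $w^3$; Felsner's algorithm (Theorem~\ref{thm:Felsner}), run online on each, partitions it into at most $\binom{w^3+1}{2}$ edge chains. Because $A(Q)\cup B(Q)$ is ancestor-free and consists of descendants of $N$, the replacing-chains-by-colors method colors all edges of $A(Q)_E\cup B(Q)_E$ from a bundle of $2\binom{w^3+1}{2}$ colors on each edge of $N$, keeping \eqref{eq:prop_edge_coloring}. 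Two further reserved colors per edge of $N$ finish the job: the first is pushed along $Q$ by applying projecting-colors to each consecutive pair $N_i\subnode N_{i+1}$, and at the last node $L$ the first reserved color is projected onto the upper-layered children of $L$ and the second onto its lower-layered children. Hence each edge of $N$ carries $2\binom{w^3+1}{2}+2=\lambda_2(w)$ colors.

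\emph{The case $s=\infty$.} Now $Q$ is a subtree of $(\mathcal N,{\subnode})$ rooted at $N$ with leaf set $\mathcal L$. For each root-to-leaf path $R$ of $Q$, Proposition~\ref{prop:A(Q)B(Q)-tree-online}.\eqref{prop:A(Q)B(Q)-tree-online-1} gives that $(A(R)_E,{\leq_E})$ and $(B(R)_E,{\leq_E})$ are down-growing, resp.\ up-growing, of width at most $w^3$, and part~\eqref{prop:A(Q)B(Q)-tree-online-2} gives that the $A(R)_E$ are pairwise $\leq_E$-comparable and likewise the $B(R)_E$. Consequently $\bigcup_R A(R)_E$ is itself a down-growing order of width at most $w^3$ and $\bigcup_R B(R)_E$ an up-growing one, so Felsner's algorithm partitions each into at most $\binom{w^3+1}{2}$ chains, and again $2\binom{w^3+1}{2}$ colors per edge of $N$ color all of $\bigcup_R(A(R)_E\cup B(R)_E)$ consistently with \eqref{eq:prop_edge_coloring}. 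The two extra reserved colors take care of the nodes of $Q$ and the non-active children of its leaves: at a node of $Q$ with a single $Q$-child we project, at a node of $Q$ with two $Q$-children we apply shuffling-colors, and at the leaves we project onto the upper- and lower-layered children exactly as above. Again each edge of $N$ uses $2\binom{w^3+1}{2}+2=\lambda_2(w)$ colors, which is the asserted bound.

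\emph{The main obstacle.} The counting is routine; the points that need care are: (i) that combining the Felsner partitions across all root-to-leaf paths of the tree $Q$ does not cost a separate bundle per path, which is exactly what Proposition~\ref{prop:A(Q)B(Q)-tree-online}.\eqref{prop:A(Q)B(Q)-tree-online-2} secures; and (ii) that the shuffling-colors step is legitimate, i.e.\ that each node of $Q$ and both of its $Q$-children, when they exist, are complete bipartite of one common width $u$. The latter holds because $N$ has characteristics $(u,\infty)$, a member of $Q$ has characteristics $(u,\infty)$ precisely when it is complete bipartite of width $u$, and a complete bipartite poset splits only into complete bipartite children. With these structural facts in hand, the remaining verification is pure bookkeeping.
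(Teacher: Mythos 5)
Your treatment of the non-complete case ($s<\infty$) and your overall bookkeeping match the paper's proof, but in the complete case ($s=\infty$) there is a genuine flaw in the step you yourself flag as point (i). You claim that Proposition~\ref{prop:A(Q)B(Q)-tree-online} implies that $\bigcup_R A(R)_E$ is \emph{itself} a down-growing order, so that Felsner's algorithm can be run once on the union. This does not follow, and in general it is false -- in fact it is contradicted by the very comparability you invoke. Once $Q$ branches at a node $N'$ into an upper child $M'$ and a lower child $M''$, both subtrees may keep growing, and the adversary controls the interleaving: after an edge has entered $A(R'')_E$ (inside $\Int(M'')$), a later edge entering $A(R')_E$ (inside $\Int(M')$) can lie $\leq_E$-above it (its lower endpoint sits above the splitting antichain, the earlier edge's upper endpoint sits at or below it), so the newly presented edge is not minimal in the union at the time of its arrival. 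Hence the union is not presented down-growing, and Theorem~\ref{thm:Felsner} cannot be applied to it directly; the hypothesis of Felsner's theorem is about the presentation order, not just the order type.

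The paper's argument avoids this: Felsner's algorithm is run \emph{separately} on each $(A(R)_E,{\leq_E})$ and each $(B(R)_E,{\leq_E})$ (each of which is genuinely down-, resp.\ up-growing by Proposition~\ref{prop:A(Q)B(Q)-tree-online}.\eqref{prop:A(Q)B(Q)-tree-online-1}), giving at most $\binom{w^3+1}{2}$ chains per path, and then part \eqref{prop:A(Q)B(Q)-tree-online-2} is used to \emph{merge} the chain partitions across different root-to-leaf paths: since the $A$-sets of distinct current root-to-leaf paths are disjoint and pairwise $\leq_E$-comparable as sets, one chain from each path's partition can be concatenated under a common label, so the whole family $\bigcup_R A(R)_E$ is still covered by $\binom{w^3+1}{2}$ chain labels (and likewise for the $B$-side), giving the $2\binom{w^3+1}{2}$ count. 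Concretely, at a branch the path inheriting $A(R)$ continues its Felsner instance while the sibling starts a fresh one whose chains are glued to the existing labels via the set comparability. With this repair your proof coincides with the paper's; the remaining points, including the two reserved colors (note that \emph{both} of them should be projected along $Q$, so that at each leaf one can go to the upper-layered and one to the lower-layered children), and the use of shuffling at binary splits of $Q$, are exactly as in the paper.
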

To color the descendants of a recursive or a problematic node $M$, 
we need to have, respectively, $\lambda(\lfloor \sqrt{w}\rfloor)$ and $\lambda_3(w)$ colors on every edge of $M$.
To achieve this goal, we replace every color used for coloring the edges of $N$
by a bundle of $\lambda_3(w)$ colors (note that $\lambda_3(w) \geq \lambda(\lfloor \sqrt{w} \rfloor)$).
Having in mind Proposition~\ref{prop:coloring_Q_and_its_children}, this requires to have $\lambda_2(w)\lambda_3(w)$ 
colors on every edge of $N$, as we have guaranteed.

\subsubsection{Coloring recursive nodes}
\label{subsec:recursive_nodes}
Whenever the algorithm detects a recursive node $M$ in $D(N)$,
all descendants of $M$ (note that all of them are in $D(N)$) are colored by the recursive call of the algorithm coloring regular posets. 
Clearly, the descendants of $M$ can be colored recursively with the colors of the edges of $M$ 
provided each edge of $M$ is assigned a bundle of $\lambda(\lfloor \sqrt{w} \rfloor)$ colors -- see property \eqref{eq:P-property} of the algorithm.
Note that this condition was asserted for recursive children of the nodes from $Q$ as $\lambda(\lfloor \sqrt{w} \rfloor) \leq \lambda_3(w)$.

\subsubsection{Coloring problematic nodes}
\label{subsec:problematic_nodes}
Let $M$ be a problematic child of a node from $Q$ with characteristics $(u',s')$.
Since $M$ is in $D(N)$, $(u',s')$ is lexicographically smaller than $(u,s)$.
Suppose $D(M)$ is the set of all descendants of $M$ in $(\mathcal{N},{\subnode})$.
Since $M$ has no Dilworth's clique of size $\lceil \sqrt{w}\rceil$, all nodes in $D(M)$ contain no Dilworth's clique of size $\lceil \sqrt{w}\rceil$ as well, and hence they are either problematic or recursive.
In particular, $D(M) \subset D(N)$.
The following lemma will complete the description of the algorithm.
\begin{proposition}
If every problematic child $M$ of a node from $Q$ is assigned a bundle of\\
$\lambda_3(w) = w^4\big(2\binom{w^3+1}{2}+2\big)\lambda(\lfloor \sqrt{w} \rfloor)$
colors, then, with the use of these colors, the algorithm can color all nodes in $D(M)$ 
such that \eqref{eq:prop_edge_coloring} property is kept and every edge of a recursive node in $D(M)$ is assigned a bundle of $\lambda(\lfloor \sqrt{w} \rfloor)$ colors.
\end{proposition}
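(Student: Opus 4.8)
The plan is to reuse the machinery developed for the dependent nodes of an active node, specialized to the fact that $D(M)$ contains no active node. First, a structural reduction. Since $M$ has no Dilworth clique of width $\lceil\sqrt{w}\rceil$, and this property passes to ancestors, no node of $D(M)$ has one, so $D(M)$ contains no active node; and by Proposition~\ref{prop:characteristics_lexicographic_order} the widths are non-increasing along root-to-leaf paths, so the problematic nodes of $D(M)$ (those of width $\ge\lfloor\sqrt{w}\rfloor+1$) form a subtree $T$ of $(\mathcal{N},{\subnode})$ rooted at $M$, while the recursive nodes of $D(M)$ are the topmost nodes of width $\le\lfloor\sqrt{w}\rfloor$ and form an ancestor-free family $\mathcal{R}$ (each strictly below $M$). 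Once every edge of a node in $\mathcal{R}$ carries a bundle of $\lambda(\lfloor\sqrt{w}\rfloor)$ colors, its descendants are colored by the recursive invocation of the algorithm via invariant~\eqref{eq:P-property}. Hence it suffices to color all edges of all nodes of $T$ and to put a bundle of $\lambda(\lfloor\sqrt{w}\rfloor)$ colors on every edge of every node of $\mathcal{R}$, using only the $\lambda_3(w)$ colors on the edges of $M$ while keeping \eqref{eq:prop_edge_coloring}.

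Every node of $T$ is problematic, hence non-complete, so exactly as for the path $Q$ the nodes of $T$ of a fixed characteristic $(v,s)$ split into vertex-disjoint node-tree paths, which I will call \emph{spines}; the tops of the $(v,s)$-spines form an ancestor-free set (if one were an ancestor of another, monotonicity of the characteristic would force the whole path between them to have characteristic $(v,s)$, so they would lie on one spine), so by Lemma~\ref{lem:ancestor_free_edge_order_width} their edge poset has width at most $w^3$. For a single spine $S$ I mimic the treatment of $Q$ and its children (Proposition~\ref{prop:coloring_Q_and_its_children}): maintain an upper family $A(S)$ and a lower family $B(S)$ collecting, above and below the spine, the recursive and the (lower-characteristic) problematic children of the spine nodes; as in Proposition~\ref{rem:A(Q)_B(Q)_path_online} the edge posets $(A(S)_E,{\leq_E})$ and $(B(S)_E,{\leq_E})$ are down-growing resp.\ up-growing of width at most $w^3$, so Felsner's algorithm (Theorem~\ref{thm:Felsner}) partitions each into at most $\binom{w^3+1}{2}$ edge chains; with a block of $\lambda_2(w)=2\binom{w^3+1}{2}+2$ \emph{channels} available on the top of $S$ — each channel being a bundle of $\lambda(\lfloor\sqrt{w}\rfloor)$ colors — the replacing-chains-by-colors method colors $A(S)_E\cup B(S)_E$, and projecting along $S$ with two reserved channels colors the spine nodes and the branching children of the last spine node. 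A recursive branching child then carries a full channel of $\lambda(\lfloor\sqrt{w}\rfloor)$ colors and is handed to the recursive call; a problematic branching child is precisely the top of a strictly lex-smaller-characteristic spine and is processed once that characteristic is reached. Proposition~\ref{prop:matchings} applied to the ancestor-free $A(S),B(S)$ legitimizes the replacing-chains-by-colors step, and inside each spine \eqref{eq:prop_edge_coloring} is preserved exactly as in the analysis of $Q$.

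The color budget is the heart of the matter. Because the widths in $T$ lie in $[\lfloor\sqrt{w}\rfloor+1,\width(M)]$ and the surpluses in $[1,\width(M)-1]$ — and the ``problematic'' condition further confines the surplus of a width-$v$ node to a window of length less than $\lceil\sqrt{w}\rceil$, since such a node has a Dilworth clique of width $\width(N)-\surplus(N)<\lceil\sqrt{w}\rceil$ — only a polynomial number of characteristics occur in $T$; and for a fixed characteristic, its spines being topped by an ancestor-free set, they fall into a polynomial number of $\leq_E$-chains of spines, along each of which consecutive spines are $\leq_E$-comparable and may therefore share a common block of colors. Dedicating to each characteristic a disjoint block of the $\lambda_3(w)=w^4\lambda_2(w)\lambda(\lfloor\sqrt{w}\rfloor)$-palette, and inside each block a disjoint sub-block of $\lambda_2(w)\lambda(\lfloor\sqrt{w}\rfloor)$ colors to each $\leq_E$-chain of its spines, a routine count — using Lemma~\ref{lem:ancestor_free_edge_order_width} for the spine tops and $\width(M)\le w$ to bound the number of edges per node — gives a total of at most $w^4\lambda_2(w)\lambda(\lfloor\sqrt{w}\rfloor)=\lambda_3(w)$ colors.

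The point I expect to be the main obstacle is exactly this bookkeeping: unlike the recursion down the node tree in the active case, the budget here must \emph{not} compound across the (possibly arbitrarily deep) subtree $T$, and this is forced precisely by the fact that distinct characteristics and $\leq_E$-incomparable spines within a characteristic draw from disjoint color blocks, so that the whole of $T$ is colored in one sweep over the polynomially many characteristic classes rather than level by level. Two correctness points must then be verified with care: that reusing a characteristic's block across the many spines sharing a common $\leq_E$-chain keeps same-color endpoints on one chain — which follows because all of $T$ lies inside $\Int(M)$, each $A(S),B(S)$ is ancestor-free (so Proposition~\ref{prop:matchings} applies) and consecutive spines in such a chain are $\leq_E$-comparable; and that feeding a recursive child $R\in\mathcal{R}$ its channel and invoking the algorithm recursively on $\Int(R)$ cannot clash with colors used elsewhere in $\Int(M)$ — which holds because those colors sit in the sub-block of $R$'s parent spine and are routed through perfect matchings inside $\Int(M)$, so that their occurrences outside $\Int(R)$ lie on a single chain together with the chain the recursion produces inside $\Int(R)$. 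Assembling these ingredients proves the proposition.
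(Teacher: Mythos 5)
Your overall architecture (recurse on recursive nodes, treat same-characteristic problematic nodes as $Q$-like spines handled with Felsner's algorithm, replacing-chains-by-colors, and projection) matches the paper, but the budget argument --- which you yourself identify as the heart of the matter --- has a genuine gap. The paper's proof hinges on Lemma~\ref{lem:problematic_nodes}: there are at most $w^2$ \emph{first} problematic nodes (your spine tops) of any fixed characteristic in $D(M)$, proved by showing that each such node witnesses a distinct comparability $x_i < y_k$ inside $M$, and that too many witnessed comparabilities would force $M$ to be complete, contradicting that $M$ is problematic (no Dilworth clique of width $\lceil\sqrt{w}\rceil$). This is where the hypothesis on $M$ does real work, and it is exactly the bound your ``routine count'' needs: first problematic nodes of one characteristic are only ancestor-free in the node tree, so a priori they can be stacked vertically, one above another, over arbitrarily many rounds --- their number is not bounded by any function of $w$ by general arguments such as Lemma~\ref{lem:ancestor_free_edge_order_width}, which bounds the width of their edge poset, not their number. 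Your proposal never proves such a bound; the only place you invoke the problematic hypothesis is the claim that a node has a Dilworth clique of width $\width(N)-\surplus(N)$, which is false (an even cycle $C_{2u}$, viewed as a regular bipartite poset, has width $u$, surplus $1$, and no Dilworth clique of width $2$ for $u \geq 3$), though that side claim is inessential since $w^2$ characteristics suffice anyway.

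The substitute you offer --- letting spines of the same characteristic that lie on a common $\leq_E$-chain share one color block --- does not close the gap. First, $\leq_E$-comparability of (some edges of) consecutive spines is far too weak to guarantee \eqref{eq:prop_edge_coloring}: sharing a color $\gamma$ between two spines requires that the specific $\gamma$-colored edges produced in the two spines by independent runs of Felsner's algorithm and independent projections have pairwise comparable endpoints, and nothing coordinates these choices; two nodes with some comparable elements typically also have many incomparable ones, so same-colored endpoints need not form a chain. Proposition~\ref{prop:matchings} does not help, since it yields only $w$ chains for one fixed ancestor-free family and says nothing about colors routed through different matchings in different spines. Second, even the decomposition into ``polynomially many $\leq_E$-chains of spines'' is not established: width at most $w^3$ of the spine-top edge poset gives an off-line chain decomposition of \emph{edges}, not an on-line decomposition of \emph{spines}, and the edges of a single node are largely $\leq_E$-incomparable. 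The paper avoids all of this by giving every first problematic node its own disjoint sub-bundle of $\lambda_2(w)\lambda(\lfloor\sqrt{w}\rfloor)$ colors projected from $M$, affordable precisely because of the $w^2$-per-characteristic bound of Lemma~\ref{lem:problematic_nodes}; to repair your proof you would need that counting lemma (or an equivalent), at which point the sharing mechanism becomes unnecessary.
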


We split all problematic nodes $K$ in $D(M)$ into two classes:
\begin{itemize}
 \item $K$ is a \emph{first problematic} node in $D(M)$ if it is the first problematic node 
 with characteristics $(\width(K),\surplus(K))$ on the path from $M$ to $K$,
 \item $K$ is a \emph{subsequent problematic} node in $D(M)$ if $K$ is a child of some problematic node from $D(M)$ with the same characteristics as $K$.
\end{itemize}
In particular, note that $M$ is a first problematic node in $D(M)$.

Whenever a first problematic node $K$ of characteristics $(u'',s'')$ appears in the node tree, 
the algorithm reserves a bundle of $\big(2\binom{w^3+1}{2}+2\big)\lambda(\lfloor \sqrt{w} \rfloor)$ colors 
to every edge of $M$ and projects these colors on the edges of $K$.
Then, these colors are used to color:
\begin{itemize}
 \item all descendants of $K$ with characteristics $(u'',s'')$ in such a way that each edge from such a node receives a bundle of $2\lambda(\lfloor \sqrt{w} \rfloor)$ colors,
 \item all recursive children of the nodes from a path consisting of all descendants of $K$ with characteristics $(u'',s'')$ in such a way that each edge from such a node receives a bundle of $\lambda(\lfloor \sqrt{w} \rfloor)$ colors.
\end{itemize}
Clearly, the algorithm uses the ideas described in the previous section to perform the above-mentioned tasks. 

There is one issue to be clarified.
Each edge of a node $M$ is colored with $w^4\big(2\binom{w^3+1}{2}+2\big)\lambda(\lfloor \sqrt{w} \rfloor)$ colors,
so the algorithm may project colors onto at most $w^4$ first problematic nodes in the set $D(M)$.
The next lemma shows that there are at most $w^2$ first problematic nodes in $D(M)$ with characteristics $(u'',s'')$.
Since there are at most $w^2$ possible characteristics, there is no more than $w^4$ first problematic nodes in $D(M)$,
so we have enough colors to color all of them.

\begin{lemma}
\label{lem:problematic_nodes}
The set $\mathcal{F}$ of all first problematic nodes in $D(M)$ with characteristics $(u'',s'')$ contains at most $(u')^2 \leq w^2$ nodes.
\end{lemma}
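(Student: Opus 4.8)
\emph{Stage 1 (what ``first problematic'' buys us).} I would begin by recording two easy structural facts. First, $\mathcal F$ is ancestor-free: if $K_1,K_2\in\mathcal F$ with $K_1$ an ancestor of $K_2$, then $K_1$ is a problematic node of characteristics $(u'',s'')$ lying strictly before $K_2$ on the path from $M$ to $K_2$, contradicting that $K_2$ is \emph{first}. Second — the key point — since $M$ is problematic it has no Dilworth clique of width $\lceil\sqrt w\rceil$, and as Dilworth cliques are inherited by ancestors, no descendant of $M$ has one either; hence every node of $D(M)$ of width $\ge\lfloor\sqrt w\rfloor+1$ is \emph{problematic}. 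Now fix $K\in\mathcal F$. By Proposition \ref{prop:characteristics_lexicographic_order} every ancestor of $K$ has width $\ge u''\ge\lfloor\sqrt w\rfloor+1$, hence is problematic; so no proper ancestor of $K$ on the path from $M$ can have characteristics exactly $(u'',s'')$ (it would be an earlier problematic node of those characteristics). (If already $\width(M),\surplus(M)=(u'',s'')$ this forces $\mathcal F=\emptyset$ and the lemma is trivial, so assume the characteristics of $M$ are lexicographically strictly above $(u'',s'')$.) In particular the parent $\mathrm P(K)$ of $K$ in the node tree has characteristics strictly lexicographically above $(u'',s'')$: either $\width(\mathrm P(K))>u''$, or $\width(\mathrm P(K))=u''$ and $\surplus(\mathrm P(K))>s''$. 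Finally note $X_K\subseteq\Int(K)$, and since the nodes of $\mathcal F$ are pairwise tree-incomparable the sets $\{\Int(K):K\in\mathcal F\}$, and hence $\{X_K:K\in\mathcal F\}$, are pairwise disjoint subsets of $\Int(M)$.

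\emph{Stage 2 (the count).} The target $(u')^2=\width(M)^2$ suggests an injection of $\mathcal F$ into a $\width(M)\times\width(M)$ grid built from $\Int(M)$. Using that $\mathcal F$ is ancestor-free, I would pick a perfect matching of each $K\in\mathcal F$ and invoke Proposition \ref{prop:matchings} to obtain a chain partition $C_1,\dots,C_{u'}$ of $(\Int(M),{\le})$ inside which all these matching edges lie; since $X_M$ and $Y_M$ are maximum antichains of $\Int(M)$, each $C_i$ runs from a vertex $x_i\in X_M$ to a vertex $y_i\in Y_M$, so the partition is genuinely indexed by $\{1,\dots,u'\}$. To each $K\in\mathcal F$ I would then attach a pair of indices — morally the ``first chain entered'' and ``first chain exited'' by $\Int(K)$ — and show the resulting map $\mathcal F\to\{1,\dots,u'\}^2$ is injective. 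This is exactly where Stage 1 is used: if two distinct $K,K'\in\mathcal F$ received the same pair, then (their interiors being disjoint) one would lie $\le$-below the other along those chains, so the node-tree path joining $K$ and $K'$ would run through a common ancestor of characteristics lexicographically above $(u'',s'')$; pushing this ancestor down towards $M$ and using that \emph{each} of $K,K'$ is the first node of characteristics $(u'',s'')$ below $M$ — together with the way the surplus of a regular node composes when an antichain is inserted between its levels — should produce the contradiction. Injectivity then gives $|\mathcal F|\le(u')^2\le w^2$, and since there are at most $w^2$ characteristics we get the promised $w^4$ bound overall.

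\emph{The main obstacle.} Stage 1 and the setup of Stage 2 are routine; the real work is the injectivity step, i.e.\ choosing an invariant of a node of $\mathcal F$ robust enough that two nodes cannot share it, and squeezing the final estimate to exactly $(u')^2$ rather than some larger power of $w$. The delicate tension is that the members of $\mathcal F$ are spread through the node tree, separated by ancestors of strictly larger characteristics, yet inside $\Int(M)$ their interiors — though disjoint — can be \emph{stacked} along the same chain $C_i$; the crux is to show that a stack of first-problematic nodes of a fixed characteristic interacts with a single chain only boundedly, which I expect to follow from the additivity of the surplus under antichain insertion (each node in such a stack must ``spend'' a unit of a quantity bounded by $\width(M)$). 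Getting this bookkeeping exactly right is the part I anticipate requiring the most care.
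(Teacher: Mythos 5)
Your Stage 1 is essentially the paper's starting point (ancestor-freeness of $\mathcal F$ is indeed what gets used), but two things there already need care: the blanket claim that the interiors $\Int(K)$, $K\in\mathcal F$, are pairwise disjoint is false in general, since two tree-incomparable nodes may share vertices of a common boundary antichain (the upper class of one can meet the lower class of the other); and, more importantly, nothing in your setup ever uses the hypothesis that $M$ itself is problematic, which is essential to the count. The genuine gap is exactly the step you defer as ``the real work'': the injectivity of your map into $\{1,\dots,u'\}^2$. The invariant ``first chain entered / first chain exited'' is left undefined, and the sketched route to injectivity points the wrong way: two members of $\mathcal F$ \emph{can} perfectly well be stacked one below the other along the very same chains $C_i$ -- being ``first problematic'' only forbids an ancestor of the same characteristics on the tree path from $M$, and the least common ancestor of lexicographically larger characteristics that you extract is no contradiction (any two incomparable nodes have one). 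There is also no ``additivity of surplus under antichain insertion'' available that would bound how many members of $\mathcal F$ meet a fixed chain; the paper never proves, and does not need, any bound on stacking.

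What the paper does instead is assign to each $K=(Z,T,{<})\in\mathcal F$ a \emph{witnessed comparability} of $M$, and this is where both missing hypotheses enter. Fix a Dilworth partition $C_1,\dots,C_{u'}$ of $(\Int(M),{\leq})$ with $x_i,y_i\in C_i$, and let $I_K$ be the (size $u''$) set of chains met by $Z\cup T$, with $\{z_i\}=Z\cap C_i$, $\{t_i\}=T\cap C_i$. If every $x_i$ ($i\in I_K$) were below every $z_j$ ($j\in I_K$), then $\{x_i,y_i:i\in I_K\}$ would be a Dilworth clique of width $u''\geq\lceil\sqrt{w}\rceil$ in $M$, contradicting that $M$ is problematic; so some $x_i$ is incomparable to some $z_j$. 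Then, since $\surplus(K)\geq 1$ and $x_i\upseteq\cap Z$ is a nonempty proper subset of $Z$, its upset in $T$ gains an element, yielding $k\in I_K$ with $x_i$ incomparable to $z_k$ but $x_i<t_k\leq y_k$; one says $K$ witnesses $x_i<y_k$. Injectivity is then a one-line consequence of ancestor-freeness restricted to the single chain $C_k$: if $K$ lies below $K'$ on $C_k$ and both witnessed the pair $(i,k)$, then $x_i<t_k\leq z'_k$ contradicts the incomparability of $x_i$ with $z'_k$. This gives $|\mathcal F|\leq (u')^2$ directly. Since your proposal uses neither the absence of a $\lceil\sqrt{w}\rceil$-Dilworth clique in $M$ nor the surplus of $K$, and leaves the injectivity mechanism unspecified, it does not yet contain the idea that makes the $u'\times u'$ grid argument work.
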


\begin{proof}
First, note that $\mathcal{F}$ is ancestor-free.
Consider a chain partition $C_1,\ldots,C_{u'}$ of $(\Int(M),{\leq})$ into $u'$ chains -- such a partition exists as $(\Int(M),{\leq})$ has width $u'$.
Suppose $M=(X,Y,{<})$, where $X=\{x_1,\ldots,x_{u'}\}$ and $Y=\{y_1,\ldots,y_{u'}\}$ are such that $x_i,y_i \in C_i$.
Consider a first problematic node $K=(Z,T,{<})$ of characteristics $(u'',s'')$.
Since $K$ is regular, $Z \cup T$ has a non-empty intersection with exactly $u''$ chains in the set $\{C_1,\ldots,C_w\}$.
Suppose $I_K$ is the set of all indices $i \in [u']$ such that $C_i$ intersects $Z \cup T$.
In particular, note that for every $i \in I_K$ the sets $Z \cap C_i$ and $T \cap C_i$ are singletons.
Assume that $Z = \{z_i: i \in I_K\}$ and $T = \{t_i: i \in I_K\}$, where $z_i,t_i$ are such that $\{z_i\} = Z \cap C_i$ and $\{t_i\} = T \cap C_i$.
If $x_i < z_j$ for all $i,j \in I_K$, then 
$x_i < y_j$ for all $i,j \in I_K$ as $z_j < y_j$ for all $j \in [u']$.
Hence, $\{x_i,y_i: i \in I_K\}$ is a Dilworth clique $D$ of size at least $\lceil \sqrt{w} \rceil$ in $M$ as $u'' \geq \lceil \sqrt{w} \rceil$ as the edges $(x_j < y_j)$ for $j \in [u']\setminus I_K$ induce a perfect matching in $(X \setminus D, Y \setminus D,{<})$.
So, there are $i,j \in I_K$ such that $x_i$ is incomparable with $z_j$ -- see Figure \ref{fig_problematic_node}.
\begin{figure}[htbp!]
\begin{center}
\begin{tikzpicture}

\setlength{\hdist}{13mm}
\setlength{\vdist}{13mm}

\setlength{\antInner}{2.5mm}
\setlength{\antinner}{3.75mm}
\setlength{\nodeinner}{5mm}

\def\n{7}
\foreach \i in {1,...,\n} {
\coordinate (X\i) at (\i\hdist,0);
\coordinate (Z\i) at (\i\hdist,1.5\vdist);
\coordinate (T\i) at (\i\hdist,2.5\vdist);
\coordinate (Y\i) at (\i\hdist,4\vdist);
}

\draw[draw,dashed] \convexpath{X1,X\n}{\antInner};
\draw[draw,dashed] \convexpath{Y1,Y\n}{\antInner};
\draw[draw] \convexpath{X1,Y1,Y\n,X\n}{\antinner};
\draw[nodebordered] \convexpath{Z2,T2,T5,Z5}{\nodeinner};
\draw[draw,dashed] \convexpath{Z2,Z5}{\antinner};
\draw[draw,dashed] \convexpath{T2,T5}{\antinner};

\draw[draw] \convexpath{Z4,Z5}{\antInner};
\draw[draw] \convexpath{T3,T5}{\antInner};

\foreach \i in {1,...,\n} {
\node [bvertex] (x\i) at (X\i) {};
\node [bvertex] (y\i) at (Y\i) {};
}

\draw (Z2) node[bvertex] (z2) {} node[right=1mm] {$z_j$};
\draw (T2) node[bvertex] (t2) {};
\draw (Z3) node[bvertex] (z3) {} node[right=1mm] {$z_k$};
\draw (T3) node[bvertex] (t3) {} node[right=1mm] {$t_k$};
\draw (Z4) node[bvertex] (z4) {};
\draw (T4) node[bvertex] (t4) {};
\draw (Z5) node[bvertex] (z5) {};
\draw (T5) node[bvertex] (t5) {};

\foreach \i in {1,...,\n} {
\path[dedge] (x\i) -- (y\i);
}

\path[iedge] (x5) -- (z3);
\path[iedge] (x5) -- (z2);
\path[dedge] (x5) -- (z4);
\path[edge]  (z4) -- (t3);

\node [right=1mm] at (x1)  {$x_{1}$};
\node [right=1mm] at (x5)  {$x_{i}$};
\node [left=1mm]  at (x\n) {$x_{u'}$};
\node [right=1mm] at (y1)  {$y_{1}$};
\node [right=1mm] at (y3)  {$y_{k}$};
\node [left=1mm] at (y\n) {$y_{u'}$};

\node [right=1mm] at (barycentric cs:T1=0.45,Y1=0.55) {$C_1$};
\node [right=1mm] at (barycentric cs:T3=0.45,Y3=0.55) {$C_k$};
\node [left=1mm]  at (barycentric cs:T\n=0.45,Y\n=0.55) {$C_{u'}$};

\node [right] at ($(x\n)+(\antinner,0)$) {$X$};
\node [right] at ($(y\n)+(\antinner,0)$) {$Y$};
\node [fill=white] at ($(x6)+(2mm,0.8\vdist)$) {$\Int(M)$};

\path[edge] (z2) -- (t2);
\path[edge] (z3) -- (t3);
\path[edge] (z4) -- (t4);
\path[edge] (z5) -- (t5);

\draw (barycentric cs:z2=0.4,z3=0.6) node {$Z$};
\draw (barycentric cs:t2=0.55,t3=0.45) node {$T$};
\draw (barycentric cs:z2=0.25,z3=0.25,t2=0.25,t3=0.25) node {$K$};

\draw (barycentric cs:z4=0.5,z5=0.5) node {$B$};
\draw (barycentric cs:t4=0.5,t5=0.5) node {$C$};

\end{tikzpicture}
\end{center}
\caption{Node $K$ witnesses the comparability $x_i < y_k$.}
\label{fig_problematic_node}
\end{figure}
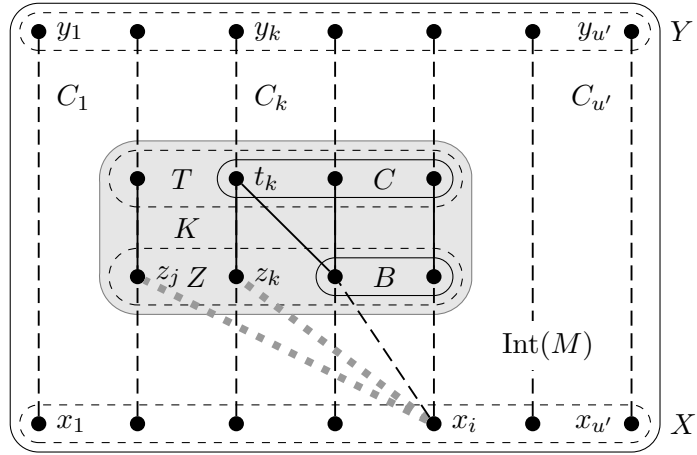
Consider the set $B = x_i \upseteq \cap Z$.
Since $z_i$ is in $B$ and $z_j$ is not in $B$, $B$ is a non-empty set strictly contained in $Z$. 
Since the surplus of $K$ is at least $1$, the set $C=B\upseteq \cap T$ has at least $|B|+1$ elements.
So, there is $k \in I_K$ such that $x_i$ and $z_k$ are incomparable but $x_i < t_k$.
Consequently, we have also $x_i < y_k$ as $t_k < y_k$. 
If this is the case, we say that $K$ \emph{witnesses the comparability $x_i < y_k$}.
Now, note that every two nodes from $\mathcal{F}$ witness different comparabilities
$(x_i < y_k)$ as $\mathcal{F}$ is ancestor-free.
Assuming we have $(u')^2$ nodes in $\mathcal{F}$, we conclude $M$ is a complete bipartite poset, 
which is impossible for a problematic node.
\end{proof}

\section{Summary}
\label{sec:summary}
The algorithm presented in this work extends the ideas presented in \cite{BK15}.
Both of these algorithms use the reduction from the on-line chain partitioning problem to the on-line chain partitioning of regular posets problem (Lemma \ref{lem:reg})
and both are based on nodes and their combinatorial properties in the node tree. 
In particular, they color active nodes in the same way.
However, the basic difference lies in the definition of an `active node':
here we assume an active node needs to have a Dilworth clique of size $\lceil \sqrt{w} \rceil$,
while in \cite{BK15} an active node needed to have a Dilworth clique of size $2$.
The approach adopted here leads to problems, which were not encountered in the previous work.
This includes, in particular, coloring problematic nodes in $D(M)$, which is solved by using recursion 
for coloring recursive nodes and by providing a polynomial upper bound on the number of first problematic nodes in the set $D(M)$ (Lemma \ref{lem:problematic_nodes}).

The algorithm presented in this work needs to test whether a bipartite poset of width $w$ has a Dilworth clique of width $\lceil \sqrt{w} \rceil$,
which is as hard as testing whether a bipartite graph with bipartition classes of size $w$ contains
a clique of size $\lceil \sqrt{w} \rceil$.
Johnson \cite{Joh87a} showed that such a problem is NP-complete, 
so we can not hope that our algorithm will work in polynomial time in the size of the presented poset.
Nevertheless, one can implement our algorithm such that it works in time $w^{O(\sqrt{w})}n$, where
$w$ is the width and $n$ is the size of the presented poset.

\section{Acknowledgment}
We thank anonymous reviewers for his/her thorough review and highly appreciate the comments and suggestions, which significantly contributed to improving the quality of the paper.

\bibliographystyle{plain}

\end{document}